\documentclass[a4paper,10pt]{article}
\usepackage{subfig}
\usepackage{amsmath}
\usepackage{amsfonts}
\usepackage{amssymb}
\usepackage{amsthm}
\usepackage{pgf}
\usepackage{tikz}

\usepackage{graphicx}
\usetikzlibrary{arrows,automata}

\newcommand{\opX}{\mathsf{X}}
\newcommand{\opXhat}{\widehat{\opX}}
\newcommand{\bldm}{\mathbf{m}}
\newcommand{\DD}{\mathfrak{D}}
\newcommand{\blde}{\mathbf{e}}
\newcommand{\bldn}{\mathbf{n}}
\newcommand{\capac}{\mathrm{cap}}
\newcommand{\F}{\mathcal{F}}
\newcommand{\N}{\mathcal{N}}
\renewcommand{\H}{\mathcal{H}}
\newcommand{\V}{\mathcal{V}}
\newcommand{\tr}[1]{#1^{\mathtt{t}}}
\newcommand{\bldone}{\mathbf{1}}
\newcommand{\bldzero}{\mathbf{0}}

\newcommand{\bldx}{\mbox{\bf{x}}}
\newcommand{\bldy}{\mathbf{y}}
\newcommand{\bldz}{\mathbf{z}}
\newcommand{\bldr}{\mathbf{r}}
\newcommand{\bldR}{\mathbf{R}}

\newcommand{\bldl}{\mathbf{l}}
\newcommand{\bldf}{\mathbf{f}}
\newcommand{\bldg}{\mathbf{g}}
\newcommand{\bldb}{\mathbf{b}}
\newcommand{\bldv}{\mathbf{v}}
\newcommand{\bldj}{\mathbf{j}}
\newcommand{\bldV}{\mathbf{V}}
\newcommand{\bldu}{\mathbf{u}}
\newcommand{\bldt}{\mathbf{t}}
\newcommand{\bldw}{\mathbf{w}}
\newcommand{\barGG}{{\bar{\G}}}
\newcommand{\RR}{\mathbb{R}}
\newcommand{\ZZ}{\mathbb{Z}}
\newcommand{\G}{\mathcal{G}}
\newcommand{\D}{\mathcal{D}}
\newcommand{\A}{\mathcal{A}}
\newcommand{\B}{\mathcal{B}}
\newcommand{\I}{\mathcal{I}}
\newcommand{\J}{\mathcal{J}}
\newcommand{\ELL}{\mathcal{L}}
\newcommand{\E}{\mathcal{E}}

\newcommand{\W}{\mathcal{W}}
\renewcommand{\S}{\mathcal{S}}

\newcommand{\RLL}{\mathrm{RLL}}
\newcommand{\NAK}{\mathrm{NAK}}
\newcommand{\RWIM}{\mathrm{RWIM}}
\newcommand{\CHARGE}{\mathrm{CHG}}
\newcommand{\EVEN}{\mathrm{EVEN}}
\newcommand{\ODD}{\mathrm{ODD}}
\newcommand{\R}{\mathcal{R}}
\newcommand{\osqr}{^{{\otimes}2}}
\renewcommand{\P}{\mathcal{P}}

\newtheorem{theorem}{Theorem}
\newtheorem{lemma}{Lemma}
\newtheorem{corollary}{Corollary}
\newtheorem{proposition}{Proposition}

\theoremstyle{remark}
\newtheorem{remark}{Remark}

\newtheoremstyle{nonum}{}{}{\upshape}{}{\itshape}{.}{ }{#1 #3}
\theoremstyle{nonum}
\newtheorem{remark*}{Remark}

\makeatletter
\@addtoreset{remark}{theorem}
\@addtoreset{remark}{proposition}
\@addtoreset{remark}{lemma}
\makeatother

\sloppy

\title{Improved Lower Bounds on Capacities of Symmetric $2$-Dimensional Constraints using Rayleigh Quotients}
\author{Erez Louidor and Brian Marcus}

\begin{document}
\maketitle

\begin{abstract}
A method for computing lower bounds on capacities of $2$-dimensional constraints
having a symmetric presentation in either the horizontal or the
vertical direction is presented. The method is a generalization of
the method of Calkin and Wilf (\emph{SIAM J.\ Discrete Math.},
1998). Previous best lower bounds on capacities of certain
constraints are improved using the method. It is also shown how this
method, as well as their method for computing upper bounds on the
capacity, can be applied to constraints which are not of
finite-type. Additionally, capacities of $2$ families of
multi-dimensional constraints are given exactly.
\end{abstract}
\begin{center}
{\bf Index-terms:} Channel capacity, constrained-coding, min-max principle.
\end{center}

\section{Introduction.}
\label{sec:introduction}
Fix an alphabet $\Sigma$ and let $\G$ be a directed graph whose edges are labeled with symbols in $\Sigma$.
Each path in $\G$ corresponds to a finite word obtained by reading the labels of the edges of the path in sequence.
The path is said to generate the corresponding word, and the set of words generated by all
finite paths in the graph
is called a {\em $1$-dimensional constrained system} or a {\em $1$-dimensional constraint}.
Such a graph is called a
{\em presentation} of the constraint. We say that a word satisfies the constraint if it belongs
to the constrained system. 
One-dimensional constraints have found widespread applications in digital storage
systems, where they are used to model the set of sequences that can be written reliably to a medium.
 A central example is the binary runlength-limited constraint, denoted $\RLL(d,k)$ for nonnegative
 integers $0{\leq}d{\leq}k$, consisting of all binary sequences in which the number of `$0$'s between
 conecutive `$1$'s is at least $d$, and each runlength of `$0$'s has length at most $k$. Another
 $1$-dimensional constraint, often used in practice, is the bounded-charge constraint,
  denoted $\CHARGE(b)$, for some positive integer $b$; it consists of all
 words $w_1 w_2{\ldots}w_\ell$, where $\ell{=}0,1,2,{\ldots}$
 and each $w_i$ is either $+1$ or $-1$, such that for all $1{\leq}i{\leq}j{\leq}\ell$,
 $|\sum_{k{=}i}^{j}w_k|{\leq}b$. Other examples of $1$-dimensional constraints are the $\EVEN$
 and $\ODD$ constraints, which contain all finite binary sequences in which the number of `$0$'s between consecutive `$1$'s
 is even and odd, respectively. Presentations for these constraints are given in Figure~\ref{fig:1d-presentations}.

\begin{figure}[ht]
\label{fig:1d-presentations}
\begin{center}
\subfloat[][]{
\label{fig:even-presentation}
\begin{tikzpicture}[->,>=stealth',shorten >=1pt,auto,node distance=2cm,
                    semithick]
  \tikzstyle{every state}=[fill=black!50!white!50,draw=black,text=black,minimum size=7mm]

  \node[state] (A) {};
  \node[state] (B) [right of=A]{};
  \path (A) edge [bend right,below] node {$0$} (B)
    (B) edge [bend right,below] node {$0$} (A)
        (A) edge [loop above] node {$1$} (A);
\end{tikzpicture}
}\qquad
\subfloat[][]{
\label{fig:odd-presentation}
\begin{tikzpicture}[->,>=stealth',shorten >=1pt,auto,node distance=2cm,
                    semithick]
  \tikzstyle{every state}=[fill=black!50!white!50,draw=black,text=black,minimum size=7mm,]
  \node[state] (A) {};
  \node[state] (B) [right of=A]{};
  \path (A) edge [bend right,below] node {$0$} (B)
    (B) edge [bend right,below] node {$0$} (A)
        (B) edge [bend right=45,above] node {$1$} (A);
\end{tikzpicture}
}\\
\subfloat[][]{
\label{fig:chg-presentation}
\begin{tikzpicture}[->,>=stealth',shorten >=1pt,auto,node distance=2cm,
                    semithick]
  \tikzstyle{every state}=[fill=black!50!white!50,draw=black,text=black,minimum size=7mm,]
  \node[state] (A0) {$\displaystyle_{0}$};
  \node[state] (A1) [right of=A0]{$\displaystyle_{1}$};
  \node[state] (A2) [right of=A1]{$\displaystyle_{2}$};
  \node  (DOTS) [right of=A2]{$\ldots$};
  \node[state] (Ab) [right of=DOTS]{$\displaystyle_{b}$};
  \path (A0) edge [bend left,above] node {$+1$} (A1)
    (A1) edge [bend left,above] node {$+1$} (A2)
    (A2) edge [bend left,above] node {$+1$} (DOTS)
    (DOTS) edge [bend left, above] node {$+1$} (Ab)
    (Ab) edge [bend left, below] node {$-1$} (DOTS)
    (DOTS) edge [bend left, below] node {$-1$} (A2)
    (A2) edge [bend left, below] node {$-1$} (A1)
    (A1) edge [bend left, below] node {$-1$} (A0);
\end{tikzpicture}}
\\
\subfloat[][]{
\label{fig:rlldk-presentation}
\begin{tikzpicture}[->,>=stealth',shorten >=1pt,auto,node distance=1.6cm,
                    semithick]
  \tikzstyle{every state}=[fill=black!50!white!50,draw=black,text=black,minimum size=7mm,]
  \node[state] (A0) {$\displaystyle_{0}$};
  \node[state] (A1) [right of=A0]{$\displaystyle_{1}$};
  \node  (DOTS1) [right of=A1]{$\ldots$};
  \node[state] (Ad) [right of=DOTS1]{$\displaystyle_{d}$};
  \node[state, text centered] (Ad1) [right of=Ad]{\makebox[1pt]{$\displaystyle_{d{+}1}$}};
  \node  (DOTS2) [right of=Ad1]{$\ldots$};
  \node[state] (Ak) [right of=DOTS2]{$\displaystyle_{k}$};
  \path (A0) edge [above] node {$0$} (A1)
    (A1) edge [above] node {$0$} (DOTS1)
    (DOTS1) edge [above] node {$0$} (Ad)
    (Ad) edge [above] node {$0$} (Ad1)
    (Ad1) edge [above] node {$0$} (DOTS2)
    (DOTS2) edge [above] node {$0$} (Ak)

        (Ak) edge [bend left=35,above] node {$1$} (A0)
    (Ad1) edge [bend left=34,above] node {$1$} (A0)
    (Ad) edge [bend left,above] node {$1$} (A0)
;
\end{tikzpicture}}

\end{center}
\caption{Presentations of $1$-dimensional constraints \protect\subref{fig:even-presentation} $\EVEN$ ;\protect\subref{fig:odd-presentation} $\ODD$; \protect\subref{fig:chg-presentation} $\CHARGE(b)$;\protect\subref{fig:rlldk-presentation} $\RLL(d,k)$.}
\end{figure}

A $1$-dimensional constraint over an alphabet $\Sigma$ is said to have memory $m$,
for some positive integer $m$, if for every word $w$ of more than $m$ letters over $\Sigma$,
in which every subword of $m+1$ consecutive letters satisfies $S$, it holds that $w$ satisfies $S$
as well, and $m$ is the smallest integer for which this holds. A $1$-dimensional constraint with
finite-memory is called a {\em finite-type} constraint. Among our examples, $\RLL(d,k)$ is a finite-type constraint with memory $k$, whereas $\EVEN$, $\ODD$ and $\CHARGE(b)$ for $b{\geq}2$ are not finite-type constraints.

In this work, we consider multidimensional constraints of dimension $\DD$ for some positive integer $\DD$. These are sets of finite-size $\DD$-dimensional arrays with entries over some finite alphabet specified by $\DD$ edge-labeled directed graphs. In Section~\ref{sec:framework} we give a precise definition of what we mean by $\DD$-dimensional constraints. Here we just mention that they are closed under taking subarrays, meaning that if an array belongs to the constraint then any of its $\DD$-dimensional subarrays consisting of ``adjacent'' entries also belongs to the constraint. As before we say that an array satisfies the constraint if it belongs to it. Examples of multidimensional constraints can be obtained by generalizing $1$-dimensional constraints, defining the constraint to consist of all arrays satisfying a given $1$-dimensional constraint $S$ on every ``row'' in every direction along an ``axis'' of the array. We denote such a $\DD$-dimensional constraint by $S^{{\otimes}\DD}$. We will almost exclusively be concerned with $2$ dimensional constraints, namely $\DD=2$. In this case $S\osqr$ is the set of all $2$-dimensional arrays where each row and each column satisfy $S$. A well-known $2$-dimensional constraint studied in statistical mechanics is the so called ``hard-square'' constraint. It consists of all finite-size binary arrays which do not contain $2$ adjacent `$1$'s either horizontally or vertically. Two variations of this constraint are the isolated `$1$'s or ``non-attacking-kings'' constraint, denoted $\NAK$, and the ``read-write-isolated-memory''
constraint, denoted $\RWIM$. The former consists of all finite-size binary arrays in which there are no two adjacent `$1$'s either horizontally, vertically, or diagonally, and the latter consists of all finite-size binary arrays in which there are no two adjacent `$1$'s either horizontally, or diagonally.
Like their $1$-dimensional counterparts, $2$-dimensional constraints play a role in storage systems, where with recent developments, information is written in a true $2$-dimensional fashion rather than using essentially $1$-dimensional tracks.The $\RWIM$ constraint is used to model sequences of states of a binary linear memory in which no two` adjacent entries may contain a `$1$', and in every update, no two adjacent entries are both changed. See~\cite{Cohn} and \cite{Golin-Yong-Zhang-Sheng} for more details.

Let $S$ now be a $\DD$-dimensional constraint over an alphabet $\Sigma$. For a
$\DD$-tuple $\bldm=(m_1,m_2,{\ldots},m_\DD)$ of positive integers,
$S_\bldm$ or $S_{m_1{\times}{\ldots} \times m_\DD}$ denotes the set of all
$m_1{\times}m_2{\times}{\ldots}{\times}m_\DD$ arrays in $S$,
and $\langle\bldm\rangle$ denotes the product of the entries of $\bldm$.
We say that a sequence $\bldm_i =(m_1^{(i)},\ldots,m_\DD^{(i)})$ diverges to infinity,
denoted $\bldm_i\rightarrow\infty$, if $(m_j^{(i)})_{i=1}^{\infty}$ does for each $j$.
The capacity of $S$ is defined by
\begin{equation}
\label{eq:capacity}
\capac(S)=\lim_{i\rightarrow\infty}\frac{\log|S_{\bldm_i}|}{\langle \bldm_i \rangle},
\end{equation}
where $\bldm_i{\rightarrow}\infty$, $|\cdot|$ denotes cardinality, and $\log = \log_2$.
The above limit exists and is independent of the choice of $\left(\bldm_i\right)_{i=1}^{\infty}$,
for any set $S$ of finite-size $\DD$-dimensional arrays over $\Sigma$ which is closed under
taking subarrays. This follows from subadditivity arguments
(see \cite{Kato-Zeger} for a proof for $\DD=2$, which can be generalized to higher dimensions). In fact
\begin{equation}
\label{eq:capac-inf}
\capac(S)=\inf_{\bldm}\frac{\log_2|S_{\bldm}|}{\langle \bldm \rangle}.
\end{equation}
While a closed form
formula for the capacity of $1$-dimensional constraints is known (up to finiding the largest root
of a polynomial), no such formula is known for constraints in higher dimensions, and currently
there are only a few multidimensional constraints for which the capacity is known exactly and
is nonzero (a highly non-trivial example can be found in ~\cite{Kastelyn}).

Let $S$ be a $2$-dimensional constraint over $\Sigma$, and $m$ be a positive integer. The {\em horizontal} (resp. {\em vertical}) {\em strip of height} (resp. {\em width}) {\em $m$} of $S$, denoted $\H_m(S)$ (resp. $\V_m(S)$) is the subset of $S$ given by
$$
\begin{array}{ll}
\displaystyle{\H_m(S)=\bigcup_{n}S_{m{\times}n}} &
\mbox{(resp. }\displaystyle{\V_m(S)=\bigcup_{n}S_{n{\times}m}}\mbox{).}
\end{array}
$$
We show in Section~\ref{sec:framework} that $\H_m(S)$ and $\V_m(S)$ are $1$-dimensional constraints over $\Sigma^m$.

A method for computing very good lower and upper bounds on the capacity of the hard-square
constraint is given in ~\cite{Calkin-Wilf} (see also~\cite{Markley-Paul}).
Their method can be shown to work on $2$-dimensional constraints for which every horizontal
or every vertical strip has memory $1$ and is ``symmetric''; that is, it is closed under
reversing the order of symbols in words. The main contributions of our work are:
\begin{enumerate}
\item We establish a generalization of the method of~\cite{Calkin-Wilf} that gives improved lower bounds on capacities
of $2$-dimensional constraints, for instance for $\NAK$ and $\RWIM$.
\item We show how this generalization as well as the original method for obtaining upper-bounds
may be applied to
a larger class of $2$-dimensional constraints that includes constraints in which the vertical and
 horizontal strips are not necessarily finite-type.
 We illustrate this by computing lower and upper bounds on the capacities of
the $\CHARGE(3)\osqr$ and $\EVEN\osqr$ constraints.
\item We show that $\capac(\CHARGE(2)^{{\otimes}\DD})=2^{-\DD}$ and $\capac(\ODD^{{\otimes}\DD})=1/2$, for all positive integers $\DD$.
\end{enumerate}
Previous work involving applications of the method of~\cite{Calkin-Wilf} and generalizations
include~\cite{Forsch-Justesen},~\cite{Friedland},~\cite{Nagy-Zeger}, and~\cite{Weeks-Blahut}.

\section{Framework.}
\label{sec:framework}
In this section, we define the framework that we use in the rest of the paper.
We deal with a directed graph $G=(V,E)$, sometimes simply called a graph,
with vertices $V$ and edges $E$. For $e{\in}E$ we denote by
 $\sigma_\G(e)$ and $\tau_\G(e)$ the initial and terminal vertices of $e$ in $G$, respectively. We shall omit the subscript $G$ from $\sigma_G$ and $\tau_G$ when the graph is clear from the context. A {\em path of length $\ell$} in $G$ is a sequence of $\ell$ edges $(e_i)_{i=1}^{\ell}{\subseteq}E$, where for $i=1,2,{\ldots},\ell{-}1$, $\tau(e_i)$=$\sigma(e_{i+1})$. The path starts at the vertex $\sigma(e_1)$ and ends at the vertex $\tau(e_\ell)$. A cycle in $G$ is a path that starts and ends at the same vertex
Fix a finite alphabet $\Sigma$. A directed {\em labeled graph $\G$} with labels in
$\Sigma$ is a pair $\G=(G,\ELL)$, where $G=(V,E)$ is a directed graph, and $\ELL:E\rightarrow\Sigma$ is a labeling of the edges of $G$ with symbols of $\Sigma$.
The paths and cycles of $\G$ are inherited from $G$ and we will sometime use $\sigma_\G$ and $\tau_\G$ to denote $\sigma_G$ and $\tau_G$ respectively.

For a labeled graph $\G=((V,E),\ELL)$ with $\ELL:E\rightarrow\Sigma$ and a path $(e_i)_{i=1}^\ell$ of $\G$,
we say the path {\em generates} the word $\ELL(e_1)\ELL(e_2){\ldots}\ELL(e_\ell)$ in $\Sigma^*$. The graph $\G$ is
called {\em lossless} if for any two vertices $u$ and $v$ of $\G$, all paths starting at $u$ and
terminating at $v$ generate distinct words. The graph $\G$ is called {\em deterministic} if
there are no two distinct edges with the same initial vertex and the same label.
Every $1$-dimensional constraint $S$ has a deterministic, and therefore lossless,
presentation~\cite{MRS}.

We introduce two $1$-dimensional constraints defined by general directed graphs. Let
 $G{=}(V,E)$ be a directed graph,
the {\em edge constraint defined by $G$}, denoted $\opX(G)$, is the $1$-dimensional constraint
over the alphabet $E$, presented by $\G=(G,I_E)$ where $I_E$ is the identity map on $E$. For a
graph $G{=}(V,E)$ with no parallel edges, {\em the vertex-constraint defined by $G$}, denoted
$\opXhat(G)$, is the set
$$
\left\{(v_i)_{i=1}^\ell{\subseteq}V{:}
\mbox{$\ell{=}0{,}1{,}2{,}{\ldots}$, and for $1{\leq}i{<}\ell$, $\exists e{\in}E$ s.t.
$\sigma(e){=}v_i$, $\tau(e){=}v_{i+1}$}\right\}.
$$
It's not hard to verify that vertex-constraints and edge-constraints are $1$-dimensional
constraints with memory $1$.  In fact, the vertex constraints are precisely the finite-type
constraints with memory $1$.  And it can be shown that edge constraints are characterized
as follows.  The {\em follower set} of a symbol $a$ in a constraint $S$ is defined to be
$\{b: ab \in S\}$; edge constraints are precisely the constraints with memory 1 such that
any two follower sets are either disjoint or identical~\cite[exercise 2.3.4]{Lind-Marcus}.

We consider multidimensional arrays of dimension $\DD$--a positive integer. For an integer $\ell$ denote by $[\ell]$ the set $\left\{0,1,{\ldots},\ell{-}1\right\}$. For a $\DD$-tuple $\bldm=\left(m_1,\ldots,m_\DD\right)$ of nonnegative integers and a finite set $A$, we call an $m_1\times m_2 \times \ldots \times m_\DD$ $\DD$-dimensional array with entries in $A$, a $\DD$-dimensional array of size $\bldm$ over $A$. We shall index the entries of such an array by $[m_1]{\times}[m_2]{\times}{\ldots}{\times}[m_\DD]$. Let $A^\bldm$ or $A^{m_1{\times}{\ldots}{\times}m_\DD}$ denote the set of all $\DD$-dimensional arrays of size $\bldm$ over $A$. We define $A^{*{\ldots}*}=A^{*^\DD}$, where the number of `$*$'s in the superscript is $\DD$, by
$$
A^{*^\DD}=\bigcup_{\bldm}\Sigma^\bldm,
$$
as the set of all finite-size $\DD$-dimensional arrays with entries in $A$.
Let $\Gamma{\in}A^{*^\DD}$ be such an array. Given an integer $1{\leq}i{\leq}\DD$, {\em a row in direction $i$}
of $\Gamma$ is a sequence of entries of $\Gamma$ of the form
$\left(\Gamma_{(k_1,k_2,\ldots,k_{i-1},j,k_{i+1},\ldots,k_\DD)}\right)_{j=0}^{m_i{-}1}$
for some integers $k_l{\in}[m_l];\;1{\leq}l{\leq}\DD,\,l{\neq}i$. In this paper, for $\DD=2$, we use the convention that direction $1$ is the vertical direction and direction $2$ is the horizontal; thus the columns of a $2$-dimensional array are its rows in direction $1$, and its ``traditional rows'' are its rows in direction $2$.
Let $B$ be a finite set and $\ELL:A\rightarrow B$ be a mapping. We extend $\ELL$ to a mapping $\ELL:A^{*^{\DD}}{\rightarrow}B^{*^{\DD}}$ as follows. For a $\DD$-dimensional array $\Gamma{\in}A^\bldm$, $\ELL(\Gamma)$  is the array in $B^{\bldm}$ obtained by applying $\ELL$ to each entry of $\Gamma$, that is
$$
\left(\ELL(\Gamma)\right)_{\bldj}=\ELL(\left(\Gamma\right)_{\bldj})\;\;,\;\;\bldj{\in}[m_1]{\times}{\ldots}{\times}[m_\DD].
$$
Additionally, for a subset $S{\subseteq}A^{*^\DD}$ we define $\ELL(S)=\{\ELL(\Gamma):\Gamma{\in}S\}$.

We generalize the definition of constrained system to $\DD$ dimensions.
Let $\bar{\G}=(\G_1,\G_2,{\ldots},\G_\DD)$, be a $\DD$-tuple of labeled graphs with the
same set of edges $E$ and the same labeling $\ELL:E\rightarrow\Sigma$. The edge $e$
has $\DD$ pairs of initial and terminal vertices $(\sigma_{\G_i}(e)$, $\tau_{\G_i}(e))$--one
for each graph $\G_i$ in $\barGG$. We say that an array $\Gamma{\in}\Sigma^{*^\DD}$ of size $\bldm$
is {\em generated} by $\barGG$ if there exists an array $\Gamma'{\in}E^{*\DD}$ of size $\bldm$,
such that for $i=1,2,{\ldots},\DD$, every row in direction $i$ of $\Gamma'$ is a path in $\G_i$,
and $\ELL(\Gamma')=\Gamma$.
We call the set of all arrays $\Gamma{\in}\Sigma^{*\DD}$ generated by $\barGG$, the
{\em $\DD$-dimensional constrained system} or the {\em $\DD$-dimensional constraint}
presented by $\barGG$, and denote it by $\opX(\barGG)$. We say that $\barGG$ is a {\em presentation}
for $\opX(\barGG)$.

In~\cite{Halevi-Roth}, $2$-dimensional constrained systems are defined by
vertex-labeled graphs, with a common set of vertices
and a common labelling on the vertices.  It can be shown that their definition (generalized to higher
dimensions) is equivalent to ours. We find it more convenient to use our definition, since, just
as in one dimension, it permits use of parallel edges and often enables a smaller presentation of
a given constraint.

Figure~\ref{fig:2d-presentations} shows presentations for the $\NAK$ and $\RWIM$ constraints defined
in Section~\ref{sec:introduction}. In these presentations $\G_1$ and $\G_2$ describe the vertical
and horizontal constraints on the edges, respectively.
Each edge $\blde=(\blde)_{i,j}$ is a $2{\times}2$ binary matrix of the form
$$
\blde=\left(
\begin{array}{cc}
       (\blde)_{(0,0)} & (\blde)_{(0,1)}\\
       (\blde)_{(1,0)} & (\blde)_{(1,1)}
\end{array}
\right),
$$
and it is labeled by $(\blde)_{(1,1)}$, i.e., the labelling of an edge simply picks out
the entry in the lower-right corner.
For $\NAK$, the edges $E = E_{\NAK}$  are the $2\times 2$ matrices with at most one 1; note that a
rectangular array
of any size satisfies  $\NAK$ iff each of its $2\times 2$ sub-arrays belongs to $E$; similarly,
for $\RWIM$, the edges $E=E_{\RWIM}$ are the elements of $E_{\NAK}$ together with
$$
\left(\begin{array}{cc}
       1 &  0\\
       1 &  0
\end{array}\right) \mbox{ and }
\left(\begin{array}{cc}
       0 &  1\\
       0 &  1
\end{array}\right),
$$
and again
a
rectangular array
of any size satisfies  $\RWIM$ iff each of its $2\times 2$ sub-arrays belongs to $E$.
In the figures, each edge is drawn twice--once in $\G_1$ and once in $\G_2$--and
the matrix identifying it is written next to it. The states are $1 \times 2$ blocks for
$\G_1$ and $2 \times 1$ blocks for $\G_2$; for an edge $\blde$,
$$
\sigma_{\G_1}(e) =
\begin{array}{c}
       (\blde)_{(0,0)} (\blde)_{(0,1)}
\end{array}, \mbox{ and }
\tau_{\G_1}(e) =
\begin{array}{c}
(\blde)_{(1,0)}  (\blde)_{(1,1)}
\end{array},
$$
and
$$
\sigma_{\G_2}(e) =
\begin{array}{c}
       (\blde)_{(0,0)}\\
       (\blde)_{(1,0)}
\end{array}, \mbox{ and }
\tau_{\G_2}(e) =
\begin{array}{c}
       (\blde)_{(0,1)}\\
       (\blde)_{(1,1)}
\end{array}.
$$
It follows that, for both constraints,
and any rectangluar array $\Gamma'{\in}E^{m{\times}n}$ with
each of its rows a path in $\G_2$ and each of its columns a path in $\G_1$, it holds that
$$
\Gamma'_{(i,j)}=\left(\begin{array}{cc}
                     \ELL(\Gamma'_{(i{-}1,j{-}1)})&\ELL(\Gamma'_{(i{-}1,j)})\\
             \ELL(\Gamma'_{(i,j{-}1)})&\ELL(\Gamma'_{(i,j)})
                    \end{array}
        \right).
$$
for $i=1,2,{\ldots},m{-}1$, $j=1,2,{\ldots},n{-}1$. Therefore, the only $2{\times}2$
sub-arrays appearing in the array $\ELL(\Gamma')$ are elements of $E$,
and it follows that $\ELL(\Gamma')$ satisfies the corresponding constraint.
Similarly, it can be shown that any array satisfying the constraint can be
generated by the presentation.

\newcommand{\edge}[4]{\left(\displaystyle^{#1#2}_{#3\mathbf{#4}}\right)}
\begin{figure}[ht]
\label{fig:2d-presentations}
\begin{center}
\subfloat[][]{
\label{fig:NAK-presentation}
\begin{tikzpicture}[->,>=stealth',shorten >=1pt,auto,node distance=1.8cm,
                    semithick]
  \tikzstyle{every state}=[fill=black!5!white!95,draw=black,text=black,minimum size=5mm]
  \node (caption1) at (0,0) {$\G_1{:}$};
  \node[state] (v01) at (0.5,0) {\makebox[1pt]{\small{$01$}}};
  \node[state] (v00) [right of=v01]{\makebox[1pt]{\small{$00$}}};
  \node[state] (v10) [right of=v00]{\makebox[1pt]{\small{$10$}}};

  \node[below=10pt] (caption2) [below of=caption1] {$\G_2:$};
  \node[state,below=10pt] (h01) [below of=v01]{\raisebox{-3pt}[-1pt][-1pt]{$\displaystyle^{0}_{1}$}};
  \node[state]            (h00) [right of=h01]{\raisebox{-3pt}[-1pt][-1pt]{$\displaystyle^{0}_{0}$}};
  \node[state]            (h10) [right of=h00]{\raisebox{-3pt}[-1pt][-1pt]{$\displaystyle^{1}_{0}$}};
  \node        (dummy) [below of=h00]{};

  \path (v01) edge [bend left,above] node {$\edge{0}{1}{0}{0}$} (v00)
    (v00) edge [loop,above]      node {$\edge{0}{0}{0}{0}$} (v00)
    (v00) edge [bend left,below] node {$\edge{0}{0}{0}{1}$} (v01)
    (v00) edge [bend left,above] node {$\edge{0}{0}{1}{0}$} (v10)
    (v10) edge [bend left,below] node {$\edge{1}{0}{0}{0}$} (v00)

        (h01) edge [bend left,above] node {$\edge{0}{0}{1}{0}$} (h00)
    (h00) edge [loop,above]      node {$\edge{0}{0}{0}{0}$} (h00)
    (h00) edge [bend left,below] node {$\edge{0}{0}{0}{1}$} (h01)
    (h00) edge [bend left,above] node {$\edge{0}{1}{0}{0}$} (h10)
    (h10) edge [bend left,below] node {$\edge{1}{0}{0}{0}$} (h00);
\end{tikzpicture}
}\qquad
\subfloat[][]{
\label{fig:RWIM-presentation}
\begin{tikzpicture}[->,>=stealth',shorten >=1pt,auto,node distance=1.8cm,
                    semithick]
  \tikzstyle{every state}=[fill=black!5!white!95,draw=black,text=black,minimum size=5mm]
  \node (caption1) at (0,0) {$\G_1{:}$};
  \node[state] (v01) at (0.5,0) {\makebox[1pt]{\small{$01$}}};
  \node[state] (v00) [right of=v01]{\makebox[1pt]{\small{$00$}}};
  \node[state] (v10) [right of=v00]{\makebox[1pt]{\small{$10$}}};
  \node[below=10pt] (caption2) [below of=caption1] {$\G_2:$};
  \node[state,below=10pt] (h01) [below of=v01]{\raisebox{-3pt}[-1pt][-1pt]{$\displaystyle^{0}_{1}$}};
  \node[state] (h00) [right of=h01]{\raisebox{-3pt}[-1pt][-1pt]{$\displaystyle^{0}_{0}$}};
  \node[state] (h10) [right of=h00]{\raisebox{-3pt}[-1pt][-1pt]{$\displaystyle^{1}_{0}$}};
  \node[state] (h11) [below of=h00]{\raisebox{-3pt}[-1pt][-1pt]{$\displaystyle^{1}_{1}$}};
  \path (v01) edge [bend left,above] node {$\edge{0}{1}{0}{0}$} (v00)
    (v01) edge [loop, above]     node {$\edge{0}{1}{0}{1}$} (v01)
    (v00) edge [loop,above]      node {$\edge{0}{0}{0}{0}$} (v00)
    (v00) edge [bend left,below] node {$\edge{0}{0}{0}{1}$} (v01)
    (v00) edge [bend left,above] node {$\edge{0}{0}{1}{0}$} (v10)
    (v10) edge [bend left,below] node {$\edge{1}{0}{0}{0}$} (v00)
    (v10) edge [loop, above]     node {$\edge{1}{0}{1}{0}$} (v10)

    (h01) edge [bend left,above] node {$\edge{0}{0}{1}{0}$} (h00)
    (h00) edge [loop,above]      node {$\edge{0}{0}{0}{0}$} (h00)
    (h00) edge [bend left,below] node {$\edge{0}{0}{0}{1}$} (h01)
    (h00) edge [bend left,above] node {$\edge{0}{1}{0}{0}$} (h10)
    (h00) edge [bend left]                      (h11)
    (h10) edge [bend left,below] node {$\edge{1}{0}{0}{0}$} (h00)
    (h11) edge [bend left]                      (h00)
node [left=15pt, above=2pt] at (h11) {$\edge{1}{0}{1}{0}$}
node [right=16.5pt, above=2pt] at (h11) {$\edge{0}{1}{0}{1}$};
\end{tikzpicture}
}
\caption{Presentations of $2$-dimensional constraints: \protect\subref{fig:NAK-presentation} $\NAK$ constraint; \protect\subref{fig:RWIM-presentation} $\RWIM$ constraint.}
\end{center}
\end{figure}
The {\em axial product} of $\DD$ sets $L_1,{\ldots},L_{\DD}{\subseteq}\Sigma^*$,  denoted $L_1{\otimes}L_2{\otimes}{\ldots}{\otimes}L_\DD\subseteq\Sigma^{*^\DD}$, is the set of all arrays $\Gamma{\in}\Sigma^{*^\DD}$ such that for $i=1,2,{\ldots},{\DD}$ every row of $\Gamma$ in direction $i$ belongs to $L_i$. If $L_1{=}L_2{=}{\ldots}{=}L_\DD=L$ we say that the axial-product is {\em isotropic} and denote it by $L^{{\otimes}\DD}$.
Given a presentation $\barGG=((G_1,\ELL),{\ldots},(G_\DD,\ELL))$ for a $\DD$-dimensional constraint
$S$ with a common set of edges $E$, the set
$\opX(G_1){\otimes}{\ldots}{\otimes}\opX(G_\DD)\subseteq E^{*^{\DD}}$ is a
$\DD$-dimensional constraint presented by $((G_1,I_E),{\ldots},(G_\DD,I_E))$,
where $I_E$ is the identity map on $E$.

We say that $\barGG$ is {\em capacity-preserving}
if $\capac(\opX(G_1){\otimes}{\ldots}{\otimes}\opX(G_\DD))=\capac(S)$. For $\DD=1$ any
lossless, and therefore deterministic, presentation of $S$ is capacity-preserving,
since in a lossless graph with $|V|$ vertices there are at most $|V|^2$ paths generating any given
word; for $\DD>1$, the question whether
every $\DD$-dimensional constraint has a capacity-preserving
presentation
is open.
This is a major open problem in symbolic dynamics, although it is usually formulated in a
slightly different manner;
see~\cite{Desai}, where it is shown that for every $\DD$-dimensional constraint $S$ and $\epsilon >0$,
there is a presentation
$\barGG=((G_1,\ELL),{\ldots},(G_\DD,\ELL))$
such that $\capac(S){\le}\capac(\opX(G_1){\otimes}{\ldots}{\otimes}\opX(G_\DD)){<}\capac(S){+}\epsilon$.
We show in the next proposition that the answer to this question is positive, if $S$ is an axial product of $\DD$
one-dimensional constraints.
\begin{proposition}
\label{axial-is-a-constraint}
Let $S_1,S_2,{\ldots},S_{\DD}{\subseteq}{\Sigma}^*$ be $\DD$ one-dimensional
constraints over $\Sigma$, and let
$S=S_1{\otimes}{\ldots}{\otimes}S_{\DD}$, then $S$ is a $\DD$-dimensional constraint over $\Sigma$.
Moreover, $S$ has a capacity-preserving presentation.
\begin{remark*}
There are $\DD$-dimensional constraints which are not axial products of $\DD$ one-dimensional constraints. For example for $\DD{=}2$, the $\NAK$ constraint defined in Section~\ref{sec:introduction} is not an axial product.
\end{remark*}
\end{proposition}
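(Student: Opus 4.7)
The plan is to construct the presentation directly via a fiber product over labels. By the fact noted just before the proposition, each $S_i$ admits a deterministic (hence lossless) $1$-dimensional presentation $\mathcal{H}_i=(H_i,\ell_i)$ with $H_i=(V_i,E_i)$ and $\ell_i:E_i\to\Sigma$. Define the common edge set
\[
E \;=\; \bigl\{(e_1,\ldots,e_\DD)\in E_1\times\cdots\times E_\DD : \ell_1(e_1)=\cdots=\ell_\DD(e_\DD)\bigr\},
\]
with common labeling $\ELL(e_1,\ldots,e_\DD)=\ell_1(e_1)$. For each direction $i$, let $G_i$ have vertex set $V_i$ and edge set $E$, where $(e_1,\ldots,e_\DD)$ inherits its initial and terminal vertices from $e_i$ in $H_i$. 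Set $\barGG=((G_1,\ELL),\ldots,(G_\DD,\ELL))$; this meets the definition of a presentation since all $\DD$ graphs share the edge set $E$ and labeling $\ELL$.

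First I would verify $\opX(\barGG)=S$. By unwinding definitions, a direction-$i$ row of an edge array $\Gamma'\in E^{\bldm}$ is a path in $G_i$ if and only if its $i$-th coordinate projection is a path in $H_i$, which, after applying $\ELL$, says that the corresponding direction-$i$ row of $\ELL(\Gamma')$ lies in $S_i$. Ranging over $i$ shows $\ELL(\Gamma')\in\opX(\barGG)$ iff $\ELL(\Gamma')\in S$. For surjectivity of $\ELL$ onto $S_{\bldm}$, given $\Gamma\in S_{\bldm}$, independently lift each direction-$i$ row of $\Gamma$ to a path in $H_i$ (possible because each such row lies in $S_i$) and package the resulting coordinates into an array in $E^{\bldm}$; the fiber-product condition is automatic because every direction-$i$ lift labels each position with the same $\Gamma$-entry.

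For capacity-preservation, consider the axial product $T:=\opX(G_1)\otimes\cdots\otimes\opX(G_\DD)$; as already noted in the excerpt, $T$ is itself a $\DD$-dimensional constraint presented by $((G_1,I_E),\ldots,(G_\DD,I_E))$, and the argument above shows $\ELL$ maps $T$ onto $S$, so $|T_{\bldm}|\ge|S_{\bldm}|$. In the other direction, determinism of each $H_i$ implies that any word of length $m_i$ admits at most $|V_i|$ path-lifts in $H_i$; hence for each $\Gamma\in S_{\bldm}$ and each direction $i$ the number of choices of the $i$-th coordinate projection of a preimage of $\Gamma$ is at most $|V_i|^{\prod_{j\neq i}m_j}$ (one such choice per direction-$i$ row), giving
\[
|T_{\bldm}| \;\le\; |S_{\bldm}|\cdot\prod_{i=1}^{\DD}|V_i|^{\prod_{j\neq i}m_j}.
\]
Taking logarithms, dividing by $\langle\bldm\rangle$, and letting $\bldm\to\infty$ makes the overhead $\sum_i(\log|V_i|)/m_i$ vanish, so $\capac(T)=\capac(S)$.

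The main obstacle is the bookkeeping in the surjectivity step: one must check that the independent per-direction lifts assemble into a single array whose entries actually lie in the fiber product $E$, which is precisely what the label-agreement defining $E$ ensures. The subexponential preimage bound driving capacity-preservation hinges on the determinism of the $H_i$, turning what could have been a linear-in-$m_i$ penalty per row into a mere $\log|V_i|$ factor.
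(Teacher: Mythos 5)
Your proof is correct and uses essentially the same fiber-product construction as the paper, with the same surjectivity argument and the same subexponential preimage count driving capacity preservation. The only cosmetic differences are that you invoke determinism directly (giving an overhead of $|V_i|$ per row where the paper uses losslessness and gets $|V_i|^2$) and work with a general $\bldm\to\infty$ rather than the diagonal $\bldn=(n,\ldots,n)$, neither of which changes the substance.
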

\begin{proof}
Let $\G_{S_1}=((V_1,E_1),\ELL_1),\G_{S_2}=((V_2,E_2),\ELL_2),{\ldots},\G_{S_\DD}=((V_\DD,E_\DD),\ELL_\DD)$ be  presentations of $S_1,{\ldots},S_\DD$, respectively. Define the $\DD$-tuple of labeled graphs $\barGG=(\G_1,{\ldots},\G_\DD)$, as follows. Let 
$$
E=\left\{\left(e_1,{\ldots},e_\DD\right){\in}\prod_{i=1}^{\DD}E_i:\ELL_1(e_1){=}\ELL_2(e_2){=}{\ldots}{=}\ELL_\DD(e_\DD)\right\},
$$
and let $\ELL:E\rightarrow\Sigma$ be given by
$$
\ELL(e_1,{\ldots},e_\DD)=\ELL_1(e_1)\;\;;\;\;(e_1,{\ldots},e_\DD){\in}E.
$$
For $i=1,2,{\ldots},\DD$, the graph $\G_i$ is defined by $\G_i=(G_i,\ELL)$ with $G_i=(V_i,E)$, where for $\blde=(e_1,{\ldots},e_\DD){\in}E$, $\sigma_{\G_i}(\blde)=\sigma_{\G_{S_i}}(e_i)$ and $\tau_{\G_i}(\blde)=\tau_{\G_{S_i}}(e_i)$. It's easy to verify that $\barGG$ is a presentation of $S_1{\otimes}S_2{\otimes}{\ldots}{\otimes}S_\DD$. 

Assume now that every $G_{S_i}$ is losseless. We show that in this case $\barGG$ is capacity preserving. Let $X{=}\opX(G_1){\otimes}{\ldots}{\otimes}\opX(G_\DD)$, $n$ be a positive integer,
and let $\bldn$ be the $\DD$-tuple with every entry equal to $n$.
We extend the mapping $\ELL$ to $\ELL:X_\bldn{\rightarrow}S_\bldn$ as described above. 
Now, fix an array $\Gamma{\in}S_\bldn$, and for $i=1,2,{\ldots},\DD$ let $\Gamma'^{(i)}{\in}(E_i)^\bldn$ be an array such that every row in direction $i$, $(\Gamma'^{(i)}_{\bldj_k})_{k=1}^{n}$ is a path in $\G_{S_i}$ generating the corresponding row $(\Gamma_{\bldj_k})_{k=1}^{n}$ in $\Gamma$. Let $\Gamma'{\in}E^\bldn$ be the array with entries given by $\Gamma'_\bldj=(\Gamma'^{(1)}_\bldj,{\ldots},\Gamma'^{(\DD)}_\bldj)$, $\bldj{\in}[n]^\DD$.
It follows from the construction of $\barGG$ that $\Gamma'{\in}X_\bldn$, and that $\ELL(\Gamma')=\Gamma$. Moreover, any array $\Delta{\in}X_\bldn$ such that $\ELL(\Delta)=\Gamma$ can be constructed in this manner.
Now, as each $\G_{S_i}$ is lossless, there are at most $|V_i|^2$ possibilities of choosing each row in direction $i$ of $\Gamma'^{(i)}$, and as there are $n^{\DD{-}1}$ such rows, there are at most $|V_i|^{2n^{\DD{-}1}}$ possibilities of choosing each $\Gamma'^{(i)}$. It follows that 
$$
|\ELL^{-1}(\{\Gamma\})|\leq\prod_{i=1}^{\DD}|V_i|^{2n^{\DD{-}1}},
$$
where $\ELL^{-1}(\{\Gamma\})=\{\Gamma'{\in}X_\bldn:\ELL(\Gamma')=\Gamma\}$. 
Summing the latter inequality over all $\Gamma{\in}\S_\bldn$, we obtain
$$
|X_\bldn|{\leq}|S_\bldn|\prod_{i=1}^{\DD}|V_i|^{2n^{\DD{-}1}}.
$$
Taking the $\log$, dividing through by $n^\DD$, and taking the limit as $n$ approaches infinity, we have $\capac(X){\leq}\capac(S)$. Clearly, $\capac(X){\geq}\capac(S)$, since $X$ is a presentation of $S$. The result follows.
\end{proof}

A graph $G=(V,E)$  is {\em irreducible} if
for any pair of vertices $u,v{\in}V$ there is a path in $G$ starting at
$u$ and terminating at $v$; it is {\em primitive} if it is irreducible and
the $\gcd$ of the lengths of all cycles of $G$ is $1$. We denote by $A(G)$
the adjacency matrix of $G$: namely  the $|V|\times|V|$ matrix where
$\left(A(G)\right)_{i,j}$ is the number of edges in $G$ from $i$ to $j$.
We use $\bldone$ to denote the real vector each of whose entries is $1$.
For a nonnegative matrix $A$ denote by $\lambda(A)$ its Perron eigenvalue,
that is, its largest real eigenvalue. It is well-known that for a
$1$-dimensional constraint $S$ presented by a lossless labeled graph $\G=(G,\ELL)$,
the capacity of $S$ is $\log\lambda\left(A(G)\right)$.
In particular, for a graph $G$, it holds that $\capac(\opX(G))=\log\lambda\left(A(G)\right)$.
We say that a graph $G$ is {\em symmetric} if $A(G)$ is symmetric. We say that a vertex of
a graph is {\em isolated} if it has no outgoing nor incoming edges.
We say that a vertex-constraint (resp. edge-constraint) is {\em symmetric} if
it is defined by a symmetric graph. For a vertex-constraint, this definition
is equivalent to requiring that the constraint is closed under reversal of the
order of symbols in words. Note, that in a symmetric edge-constraint, up to
removal of isolated vertices, the (unlabeled) graph defining the constraint is unique.

In this paper, we will mostly deal with $2$-dimensional constraints.
For a $2$-dimensional array $\Gamma{\in}\Sigma^{m_1{\times}m_2}$, for nonnegative integers $m_1$, $m_2$,
we denote by $\tr{\Gamma}$ its transpose, namely
$\left(\tr{\Gamma}\right)_{(i,j)}=\left(\Gamma\right)_{(j,i)}$ for all $(i,j){\in}[m_1]{\times}[m_2]$. For a $2$-dimensional constraint $S$ over $\Sigma$, we use $\tr{S}$ to denote the set
$$
\tr{S}=\left\{\Gamma{\in}\Sigma^{**}:\tr{\Gamma}{\in}S\right\},
$$
Clealry $\tr{S}$ is a $2$-dimensional constraint with $\capac(\tr{S})=\capac(S)$.

Let $S$ be a $2$-dimensional constraint over an alphabet $A$, and consider a horizontal (resp. vertical)
strip $\H_m(S)$ (resp. $\V_m(S)$) of $S$ for some positive integer $m$. We regard such a strip as a
set
of $1$-dimensional words over $\Sigma^m$ where each $m{\times}n$ (resp. $n{\times}m$) array in
the strip is considered a word of length $n$ over $A^m$.
Below we show that the horizontal and vertical strips of $S$ are $1$-dimensional constraints over $A^m$.
For this, we need the following definition. Let $G=(V,E)$ be a graph, and let $m$ be a positive integer.
Let $G^{{\times}m}$ be the graph given by $G^{{\times}m}=(V^m, E^m)$, where for
each $\blde=(e_1,{\ldots},e_m){\in}E^m$, $\sigma_{G^{{\times}m}}(\blde)=
(\sigma_G(e_1),{\ldots},\sigma_G(e_m))$ and
$\tau_{G^{{\times}m}}(\blde)=(\tau_G(e_1),{\ldots},\tau_G(e_m))$.
For a labeled graph
$\G=(G,\ELL)$ with $G=(V,E)$ and $\ELL:E\rightarrow\Sigma$, let $\G^{{\times}m}$ be
the labeled graph defined by $\G^{{\times}m}=(G^{{\times}m},\ELL^{{\times}m})$,
where $\ELL^{{\times}m}:E^m\rightarrow\Sigma^m$ is given by
$$
\begin{array}{ll}
\ELL^{{\times}m}(e_1,{\ldots},e_m)=(\ELL(e_1),{\ldots},\ELL(e_m))\;\;;\;\;(e_1,{\ldots},e_m){\in}E^m
\end{array}
$$
We call $G^{{\times}m}$ (resp., $\G^{{\times}m}$) the {\em $m$th tensor-power of $G$}.
(resp.,  $\G$). We can now state the following proposition.
\begin{proposition}
\label{prop:strips-are-1dim-constraints}
Let $S$ be a $2$-dimensional constraint over $\Sigma$ and let $m$ be a positive integer.
Then
\begin{enumerate}
\item \label{itm:strips-are-1dim-constraints:general}
$\H_m(S)$ (resp. $\V_m(S)$) is a $1$-dimensional constraint over $\Sigma^m$.
\item \label{itm:strips-are-1dim-constraints:axial}
Let $S=T^{(\V)}{\otimes}T^{(\H)}$ for $1$-dimensional constraints $T^{(\V)}$, $T^{(\H)}$
over $\Sigma$, presented by
labeled graphs $\G^{(\V)}$, $\G^{(\H)}$, respectively. Then the $1$-dimensional
constraint $\H_m(S)$ is presented by the labeled graph
$\G^{(\H)}_m$ defined as the sub-graph of the labeled
graph $(\G^{(\H)})^{{\times}m}$ consisting of
only those edges whose label (an $m$-letter word over $\Sigma$)
satisfies $T^{(\V)}$.  An analogous statement holds for $\V_m(S)$, with respect to
the graph $\G^{(\V)}_m$ formed in a similar way from $(\G^{(\V)})^{{\times}m}$.
\end{enumerate}
\end{proposition}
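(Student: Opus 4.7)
My plan is to handle both parts by direct construction of a labeled graph whose paths, read columnwise, correspond exactly to the arrays in the strip. For Part~1, I would fix any presentation $\barGG = (\G_1, \G_2)$ of $S$ with common edge set $E$ and labeling $\ELL:E\rightarrow\Sigma$, and let $V_2$ denote the vertex set of $\G_2$. I then construct a labeled graph $\G^*$ over the alphabet $\Sigma^m$ whose vertex set is $V_2^m$, whose edge set consists of those $m$-tuples $(e_1,\ldots,e_m)\in E^m$ satisfying $\tau_{\G_1}(e_i)=\sigma_{\G_1}(e_{i+1})$ for $1\le i<m$ (i.e.\ those $m$-tuples that form length-$m$ paths in $\G_1$), and whose source, target, and label on such an edge are defined componentwise via $\sigma_{\G_2}$, $\tau_{\G_2}$, and $\ELL$ respectively. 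A length-$n$ path in $\G^*$ then corresponds to an array $\Gamma'\in E^{m\times n}$ in which every column is a length-$m$ path in $\G_1$ (by the edge restriction) and every row is a length-$n$ path in $\G_2$ (by source/target compatibility along the path). Pushing forward by $\ELL$ and reading columnwise yields exactly the words of length $n$ in $\H_m(S)$, using that $\barGG$ presents $S$ so that every element of $\H_m(S)\cap\Sigma^{m\times n}$ lifts to such a $\Gamma'$.

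For Part~2, let $\G^{(\V)}$ and $\G^{(\H)}$ be the given presentations of $T^{(\V)}$ and $T^{(\H)}$. I would verify directly that the graph $\G^{(\H)}_m$ from the statement presents $\H_m(S)$, by unwinding the definitions of the tensor power and axial product. A length-$n$ path in $(\G^{(\H)})^{{\times}m}$ is precisely an $m$-tuple of parallel length-$n$ paths in $\G^{(\H)}$; stacking the generated words gives an $m\times n$ array over $\Sigma$ whose $m$ rows all lie in $T^{(\H)}$. The restriction defining $\G^{(\H)}_m$ requires each edge's label in $\Sigma^m$---which is precisely the corresponding column of this array---to lie in $T^{(\V)}$, so the generated arrays are exactly the elements of $(T^{(\V)}{\otimes}T^{(\H)})_{m\times n}=S_{m\times n}$. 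Conversely, any element of $\H_m(S)$ lifts to such a path by choosing, for each row, a generating path in $\G^{(\H)}$.

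The hard part is really just bookkeeping: keeping the two directions straight (direction~1 vertical, direction~2 horizontal) and correctly translating between paths in a tensor power, $m$-tuples of parallel paths in $\G^{(\H)}$, and columnwise readings of $m\times n$ arrays as words in $(\Sigma^m)^*$. No deep argument is required; once the indexing is carefully set up, both parts reduce to unwinding the definitions of tensor power, axial product, and the edge/label structure of the presentations involved.
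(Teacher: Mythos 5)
Your proposal is correct and follows essentially the same approach as the paper: both build the presentation of $\H_m(S)$ by taking $m$-tuples of edges, restricting to those tuples that form paths in the vertical graph, and wiring sources/targets componentwise via the horizontal graph. The only cosmetic difference is in Part~1, where the paper proves Part~2 first and then derives Part~1 by lifting $S$ to the axial product of edge constraints $\E^{(\V)}\otimes\E^{(\H)}$ over the edge alphabet $E$ and pushing forward by $\ELL$, whereas you write out the resulting graph $\G^*$ directly; these constructions are identical once unwound.
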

\begin{proof}
It suffices to prove this only for horizontal strips $\H_m(S)$.
We first prove part~{\ref{itm:strips-are-1dim-constraints:axial}}.
It's easy to verify that the labeled graph $(\G^{(\H)})^{{\times}m}$
presents the constraint over $\Sigma^m$, consisting of all $m{\times}n$ arrays
of $\Sigma^{**}$ such that every row satisfies $T^{(\H)}$. It follows that the
sub-graph $\G^{(H)}_m$, formed by removing all the edges of $(\G^{(\H)})^{{\times}m}$
that are labeled with a word that does not satisfy $T^{(\V)}$, presents the $1$-dimensional
constraint consisting of all $m{\times}n$ arrays with every row satisfying $T^{(H)}$ and
every column satisfying $T^{(\V)}$. This is precisely the constraint $\H_m(S)$.

We now prove part~\ref{itm:strips-are-1dim-constraints:general}. Let the pair of labeled
graphs $(\G^{(\V)},\G^{(\H)})$ be a presentation of $S$, where $\G^{(\V)}=((V^{(\V)},E),\ELL)$
and $\G^{(\H)}=((V^{(\H)},E),\ELL)$.
Define the edge-constraints $\E^{(\V)}=\opX(V^{(\V)},E)$ and
$\E^{(\H)}=\opX(V^{(\H)},E)$. Since $(\G^{(\V)},\G^{(\H)})$ is a presentation of $S$, we have
$S=\ELL(\E^{(\V)}{\otimes}\E^{(\H)})$, and therefore
$\H_m(S)=\ELL(\H_m(\E^{(\V)}{\otimes}\E^{(\H)}))$. By
part~\ref{itm:strips-are-1dim-constraints:axial}, $\H_m(\E^{(\V)}{\otimes}\E^{(\H)})$ is a
$1$-dimensional constraint, presented by a labeled graph $\G^{(\H)}_m$ with edges labeled
by words in $E^m$. Replacing each such label $\blde{\in}E^m$ in that graph with
$\ELL(\blde)$ we clearly obtain a presentation of $\ELL(\H_m(\E^{(\V)}{\otimes}\E^{(\H)})) = \H_m(S)$.
\end{proof}

We shall use the following notation for $2$-dimensional arrays. Let $A$ be a set and
$\Gamma{\in}A^{s{\times}t}$. For integers $0{\leq}s_1{\leq}s_2{<}s$ and $0{\leq}t_1{\leq}t_2{<}t$,
we denote by $\Gamma_{s_1:s_2,t_1:t_2}$ the
sub-array:
$$
\left(\Gamma_{s_1:s_2,t_1:t_2}\right)_{i,j}=\Gamma_{s_1{+}i,t_1{+}j}\;\;;\;\;
(i,j){\in}[s_2{-}s_1{+}1]{\times}[t_2{-}t_1{+}1],
$$
and by $\Gamma_{s_1:s_2,*}$ (resp. $\Gamma_{*,t_1:t_2}$) the sub-array
$\Gamma_{s_1:s_2,0:t-1}$ (resp. $\Gamma_{0:s-1,t_1:t_2}$).
We also abbreviate $x{:}x$ in the subscript by $x$.
We shall use the same notation for one-dimensional vectors:
for a vector $\bldv{\in}A^s$, $\bldv_{s:t}$ denotes the
subvector
$$
\left(\bldv_{s:t}\right)_i=\bldv_{s+i}\;\;;\;\;i{\in}[t{-}s{+}1].
$$

\section{Constraints with symmetric edge-constrained strips.}
\label{sec:edge-sym}
In this section we generalize the method presented in \cite{Calkin-Wilf} to provide improved lower bounds on
capacities of $2$-dimensional constraints whose horizontal strips are symmetric edge-constraints.

Fix an alphabet $\Sigma$, and let $S$ be a $2$-dimensional constraint over $\Sigma$. We say that $S$ has
{\em horizontal edge-constrained-strips}
if for every positive integer $m$,
the constraint $\H_m(S)$  is an edge-constraint.
If, in addition, every horizontal strip
is symmetric,
we say that $S$ has {\em symmetric horizontal
edge-constrained strips}. Analogously, using $\V_m(S)$,  we have the notions of a  $2$-dimensional constraint
with vertical edge-constrained-strips and  symmetric vertical edge-constrained-strips

Here, we consider constraints of the form $S=T\otimes{\E}$, where $\E{=}\opX(G_\E)$ is an
edge-constraint defined by the graph $G_\E=(V_\E,E_\E)$ and
$T$ is an arbitrary $1$-dimensional constraint over $\Sigma$. Then $\E$ is presented by
$\G_\E=(G_\E,I_E)$ where $I_E$ is the identity map on $E_\E$. Let $m$ be a positive integer.
By Proposition~\ref{prop:strips-are-1dim-constraints},
part~\ref{itm:strips-are-1dim-constraints:axial}, $\H_m(S)$ is a $1$-dimensional constraint
presented by a subgraph $\G^{(\H)}_m=(G^{(H)}_m,I_E^{{\times}m} )$ of $\G_\E^{{\times}m}$.
It follows that $\H_m(S)=\opX(G^{(\H)}_m)$, and so $S$
has horizontal edge-constrained strips. Henceforth, we further assume that
it has symmetric horizontal edge-constrained strips; note that symmetry of the
graph $G_\E$ is necessary but not sufficient for this assumption (see
Proposition~\ref{prop:edge-sym-cond-1} below).

For a positive integer $m$, let $F_m=|(V_\E)^m|$, and let $H_m$
denote the $F_m\times F_m$ adjacency matrix of $G^{(\H)}_m$.
Since the limit in~(\ref{eq:capacity}) exists independently of the choice of $(\bldm_i)_{i=1}^{\infty}$, and since
$\lim_{n\rightarrow\infty}\log(|S_{m\times n}|)/n=\capac(\H_m(S))$ for every positive integer $m$, we have
\begin{eqnarray}
\nonumber\capac(S)&=&\lim_{m,n\rightarrow\infty}\frac{\log|S_{m{\times}n}|}|{mn}\\
\nonumber         &=&\lim_{m\rightarrow\infty}\lim_{n\rightarrow\infty}\frac{\log|S_{m\times n}|}{mn} \\
\nonumber     &=&\lim_{m\rightarrow\infty}\frac{\capac(\H_m(S))}{m} \\
\label{eq:strip-cap-limit}   &=&\lim_{m\rightarrow\infty}\frac{\log\lambda(H_m)}{m}.
\end{eqnarray}
For a matrix $M$, let $\tr{M}$ denote its transpose. Fix a positive integer $m$.
Following \cite{Calkin-Wilf}, since $H_m$ is real and symmetric, we obtain by the
min-max principle~\cite{HJ}
$$
\lambda(H_m^p) \geq \frac{\tr{\bldy_m}H_m^p\bldy_m}{\tr{\bldy_m}\bldy_m},
$$
for any $F_m{\times}1$ real vector $\bldy_m\neq{\mathbf 0}$ and positive integer $p$.
Choosing $\bldy_m$ to be the vector $H_m^q\bldx_m$, for some positive integer $q$ and $F_m \times 1$ real vector $\bldx_m$ such that $\bldy_m\neq{\mathbf 0}$, we have
\begin{equation}
\label{eq:lambda_Hm_lower_bound}
\lambda(H_m^p) \geq \frac{\tr{\bldx_m}H_m^{2q+p}\bldx_m}{\tr{\bldx_m}H_m^{2q}\bldx_m}.
\end{equation}
Thus by~(\ref{eq:strip-cap-limit}), it follows that
\begin{equation}
\label{eq:capac-rayleigh-lb}
\capac(S)\geq \frac{1}{p}\limsup_{m\rightarrow\infty}\frac{1}{m}\log{\frac{\tr{\bldx_m}H_m^{2q+p}\bldx_m}{\tr{\bldx_m}H_m^{2q}\bldx_m}}.
\end{equation}

In \cite{Calkin-Wilf}, each $\bldx_m$ is chosen to be the $F_m\times 1$ vector with
every entry equal to $1$.
We obtain improved lower bounds in many cases by choosing other sequences of vectors,
$(\bldx_m)_{m=1}^{\infty}$, as follows.
We fix integers $\mu{\geq}0$ and $\alpha{\geq}1$, and
let ${\phi:(V_\E)^{\mu+\alpha}\rightarrow [0,\infty)}$ be a nonnegative function. Our method works
for sequences $\left(\bldx_{m_k}\right)_{k=1}^{\infty}$=$\left(\bldz^\phi_{m_k}\right)_{k=1}^{\infty}$
where $m_k=\mu+k\alpha$ for positive integers $k$, and
$\bldz^\phi_{m_k}$ is the $F_{m_k}{\times}1$ nonnegative vector indexed by $(V_\E)^{m_k}$:
\begin{equation}
\label{eq:bldzmk-def}
(\bldz_{m_k}^\phi)_{\bldv}{=}\prod_{i=0}^{k-1}\phi(\bldv_{i\alpha:i\alpha+{\mu+\alpha-1}})\;\;\; ; \;\;\; \bldv{\in}(V_\E)^{m_k}.
\end{equation}
For such sequences and a fixed positive integer $n$,
we will show that one can compute $L_n$, the growth rate of
$\tr{\bldx_{m_k}}H_{m_k}^n\bldx_{m_k}$:
$$
L_n=\lim_{k{\rightarrow}{\infty}}\frac{\log\tr{\bldx_{m_k}}H_{m_k}^n\bldx_{m_k}}{m_k},
$$
and from~(\ref{eq:capac-rayleigh-lb}) we obtain the lower bound $\capac(S){\geq}(L_{2q+p}-L_{2q})/p$.

Before doing this for general $\mu$ and $\alpha$, it is instructive to look at the special case:
$\mu=0$ and $\alpha=1$. In this case $m_k=k$, and
$$
\left(\bldz_{k}^\phi\right)_{\bldv}{=}\prod_{i=0}^{k-1}\phi((\bldv)_i)\;\;\; ; \;\;\;\bldv{\in}(V_\E)^{k}.
$$
Let $(\bldx_m)_{m=1}^{\infty}=(\bldz_m^\phi)_{m=1}^{\infty}$ and let $n$ be a positive integer. For a word $w=w_1{\ldots}w_n{\in}\E$ define its weight, $\W_\phi(w)$, by $\W_\phi(w){=}\phi(\sigma(w_1))\phi(\tau(w_n))$, where $w_1,w_n$ are regarded as edges in $G_\E$
and extend this to arrays $\Gamma{\in}S_{m{\times}n}$ by
$$
\W_\phi(\Gamma)=\prod_{i=0}^{m-1}\W_\phi(\Gamma_{i,:}).
$$
Observe that for an array $\Gamma{\in}S_{m{\times}n}$ that is a path in $G_m^{(H)}$ of length $n$
starting at $\bldv{\in}(V_\E)^{m}$ and ending at $\bldu{\in}(V_\E)^{m}$, it holds that
$\W_\phi(\Gamma){=}(\bldz_m^\phi)_\bldv(\bldz_m^\phi)_\bldu$. It follows that
\begin{equation}
\label{eq:sum-of-arrays-weights}
\tr{\bldx_m}H_m^n\bldx_m{=}\hspace{-8pt}\sum_{\Gamma{\in}S_{m{\times}n}}\hspace{-8pt}\W_\phi(\Gamma).
\end{equation}
Now, pick a deterministic presentation, $\G_n^{(\V)}{=}(V_n^{(\V)},E_n^{(\V)},\ELL_n^{(\V)})$ of $\V_n(S)$,
and let $\W_\phi:E{\rightarrow}[0,{\infty})$ be the edge weighting defined by $\W_\phi(e)=\W_\phi(\ELL_n^{(\V)}(e))$ for $e{\in}E_n^{(\V)}$.
Let $A(\G_n^{(\V)},\W_\phi)$ be the $|V_n^{(\V)}|{\times}|V_n^{(\V)}|$ weighted adjacency matrix of $\G_n^{(\V)}$ with entries indexed by $V_n^{(\V)}{\times}V_n^{(\V)}$ and given by
$$
\left(A(\G_n^{(\V)},\W_\phi)\right)_{i,j}=\hspace{-15pt}\mathop{\sum_{e\in E_n^{(\V)}:}}_{\sigma(e)=i,\tau(e)=j}\hspace{-15pt}\W_{\phi}(e)\;\;\;;\;\;\; i,j{\in}V_n^{(\V)}.
$$
Then
\begin{eqnarray*}
\tr{\bldone}A(\G_n^{(\V)},\W_\phi)^m\bldone
&=&\sum_{\gamma}\W_\phi(\ELL_n^{(\V)}(\gamma)),
\end{eqnarray*}
where the sum is taken over all paths $\gamma$ in $\G_n^{(\V)}$ of length $m$ and $\ELL_n^{(\V)}(\gamma)$ denotes
the array in $S_{m{\times}n}$ generated by $\gamma$. Since $\G_n^{(\V)}$ is deterministic it follows that
$$
\lim_{m{\rightarrow}\infty}\frac{\log\tr{\bldone}A(\G_n^{(\V)},\W_\phi)^m\bldone}{m}=
\lim_{m{\rightarrow}\infty}\frac{\log{\sum_{\Gamma{\in}S_{m{\times}n}}\W_\phi(\Gamma)}}{m}.
$$
By equation (\ref{eq:sum-of-arrays-weights}) the RHS is $L_n$ and by Perron-Frobenius theory the LHS is $\log\lambda(A(\G_n^{(\V)},\W_\phi))$, and thus $L_n=\log\lambda(A(\G_n^{(\V)},\W_\phi))$.

For general $\mu,\alpha$, we proceed similarly. We pick a deterministic presentation $\G_n^{(\V)}=(V_n^{(\V)},E_n^{(\V)},\ELL_n^{(\V)})$ of $\V_n(S)$ and construct a labeled directed graph $\I{=}\I(\mu,\alpha,n,\G_n^{(V)},G_\E){=}(V_\I,E_\I,\ELL_\I)$, with nonnegative real weights on its edges given by $\W_\phi:E_\I\rightarrow[0,\infty)$. The graph $\I$ and weight function $\W_\phi$ are defined as follows.
The set of vertices $V_\I$ is given by
$$
V_\I=\left\{(\bldf,v,\bldl)\::\:v\in V_n^{(\V)}, \bldf,\bldl\in (V_\E)^\mu \right\},
$$
and the function $\ELL_\I:E_\I\rightarrow\Sigma^{\alpha{\times}n}$ labels each edge with an $\alpha{\times}n$ array over $\Sigma$.
We specify the edges of $\I$ by describing the outgoing edges of each of its vertices along with
their weights.
Let $\bldv=(\bldf,v,\bldl)\in V_\I$ be a vertex of $\I$. The set of outgoing edges of $\bldv$ consists of exactly one edge for every path of length $\alpha$ in $\G_n^{(\V)}$ starting at $v$. Let $\gamma=\left(e_i\right)_{i=0}^{\alpha-1}\subseteq E_n^{(\V)}$ be such a path and let $u$ be its terminating vertex. We regard the word generated by $\gamma$ in $\G_n^{(\V)}$ as an array $\Gamma\in\Sigma^{\alpha \times n}$ with entries given by $(\Gamma)_{i,j}=\left(\ELL_n^{(\V)}(e_i)\right)_j$. Let $\bldf=(f_0,\ldots,f_{\mu-1})$ and $\bldl=(l_0,\ldots,l_{\mu-1})$ and for $i=\mu,\mu{+}1,{\ldots},\mu{+}\alpha{-}1$, define $f_i$ to be $\sigma(\Gamma_{i-\mu,0})$ and $l_i$ to be $\tau(\Gamma_{i-\mu,n-1})$, where $\Gamma_{i-\mu,0}$ and $\Gamma_{i-\mu,n-1}$ are regarded as edges in the graph $G_\E$. For such a path $\gamma$ the corresponding outgoing edge $e\in E_\I$ of $\bldv$ satisfies $\sigma(e)=\bldv$, $\ELL_\I(e)=\Gamma$, $\tau(e)=((f_{\alpha},f_{\alpha+1},\ldots,f_{\alpha+\mu-1}),u,(l_{\alpha},l_{\alpha+1},\ldots,l_{\alpha+\mu-1}))$.
The weight of $e$, $\W_{\phi}(e)$, is given by $\W_{\phi}(e)=\phi(f_0,\ldots,f_{\mu+\alpha-1})\phi(l_0,\ldots,l_{\mu+\alpha-1})$.
We shall regard the label of a path $(e_i)_{i=0}^{\ell{-}1}$ in $\I$ as the $\ell\alpha{\times}n$ array $\Gamma$ over $\Sigma$ resulting from concatenating the labels of the edges of $\gamma$ in order in the vertical direction, namely $\Gamma_{i\alpha+k,j}=(\ELL_\I(e_i))_{k,j}$, for all $i{\in}[\ell]$, $k{\in}[\alpha]$ and $j{\in}[n]$.
Finally, we define the weighted adjacency matrix of the labeled directed graph $\I$ with weights
given by $\W_\phi$ as the $|V_\I|\times|V_\I|$ nonnegative real matrix $A(\I,\W_\phi)$ with entries indexed by $(V_\I)^2$ and given by
$$
\left(A(\I,\W_\phi)\right)_{i,j}=\hspace{-15pt}\mathop{\sum_{e\in E_\I:}}_{\sigma(e)=i,\tau(e)=j}\hspace{-15pt}\W_{\phi}(e)\;\;\;;\;\;\; i,j\in V_\I.
$$

The following lemma generalizes ideas in~\cite{Calkin-Wilf} and uses the weighted labeled graph $\I$ to compute
$L_n$, when $(\bldx_{m_k})_{k=1}^{\infty}=(\bldz^\phi_{m_k})_{k=1}^{\infty}$.
\begin{lemma}
\label{lemma:xHx-growth-rate}
For $\left(\bldx_{m_k}\right)_{k=1}^{\infty}=\left(\bldz^\phi_{m_k}\right)_{k=1}^{\infty}$, and $\I=\I(\mu,\alpha,n,\G^{(\V)}_n,G_\E)$,
$$
\lim_{k\rightarrow\infty}\frac{\log\tr{\bldx_{m_k}}H_{m_k}^n\bldx_{m_k}}{m_k}=
\frac{\log\lambda(A(\I,\W_\phi))}{\alpha}.
$$
\end{lemma}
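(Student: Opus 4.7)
The plan is to generalize the $\mu=0$, $\alpha=1$ calculation carried out just before the lemma, with more bookkeeping to accommodate the sliding windows of width $\mu+\alpha$ and stride $\alpha$. First, since $(H_{m_k}^n)_{\bldv,\bldu}$ equals the number of length-$n$ paths from $\bldv$ to $\bldu$ in $G^{(\H)}_{m_k}$, which in turn are in bijection with those arrays $\Gamma\in S_{m_k\times n}$ whose first-column initial-vertex sequence in $G_\E$ is $\bldv$ and whose last-column terminal-vertex sequence in $G_\E$ is $\bldu$, I would expand the quadratic form and apply (\ref{eq:bldzmk-def}) to obtain (generalizing (\ref{eq:sum-of-arrays-weights}))
$$
\tr{\bldx_{m_k}}H_{m_k}^n\bldx_{m_k}=\sum_{\Gamma\in S_{m_k\times n}}\prod_{i=0}^{k-1}\phi\bigl(\bldv_{i\alpha:i\alpha+\mu+\alpha-1}\bigr)\phi\bigl(\bldu_{i\alpha:i\alpha+\mu+\alpha-1}\bigr).
$$

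Next, exploiting the determinism of $\G^{(\V)}_n$, I would identify each $\Gamma\in S_{m_k\times n}$ with a unique path of length $m_k=\mu+k\alpha$ in $\G^{(\V)}_n$ and split it as $\gamma_0\gamma_1\cdots\gamma_k$ with $|\gamma_0|=\mu$ and $|\gamma_i|=\alpha$ for $i\ge1$. The prefix $\gamma_0$ determines a vertex $\mathrm{st}(\gamma_0)=(\bldf^{(0)},v^{(0)},\bldl^{(0)})\in V_\I$, where $v^{(0)}$ is the terminal vertex of $\gamma_0$ in $\G^{(\V)}_n$ and $\bldf^{(0)},\bldl^{(0)}$ are the $G_\E$-boundary tuples of the $\mu\times n$ array generated by $\gamma_0$; by construction of $\I$, the tail $(\gamma_1,\ldots,\gamma_k)$ is then a path of length $k$ in $\I$ starting at $\mathrm{st}(\gamma_0)$. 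A direct inductive check---tracking the $\mu$-tuple state components of the $\I$-path against the successive sliding windows in the formula for $\bldz^\phi_{m_k}$---shows that the $\W_\phi$-weight of the $i$-th edge of this $\I$-path is exactly the $i$-th factor in the product above. Setting $N:=A(\I,\W_\phi)$ and letting $\bldc$ be the $k$-independent nonnegative vector indexed by $V_\I$ with $(\bldc)_\mathrm{st}=\#\{\gamma_0:\mathrm{st}(\gamma_0)=\mathrm{st}\}$, the formula then collapses to
$$
\tr{\bldx_{m_k}}H_{m_k}^n\bldx_{m_k}=\tr{\bldc}\,N^k\,\bldone.
$$

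Finally I would extract the asymptotics. Since the entries of $\bldc$ are uniformly bounded above by a constant $C$ (at most the total number of length-$\mu$ prefixes, independent of $k$), we get $\tr{\bldc}\,N^k\,\bldone\le C\,\tr{\bldone}\,N^k\,\bldone$, and standard Perron--Frobenius theory for nonnegative matrices gives $\tr{\bldone}\,N^k\,\bldone=\lambda(N)^k\cdot k^{O(1)}$; the matching lower bound follows provided the support of $\bldc$ reaches a basic class of $N$ attaining the spectral radius. Taking logarithms and dividing by $m_k=\mu+k\alpha$ then yields $\log\lambda(N)/\alpha$ in the limit $k\to\infty$. The main obstacle is the combinatorial bookkeeping in the second step: one must carefully reconcile the ``virtual'' boundary data $\bldf^{(0)},\bldl^{(0)}$ stored at the initial state of the $\I$-path with the real boundary values subsequently produced by each edge, handling the regimes $\alpha\le\mu$ and $\alpha>\mu$ uniformly, and verifying that the sliding-window update rule of $\I$ exactly matches the windowing in (\ref{eq:bldzmk-def}). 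Once the identity $\tr{\bldx_{m_k}}H_{m_k}^n\bldx_{m_k}=\tr{\bldc}\,N^k\,\bldone$ is established, the spectral-radius conclusion is essentially standard.
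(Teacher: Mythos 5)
Your first step (expanding the quadratic form into a weighted sum over arrays) matches the paper's equation~(\ref{eq:sum-of-arrays-weights}), and your overall plan---encode the weighted sum as powers of $A(\I,\W_\phi)$ and extract the asymptotics via Perron--Frobenius---is the right idea. But the central identity
$\tr{\bldx_{m_k}}H_{m_k}^n\bldx_{m_k}=\tr{\bldc}\,N^k\,\bldone$
is not correct, and the fix you would need is exactly what the paper's two-sided inequality~(\ref{eq:xHmkx-bounds}) supplies. The problem is your claim that each $\Gamma\in S_{m_k\times n}$ is identified with a \emph{unique} path of length $m_k$ in $\G_n^{(\V)}$. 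Determinism of $\G_n^{(\V)}$ only guarantees at most one generating path \emph{from each initial vertex}; a single array may be generated from several initial vertices, up to $|V_n^{(\V)}|$ of them. Consequently, the map from pairs $(\gamma_0,\gamma_1\cdots\gamma_k)$ to arrays is surjective but generally not injective, and what actually holds is $\tr{\bldc}\,N^k\,\bldone=\sum_\Gamma c_\Gamma\,\W_\phi(\Gamma)$ where $c_\Gamma\in\{1,\ldots,|V_n^{(\V)}|\}$ is the number of generating $\G_n^{(\V)}$-paths for $\Gamma$; this is not the same as $\sum_\Gamma\W_\phi(\Gamma)$. This is a repairable error: replacing the identity by the two-sided bound $\tr{\bldx_{m_k}}H_{m_k}^n\bldx_{m_k}\le\tr{\bldc}\,N^k\,\bldone\le|V_n^{(\V)}|\cdot\tr{\bldx_{m_k}}H_{m_k}^n\bldx_{m_k}$ suffices for the asymptotic, which is in essence what the paper does (with $\bldone$ in place of $\bldc$ and explicit counting of fibers).

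The second, more substantive gap is the lower bound. You invoke the proviso ``the support of $\bldc$ reaches a basic class of $N$ attaining the spectral radius'' but give no argument for it; the vertex set $V_\I$ deliberately contains ``unrealizable'' triples $(\bldf,v,\bldl)$, and it is not obvious a priori that the Perron eigenvalue of $N=A(\I,\W_\phi)$ is achieved on the part of the graph seen from realizable initial states. The paper avoids this issue entirely: for the reverse inequality it considers $\I$-paths of length $k'=k+\lceil\mu/\alpha\rceil$, truncates the label array by discarding its first $\lceil\mu/\alpha\rceil\alpha-\mu$ rows to obtain an array in $S_{m_k\times n}$ via the map $\psi$, and bounds the resulting fiber sizes using determinism of $\I$ and a global bound $\Phi$ on $\phi$. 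This yields $\tr{\bldone}N^{k'}\bldone\le\Phi^{2\lceil\mu/\alpha\rceil}|V_\I||\Sigma|^{sn}\,\tr{\bldx_{m_k}}H_{m_k}^n\bldx_{m_k}$ directly, with no spectral-structure assumption. You should replace the reliance on the reachability proviso with an argument of this kind, or explicitly prove the proviso from the construction of $\I$ and the hypotheses on $S$.
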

\begin{proof}
We shall show that there are positive real constants $c,d$ such that for all positive integers $k$,
\begin{equation}
\label{eq:xHmkx-bounds}
{c\cdot\tr{\bldone}\left(A(\I,\W_\phi)\right)^{k+\lceil\mu/\alpha\rceil}\bldone} \leq
{\tr{\bldx_{m_k}}H_{m_k}^n\bldx_{m_k}} \leq
{d\cdot\tr{\bldone}\left(A(\I,\W_\phi)\right)^k\bldone}.
\end{equation}

For a positive integer $s$ and vector $\blde=(e_0,\ldots,e_{s-1})$ in $\left(E_\E\right)^s$ denote by $\sigma(\blde),\tau(\blde)\in \left(V_\E\right)^s$ the vectors with entries given by
$$
\begin{array}{l}
\left(\sigma(\blde)\right)_i=\sigma(e_i)\\
\left(\tau(\blde)\right)_j=\tau(e_i)
\end{array}\;;\;
i\in\{0,1,\ldots,s{-}1\}.
$$

Now, fix a positive integer $k$. Let $\Gamma$ be an array in $S_{m_k{\times}n}$.
Recall that each entry of $\Gamma$ is an edge in $G_\E$ and define the weight of $\Gamma$, denoted $\W_{\phi}(\Gamma)$, by
$$
\W_{\phi}(\Gamma)=\prod_{i=0}^{k-1}{
\phi(\sigma(\Gamma_{i\alpha:i\alpha+\mu+\alpha-1,0}))
\phi(\tau(\Gamma_{i\alpha:i\alpha+\mu+\alpha-1,n-1}))
}.
$$
For a positive integer $\ell$, let $\P_\ell$ denote the set of paths of length $\ell$ in $\I$. We denote the label of a path $\gamma{\in}\P_\ell$ by $\ELL_\I(\gamma)$. It is easily verified that there exists a path in $\P_\ell$ with label $\Gamma\in\Sigma^{\ell\alpha\times n}$ if and only if there exists a path in $\G_n^{(\V)}$ of length $\ell\alpha$ that generates $\Gamma$. As $\G_n^{(\V)}$ is a presentation of $\V_n$, the set of labels of paths in $\P_\ell$ is $S_{\ell\alpha\times n}$.

For a finite path $\gamma$ in $\I$, define its weight, denoted $\W_{\phi}(\gamma)$, as the product of the weights of the edges in the path. Recalling that the entries of $\bldx_{m_k}=\bldz_{m_k}^\phi$ are indexed by
$(V_\E)^{m_k}$, we observe that
\begin{eqnarray*}
\tr{\bldx}_{m_k} H_{m_k}^n \bldx_{m_k}&=&\sum_{\Gamma\in S_{m_k{\times}n}}
\hspace{-8pt}{\left(\bldx_{m_k}\right)_{\sigma(\Gamma_{*,0})}\left(\bldx_{m_k}\right)_{\tau(\Gamma_{*,n{-}1})}}\\
&=&\sum_{\Gamma\in S_{m_k{\times}n}}\hspace{-8pt}{\W_{\phi}(\Gamma)}.
\end{eqnarray*}
For an array $\Gamma\in S_{m_k{\times}n}$, we say that a path $\gamma\in\P_k$ {\em matches} $\Gamma$ if it
is labeled by the sub-array $\Gamma_{\mu:m_k-1,*}$ and starts at a vertex $(\bldf,v,\bldl)\in V_\I$ with $\bldf=\sigma(\Gamma_{0:\mu-1,0})$ and $\bldl=\tau(\Gamma_{0:\mu-1,n-1})$. 
It can be verified from the construction of $\I$ that if $\gamma$ matches $\Gamma$ then $\W_{\phi}(\gamma)=\W_{\phi}(\Gamma)$.

Now, since $\G_n^{(\V)}$ is a presentation of $\V_n(S)$, it follows from the construction of $\I$ that every $\Gamma\in S_{m_k{\times}n}$ has a path in $\P_k$ matching it.
Conversely, since for a path $\gamma\in\P_k$ all arrays $\Gamma\in S_{m_k{\times}n}$ that it matches have the same
sub-array $\Gamma_{\mu:m_k-1,*}$, it follows that there are at most $|\Sigma|^{\mu n}$ arrays in $S_{m_k\times n}$ that $\gamma$ matches.
Therefore,
\begin{eqnarray*}
\tr{\bldx}_{m_k} H_{m_k}^n \bldx_{m_k}&=&\sum_{\Gamma\in S_{m_k{\times}n}}\hspace{-8pt}\W_{\phi}(\Gamma)\\
& \leq &|\Sigma|^{\mu n}\sum_{\gamma\in\P_k}\W_{\phi}(\gamma)\\
& = &|\Sigma|^{\mu n}\tr{\bldone}\left(A(\I,\W_\phi)\right)^k\bldone.
\end{eqnarray*}
This shows the right inequality of~(\ref{eq:xHmkx-bounds}), we now turn to the left. Set $k'=k+\lceil\mu/\alpha\rceil$, $s=\lceil\mu/\alpha\rceil\alpha-\mu$, and let $\psi:\P_{k'}\rightarrow S_{m_k\times n}$ be given by
$$
\psi(\gamma)=\left(\ELL_\I(\gamma)\right)_{s:k'\alpha-1,*}\;;\;\gamma\in\P_{k'}.
$$
For a path $\gamma\in\P_{k'}$, with $\gamma=\left(e_i\right)_{i=0}^{k'-1}\subseteq E_\I$, its weight satisfies
$$
\W_{\phi}(\gamma)=\prod_{i=0}^{k'-1}\W_{\phi}(e_i)=\left(\prod_{i=0}^{\lceil\mu/\alpha\rceil-1}\W_{\phi}(e_i)\right)\W_{\phi}(\psi(\gamma))
      \leq{\Phi}^{2\lceil\mu/\alpha\rceil}\W_{\phi}(\psi(\gamma)),
$$
where we take $\Phi$ to be a positive constant satisfying $\Phi{\geq}\max\left\{\phi(\bldv){:}\bldv{\in}(V_\E)^{\mu{+}\alpha}\right\}$.
Now let $\Gamma$ be an array in $S_{m_k\times n}$.
Since $\G_n^{(\V)}$ is deterministic, so is $\I$, and thus for every vertex $v\in V_\I$, the paths in $\P_k'$ starting at $v$ are labeled distinctly. As all paths $\gamma$ that map to $\Gamma$ under $\psi$ have the same sub-array $\left(\ELL_\I(\gamma)\right)_{s:k'\alpha-1,*}$, it follows that there are at most $|\Sigma|^{sn}$ paths $\gamma\in P_k'$ starting at $v$ such that $\psi(\gamma)=\Gamma$. Consequently, there are at most $|V_\I||\Sigma|^{sn}$ paths in $\P_{k'}$ that map to $\Gamma$ under $\psi$. Therefore,
\begin{eqnarray*}
\tr{\bldone}\left(A(\I,\W_\phi)\right)^{k+\lceil\mu/\alpha\rceil}\bldone
&= &\sum_{\gamma\in\P_{k'}}{\W_\phi(\gamma)}\\
&\leq& \Phi^{2\lceil\mu/\alpha\rceil}\sum_{\gamma\in\P_{k'}}{\W_\phi(\psi(\gamma))}\\
&\leq& \Phi^{2\lceil\mu/\alpha\rceil}|V_\I||\Sigma|^{sn}
\sum_{\Gamma\in S_{m_k{\times}n}}\hspace{-8pt}\W_{\phi}(\Gamma)\\
& = & \Phi^{2\lceil\mu/\alpha\rceil}|V_\I||\Sigma|^{sn}\,\tr{\bldx}_{m_k} H_{m_k}^n \bldx_{m_k}.
\end{eqnarray*}
Dividing both sides by $\Phi^{2\lceil\mu/\alpha\rceil}|V_\I||\Sigma|^{sn}$ we obtain the left inequality
of~(\ref{eq:xHmkx-bounds}).
The claim of the lemma now follows from Perron-Frobenius theory by taking the $\log$ of~(\ref{eq:xHmkx-bounds}), dividing it by $m_k$ and taking the limit as $k$ approaches infinity.
\end{proof}

We thus obtain the following lower bound on the capacity of a $2$-dimensional constraint.

\begin{theorem}
\label{thm:cap-lower-bound-edge}
Let $T,\E$ be $1$-dimensional constraints over an alphabet $\Sigma$, with $\E$ an edge constraint defined by a graph $G_\E=(V_\E,E_\E)$. Set $S=T{\otimes}\E$ and suppose that $S$ has symmetric horizontal edge-constrained strips. Let $\mu{\geq}0$ and $\alpha,p,q{>}0$ be integers and $\phi:\left(V_\E\right)^{\mu+\alpha}\rightarrow[0,\infty)$ be a nonnegative real function. For a positive integer $n$
let $\G_n$ be a deterministic presentation of $\V_n(S)$ and set $A_{n,\phi}=A(\I(\mu,\alpha,n,\G_n,G_\E),\W_\phi)$. Then
\begin{equation}
\label{eqn:cap-lower-bound}
\capac(S)\geq\frac{\log\lambda(A_{2q{+}p,\phi})-\log\lambda(A_{2q,\phi})}{p\alpha}.
\end{equation}
\begin{remark}
\label{rem:cw-upper-bounds}
In addition to computing lower-bounds, \cite{Calkin-Wilf} gives a method for computing upper bounds on the capacity of the hard-square constraint. It can be shown that this method can also be applied to all constraints of the form $T{\otimes}\E$, with $\E$ an edge constraint, having symmetric horizontal edge-constrained strips.
\end{remark}
\begin{remark}
\label{general}
Theorem~\ref{thm:cap-lower-bound-edge} can be generalized to apply to
$2$-dimensional constraints having symmetric horizontal edge-constrained strips,
which are not necessarily axial-products. Let $S$ be such a constraint, and for
every positive integer $m$, let
$G^{(\H)}_m=(V^{(\H)}_m,E^{(\H)}_m)$ be the symmetric graph, with no isolated vertices,
defining $\H_m(S)$. Set $G_\E=G^{(\H)}_1$, $V_\E=V^{(\H)}_1$ and $E_\E=E^{(\H)}_1$. We claim there
exists a mapping $f_m:V^{(\H)}_m{\rightarrow}(V_\E)^m$ such that for every edge
$\blde=e_0e_1{\ldots}e_{m{-}1}{\in}E^{(\H)}_m$, with each $e_i{\in}E_\E$,
\begin{equation}
\label{eq-class-map}
f_m(\sigma(\blde))=(\sigma(e_0),{\ldots},\sigma(e_{m{-}1}))
\mbox{ and }
f_m(\tau(\blde))=(\tau(e_0),{\ldots},\tau(e_{m{-}1})).
\end{equation}
This mapping is defined as follows. For a vertex $v{\in}V^{(\H)}_m$, pick an incoming edge $\blde=e_0e_1{\ldots}e_{m{-}1}{\in}E^{(\H)}_m$ and define $f_m(v)$ as $(\tau(e_0),{\ldots},\tau(e_{m{-}1}))$. This mapping is uniquely-defined: indeed if $\blde'=e'_0{\ldots}e'_{m{-}1}{\in}E^{(\H)}_m$ is another incoming edge of $v$, and $\bldg=g_0{\ldots}g_{m{-}1}{\in}E^{(\H)}_m$ is an outgoing edge of $v$, then clearly, for every $i{\in}[m]$, both $e_i g_i$ and $e'_i g_i$ are paths in $G_\E$; consequently $\tau(e_i)=\tau(e'_i)$.
It is easy to check that $f_m$ satisfies the conditions in (\ref{eq-class-map}).
Now, replace the definition of $\bldz_{m_k}^{\phi}$ in~(\ref{eq:bldzmk-def}) with
$$
(\bldz_{m_k}^\phi)_{\bldv}{=}\prod_{i=0}^{k-1}
\phi(\bldu_{i\alpha:i\alpha+{\mu+\alpha-1}})\;\;\; ; \;\;\;
\bldu=f_{m_k}(\bldv),\bldv{\in}V^{(\H)}_{m_k}.
$$
With this new definition and the aid of (\ref{eq-class-map}), it can be verified that
Lemma~\ref{lemma:xHx-growth-rate} and consequently Theorem~\ref{thm:cap-lower-bound-edge} still hold.
\end{remark}
\begin{remark}
Clearly, it is sufficient, for the theorem to hold, that $\H_m(S)$ is symmetric for large enough $m$.
\end{remark}
\end{theorem}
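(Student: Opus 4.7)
The proof is essentially a substitution: plug Lemma~\ref{lemma:xHx-growth-rate} into the Rayleigh-quotient lower bound (\ref{eq:capac-rayleigh-lb}). First I would specialize the free vectors $\bldx_m$ in (\ref{eq:capac-rayleigh-lb}) to the sequence $\bldx_{m_k}=\bldz^{\phi}_{m_k}$ indexed by $m_k=\mu+k\alpha$, as defined in (\ref{eq:bldzmk-def}). Because $\capac(S)$ arises as an honest limit in (\ref{eq:strip-cap-limit}) and not merely a $\limsup$ over arbitrary $m$, the bound (\ref{eq:capac-rayleigh-lb}) continues to hold when the outer $\limsup_{m\to\infty}$ is restricted to the subsequence $\{m_k\}_{k\ge 1}$.

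Next, I would apply Lemma~\ref{lemma:xHx-growth-rate} twice, once with $n=2q+p$ and once with $n=2q$; in each case the graph $\I(\mu,\alpha,n,\G_n,G_\E)$ is precisely the one whose weighted adjacency matrix is $A_{n,\phi}$ in the theorem statement. The lemma then gives
$$\lim_{k\to\infty}\frac{\log\tr{\bldx_{m_k}}H_{m_k}^{n}\bldx_{m_k}}{m_k}\;=\;\frac{\log\lambda(A_{n,\phi})}{\alpha}.$$
Since these are genuine limits, the $\limsup$ of the difference of the two log-ratios equals the difference of the limits. Combining this with the restricted version of (\ref{eq:capac-rayleigh-lb}) yields
$$\capac(S)\;\ge\;\frac{1}{p}\left(\frac{\log\lambda(A_{2q+p,\phi})}{\alpha}-\frac{\log\lambda(A_{2q,\phi})}{\alpha}\right),$$
which is exactly (\ref{eqn:cap-lower-bound}).

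The only mild technical wrinkle---and I do not expect it to be a real obstacle---is to justify the step leading to (\ref{eq:lambda_Hm_lower_bound}), which requires $\bldy_{m_k}:=H_{m_k}^{q}\bldx_{m_k}\ne\bldzero$. Since $H_{m_k}$ is symmetric, $\tr{\bldx_{m_k}}H_{m_k}^{2q}\bldx_{m_k}=\tr{\bldy_{m_k}}\bldy_{m_k}$, so nonvanishing of $\bldy_{m_k}$ is the same as positivity of the denominator. Provided $\lambda(A_{2q,\phi})>0$, Lemma~\ref{lemma:xHx-growth-rate} guarantees this for all sufficiently large $k$, and the argument above goes through unchanged. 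In the degenerate case $\lambda(A_{2q,\phi})=0$ the right-hand side of (\ref{eqn:cap-lower-bound}) is $-\infty$ (or indeterminate) and the stated bound is either vacuous or trivially true, so nothing is lost.
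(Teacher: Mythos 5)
Your proof is correct and follows the same route the paper takes: equation~(\ref{eq:capac-rayleigh-lb}) gives $\capac(S)\geq (L_{2q+p}-L_{2q})/p$ and Lemma~\ref{lemma:xHx-growth-rate} identifies $L_n=\log\lambda(A_{n,\phi})/\alpha$, which together yield~(\ref{eqn:cap-lower-bound}). The paper leaves implicit the two points you spell out --- that restricting the $\limsup$ to the subsequence $m_k=\mu+k\alpha$ is permissible because~(\ref{eq:strip-cap-limit}) is a genuine limit, and that the non-vanishing of $\bldy_{m_k}=H_{m_k}^q\bldx_{m_k}$ is exactly positivity of the Rayleigh-quotient denominator --- but these are clarifications of the same argument, not a different one.
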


We now give a sufficient condition for the constraint $S=T{\otimes}\E$ to have symmetric horizontal edge-constrained strips.
For this to happen, we (generally) must have that $G_\E$ is symmetric. This means
that there exists a ``matching'' between edges, were each edge is matched with an edge
in the ``reverse'' direction. More precisely there is a bijection $
{R:E_\E\rightarrow E_\E}$ such that for all $e\in E_\E$, $(\sigma(e),\tau(e))=(\tau(R(e)),\sigma(R(e)))$
and $R(R(e))=e$. We call such a bijection an {\em edge-reversing matching}, and we denote
by $\R(G_\E)$ the set of all edge-reversing matchings of $G_\E$. Clearly a
graph $G$ is symmetric iff it has an edge-reversing matching. Thus
$T{\otimes}\E$ has symmetric horizontal edge-constrained strips iff
for every $m$, $\G_m^{(\H)}$ has an edge-reversing matching. We present a sufficient condition for this to hold.
\begin{proposition}
\label{prop:edge-sym-cond-1}
Let $T,\E$ be $1$-dimensional constraints over an alphabet $\Sigma$, with $\E$ an edge constraint defined by a graph $G_\E=(V_\E,E_\E)$ with $R\in\R(G_\E)$ an edge-reversing matching. If for all positive integers $m$ and words $e_1{\ldots}e_m{\in}E_\E^m$, if $e_1{\ldots}e_m$ satisfies $T$ then $R(e_1){\ldots}R(e_m)$ satisfies $T$ as well, then
$T{\otimes}\E$ has symmetric horizontal edge-constrained strips.
\end{proposition}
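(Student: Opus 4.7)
The plan is to exhibit, for every positive integer $m$, an edge-reversing matching on the graph $\G_m^{(\H)}$ that presents $\H_m(T{\otimes}\E)$, built componentwise from $R$.

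First, I would invoke Proposition~\ref{prop:strips-are-1dim-constraints} part~\ref{itm:strips-are-1dim-constraints:axial} to get an explicit description of $\G_m^{(\H)}$. Since $\E=\opX(G_\E)$ is presented by $\G_\E=(G_\E,I_{E_\E})$, the graph $\G_\E^{{\times}m}$ has vertex set $(V_\E)^m$, edge set $(E_\E)^m$, and labels in $(E_\E)^m$ given by the identity; an edge $\blde=(e_1,\ldots,e_m)\in (E_\E)^m$ has initial vertex $(\sigma(e_1),\ldots,\sigma(e_m))$ and terminal vertex $(\tau(e_1),\ldots,\tau(e_m))$. The graph $\G_m^{(\H)}$ is then the subgraph consisting of exactly those edges $\blde$ whose label word $e_1 e_2\cdots e_m$ satisfies $T$.

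Next, I would define $\widetilde R:(E_\E)^m\to(E_\E)^m$ by $\widetilde R(e_1,\ldots,e_m)=(R(e_1),\ldots,R(e_m))$. Using the defining properties of $R$ entry-by-entry, one checks immediately that $\widetilde R$ is a bijection, that $\widetilde R\circ\widetilde R$ is the identity, and that for every $\blde\in(E_\E)^m$,
$$
\sigma_{\G_\E^{{\times}m}}(\widetilde R(\blde))=\tau_{\G_\E^{{\times}m}}(\blde)\quad\text{and}\quad \tau_{\G_\E^{{\times}m}}(\widetilde R(\blde))=\sigma_{\G_\E^{{\times}m}}(\blde),
$$
so $\widetilde R$ is an edge-reversing matching of $\G_\E^{{\times}m}$ (viewed purely as a directed graph). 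The only thing to verify is that $\widetilde R$ restricts to a self-map of the edge set of $\G_m^{(\H)}$: if $\blde=(e_1,\ldots,e_m)$ is an edge of $\G_m^{(\H)}$, then $e_1\cdots e_m$ satisfies $T$, and by the standing hypothesis $R(e_1)\cdots R(e_m)$ also satisfies $T$, so $\widetilde R(\blde)$ is again an edge of $\G_m^{(\H)}$.

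Therefore $\widetilde R$ is an edge-reversing matching of $\G_m^{(\H)}$, which makes $\G_m^{(\H)}$ symmetric and hence $\H_m(T{\otimes}\E)=\opX(\G_m^{(\H)})$ a symmetric edge-constraint for every $m$. There is no serious obstacle here; the proof is essentially a bookkeeping check that the componentwise lift of $R$ continues to satisfy both the graph-theoretic reversal identities and, via the hypothesis on $T$, the constraint on labels carved out by Proposition~\ref{prop:strips-are-1dim-constraints}.
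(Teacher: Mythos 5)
Your proof is correct and follows essentially the same route as the paper: both define the componentwise lift $R^{\times m}$ of $R$, observe it is an edge-reversing matching of $\G_\E^{\times m}$, and use the hypothesis on $T$ to show it restricts to the subgraph $\G_m^{(\H)}$ given by Proposition~\ref{prop:strips-are-1dim-constraints}. Your write-up is a bit more explicit about checking the bijection/involution/reversal properties, but the structure of the argument is identical.
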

\begin{proof}
Let $G^{(\H)}_m=(V^m_\E,E^{(\H)}_m)$ be the sub-graph of
$G_\E^{{\times}m}$ that defines $\H_m(S)$.
We show that $G^{(\H)}_m$ is symmetric. Let ${R^{{\times}m}:E_\E^m\rightarrow E_\E^m}$ be defined
by
$$
R^{{\times}m}(e_1,{\ldots},e_m)=(R(e_1),{\ldots},R(e_m)).
$$
Clearly, $R^{{\times}m}$ is an edge-reversing matching of $G_\E^{{\times}m}$.
Recall that $E^{(H)}_m$ consists of all the edges in $E_\E^m$ that,
when regarded as $m$-letter words over $\Sigma$, satisfy $T$.
Therefore, by the assumption, it follows that for all $\blde{\in}E^{(H)}_m$, $R^{{\times}m}(\blde){\in}E^{(H)}_m$ as well. Consequently, $R^{{\times}m}$ restricted to $E^{(H)}_m$, is an edge-reversing
matching of $G^{(\H)}_m$ and hence it is symmetric.
\end{proof}
If $\G_T$ is a presentation of $T$ and $R{\in}\R(G_\E)$, a sufficient condition for the hypothesis of 
Proposition~\ref{prop:edge-sym-cond-1} to hold, which may be easier to check, is the existance of a function $f:E_T\rightarrow E_T$ satisfying: 1) $\ELL_T(f(e))=R(\ELL_T(e))$ for all $e\in E_T$ and 2) for any path $e_1 e_2$ of length $2$ in $\G_T$, the sequence $f(e_1) f(e_2)$ is also a path in $\G_T$. Indeed, if such a function exists then any path $\epsilon_1\epsilon_2{\ldots}\epsilon_m$ in $G_T$ generating a word $e_1e_2{\ldots}e_m$ has a corresponding path $f(\epsilon_1)f(\epsilon_2){\ldots}f(\epsilon_m)$ generating the word $R(e_1)R(e_2){\ldots}R(e_m)$, and thus the hypothesis of Proposition~\ref{prop:edge-sym-cond-1} is fullfiled. In fact, it can be shown that when $\G_T$ is irreducible, deterministic and has the minimum number of states  among all deterministic presentations of $T$, this condition is also necessary for the hypothesis of Propositon~\ref{prop:edge-sym-cond-1} to hold (see~\cite[Section 3.3]{Lind-Marcus}).

In Section~\ref{sec:sofic-prod} we use Proposition~\ref{prop:edge-sym-cond-1} to show that the method described in this
section can be used to compute lower bounds on $\CHARGE(b_1){\otimes}\CHARGE(b_2)$ for any
positive integers $b_1$ and $b_2$.

\section{Constraints with symmetric vertex-constra\-ined strips.}
In this section we present an analog to Theorem~\ref{thm:cap-lower-bound-edge} that gives
lower bounds on the capacities of constraints for which every horizontal or every vertical
strip is a symmetric vertex-constraint. We do this by transforming a $2$-dimensional
constraint with symmetric vertex-constrained strips to a $2$-dimensional constraint with
symmetric edge-constrained strips, having the same capacity.

Fix an alphabet $\Sigma$, and let $S$ be a $2$-dimensional constraint over $\Sigma$. We say
that $S$ has {\em horizontal vertex-constrained strips} if for every positive
integer $m$, the constraint $\H_m(S)$ is a vertex-constraint. If, in addition, 
every horizontal strip is symmetric, we say that $S$ has {\em symmetric horizontal vertex-constrained strips}.
The notions of a $2$-dimensional constraint with vertical vertex-constrained strips and symmetric vertical vertex-constrained strips are defined analogously. 

It turns out that $\RWIM$ and $\NAK$ do not have horizontal or
vertical edge-constrained strips, and so the method
in section~\ref{sec:edge-sym} does not apply directly.  We illustrate this only for horizontal
strips for $S=\RWIM$.  Recall from section~\ref{sec:framework} that an edge
constraint is a constraint of memory 1 such that
any two follower sets are either disjoint or identical.
We claim that this condition does not hold for $\H_m(S)$.  To see this,
given any $m$, let $w$ be the all-zeros word of length
$m$ and $u \ne w$ be any other word of length $m$.
Now, the $m \times 2$ arrays
$$
\begin{array}{cc}
w_0 & w_0\\
w_1 & w_1 \\
\ldots & \ldots \\
w_{m-1} & w_{m-1}
\end{array}, ~~~
\begin{array}{cc}
w_0 & u_0\\
w_1 & u_1 \\
\ldots & \ldots \\
w_{m-1} & u_{m-1}
\end{array}, ~~~
\begin{array}{cc}
u_0 & w_0\\
u_1 & w_1 \\
\ldots & \ldots \\
u_{m-1} & w_{m-1}
\end{array},
$$
belong to $S$, yet the $m \times 2$ array
$$
\begin{array}{cc}
u_0 & u_0\\
u_1 & u_1 \\
\ldots & \ldots \\
u_{m-1} & u_{m-1}
\end{array},
$$
does not.
Thus, $w$ and $u$, viewed as $m \times 1$ columns have different but non-disjoint
follower sets.  Thus, $\RWIM$ does not have horizontal edge-constrained strips.

However, it is not hard to show that $\RWIM$ and $\NAK$ have both
symmetric horizontal vertex-constrained strips and symmetric vertical vertex-constrained
strips. For instance, for $S = \RWIM$, $\H_m(S)$ is the vertex constraint defined by
the graph $G = (V,E)$, where $V$ consists of all binary vectors $u_0 \ldots u_{m-1}$ of length
$m$ and $E$ consists of a single edge from
$u \in V$ to $v \in V$
iff for all $i$, whenever $u_i =1$, then $v_{i+1} = v_i = v_{i-1} = 0$ (with the obvious
modification when $i=0$ or $m-1$).  And
$\V_m(S)$ is the vertex constraint defined by the graph of $G' = (V',E')$,
where $V'$ consists of all binary vectors $u_0 \ldots u_{m-1}$ of length
$m$ which do not contain two adjacent `$1$'s and $E'$ consists of a single edge from
$u \in V$ to $v \in V$
iff for all $i$, whenever $u_i =1$, then $v_{i+1}= v_{i-1} = 0$ (again, with the obvious
modification when $i=0$ or $m-1$).  Clearly, both $G$ and $G'$ are symmetric.

Now, let $S$ be a $2$-dimensional constraint over $\Sigma$. For a finite $m\times n$ array $\Gamma$ with $m\geq 1$ and $n\geq 2$ over $\Sigma$ its {\em $[1\times 2]$-higher block recoding} or {\em $[1\times 2]$-recoding} is an $m\times(n-1)$ array $\hat{\Gamma}$ over $\Sigma^{1{\times}2}$ with entries given by
$$
\hat{\Gamma}_{i,j}=\left(\Gamma_{i,j}\;\Gamma_{i,j+1}\right)\;;\; i=0,\ldots,m-1,\,j=0,\ldots,n-2.
$$
We denote by $S^{[1\times 2]}$ the set of all $[1\times 2]$-recodings of arrays in $S$
and refer to it as
the $[1\times 2]$-higher block recoding of $S$.
The $[1\times 2]$-higher block recoding of a constraint is a constraint. This is stated in the following proposition.
\begin{proposition}
Let $S$ be a $2$-dimensional constraint over $\Sigma$. Then $S^{[1\times2]}$ is a $2$-dimensional constraint over $\Sigma^{1{\times}2}$.
\begin{remark*} We may, of course, define, in a similar manner, the $[s\times t]$-higher block recoding of $S$, for any positive integers $s$ and $t$, and the $[s\times t]$-higher block recoding of a $2$-dimensional constraint $S$ is a $2$-dimensional constraint.
\end{remark*}
\end{proposition}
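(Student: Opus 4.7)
The plan is to build an explicit presentation of $S^{[1\times 2]}$ from a given presentation of $S$, in the sense of Section~\ref{sec:framework}. Start with a presentation $\barGG=(\G_1,\G_2)$ of $S$ with common edge set $E$ and common labeling $\ELL:E\rightarrow\Sigma$. The intuition is that each symbol of $\Sigma^{1\times 2}$ in a recoded array corresponds to a horizontally adjacent pair of $\Sigma$-symbols in the original array, which in turn corresponds to a length-$2$ path in $\G_2$.

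Concretely, I would define $\hat{E}$ to be the set of length-$2$ paths in $\G_2$, that is, pairs $(e_1,e_2)\in E^2$ with $\tau_{\G_2}(e_1)=\sigma_{\G_2}(e_2)$, and set $\hat{\ELL}(e_1,e_2)=(\ELL(e_1),\ELL(e_2))\in\Sigma^{1\times 2}$. For the vertical graph $\hat{\G}_1$, take vertex set $V_{\G_1}\times V_{\G_1}$ and declare $\sigma_{\hat{\G}_1}(e_1,e_2)=(\sigma_{\G_1}(e_1),\sigma_{\G_1}(e_2))$, $\tau_{\hat{\G}_1}(e_1,e_2)=(\tau_{\G_1}(e_1),\tau_{\G_1}(e_2))$, so that a column of $\hat{E}$-entries is a path in $\hat{\G}_1$ iff both coordinates of the column are paths in $\G_1$. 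For the horizontal graph $\hat{\G}_2$, take vertex set $E$ and declare $\sigma_{\hat{\G}_2}(e_1,e_2)=e_1$, $\tau_{\hat{\G}_2}(e_1,e_2)=e_2$, so that consecutive entries $(e_1,e_2)$ and $(e_1',e_2')$ in a row satisfy $e_2=e_1'$, i.e.\ horizontally adjacent pairs overlap in the way required by the recoding.

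Next I would verify the two inclusions. Given $\Gamma\in S_{m\times n}$ with an $E$-valued presenting array $\Gamma'$, define $\hat{\Gamma}'_{i,j}=(\Gamma'_{i,j},\Gamma'_{i,j+1})$ for $i\in[m]$, $j\in[n{-}1]$. Every entry lies in $\hat{E}$ since the rows of $\Gamma'$ are paths in $\G_2$; rows of $\hat{\Gamma}'$ are paths in $\hat{\G}_2$ by the overlap built into the definition; columns of $\hat{\Gamma}'$ are paths in $\hat{\G}_1$ because both ``coordinate columns'' $(\Gamma'_{\cdot,j})$ and $(\Gamma'_{\cdot,j+1})$ are paths in $\G_1$; and the label of $\hat{\Gamma}'$ is precisely the $[1\times 2]$-recoding of $\Gamma$. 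Conversely, from any $\hat{\Gamma}'\in\hat{E}^{m\times(n-1)}$ satisfying the new path conditions, reconstruct $\Gamma'\in E^{m\times n}$ by setting $\Gamma'_{i,j}$ to be the first coordinate of $\hat{\Gamma}'_{i,j}$ for $j<n{-}1$ and letting $\Gamma'_{i,n-1}$ be the second coordinate of $\hat{\Gamma}'_{i,n-2}$; the $\hat{\G}_2$-overlap condition makes this well-defined and forces each row of $\Gamma'$ to be a path in $\G_2$, while the $\hat{\G}_1$-condition forces each column to be a path in $\G_1$. Then $\ELL(\Gamma')\in S$ and $\hat{\ELL}(\hat{\Gamma}')$ is its $[1\times 2]$-recoding, so $\hat{\ELL}(\hat{\Gamma}')\in S^{[1\times 2]}$.

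There is no real conceptual obstacle here; the only thing requiring care is the bookkeeping associated with the two different vertex sets of $\hat{\G}_1$ and $\hat{\G}_2$ and the verification that the path conditions on the new presentation translate bijectively to the row/column path conditions for the reconstructed array in the original presentation. Once those checks are carried out, $\hat{\barGG}=(\hat{\G}_1,\hat{\G}_2)$ with labeling $\hat{\ELL}$ presents $S^{[1\times 2]}$, establishing the proposition.
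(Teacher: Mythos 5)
Your construction is essentially identical to the paper's: the paper also takes the new edges to be the length-$2$ paths $(e_0,e_1)$ in the horizontal graph, the vertical graph's vertex set to be $V_\V\times V_\V$ with coordinatewise sources and targets, and the horizontal graph's vertex set to be $E$ with $\sigma=e_0$, $\tau=e_1$. You simply spell out the two-inclusion verification that the paper leaves as ``easy to verify.''
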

\begin{proof}
Set $S'=S^{[1{\times}2]}$ and let $(\G_{\V}, \G_{\H})$ be a presentation of $S$ with $\G_\V=(V_{\V},E,\ELL)$ and
$\G_\H=(V_\H,E,\ELL)$.
We construct labeled graphs $\G'_\V=(V_\V{\times}V_\V, E',\ELL')$ and $\G'_\H=(E,E',\ELL')$ as follows.
The set of edges $E'$ is defined as
$$
E'=\left\{(e_0\;e_1){\in}E^{1{\times}2}:\mbox{$e_0,e_1$ is a path in $\G^{(H)}$}\right\},
$$
and the labeling function $\ELL':E'\rightarrow\Sigma^{1{\times}2}$ is given by
$$
\ELL'(e_0\;e_1)=(\ELL(e_0)\;\ELL(e_1))\;\;;\;\;(e_0\;e_1){\in}E'.
$$
For every edge $(e_0\;e_1){\in}E'$ we define
$$
\begin{array}{lcl}
\sigma_{\G'_\V}(e_0\;e_1)&=&(\sigma_{\G_\V}(e_0),\sigma_{\G_\V}(e_1))\\
\tau_{\G'_\V}(e_0\;e_1)&=&(\tau_{\G_\V}(e_0),\tau_{\G_\V}(e_1))\\
\sigma_{\G'_\H}(e_0\;e_1)&=&e_0\\
\tau_{\G'_\H}(e_0\;e_1)&=&e_1
\end{array}.
$$
It's easy to verify that $(\G'_\V,\G'_\H)$ is a presentation of $S'$.
\end{proof}

Clearly, recoding is an injective mapping, thus $|S_{m\times n}|=|S^{[1\times 2]}_{m\times(n-1)}|$ for all positive integers $m\geq 1,n\geq 2$. It follows that $\capac(S)=\capac(S^{[1\times 2]})$.
The next proposition shows that the $[1\times2]$-higher block recoding of a constraint with symmetric horizontal vertex-constrained strips has symmetric horizontal edge-constrained strips.
\begin{proposition}
\label{prop:vert-to-edge}
Let $S$ be a $2$-dimensional constraint with horizontal vertex-constrained strips.
\begin{enumerate}
 \item \label{prop:vert-to-edge:itm:1} $S^{[1{\times}2]}$ has horizontal edge-constrained strips. Moreover,
$S^{[1{\times}2]}$  has symmetric horizontal edge-constrained strips iff $S$ has symmetric horizontal
vertex-constrained strips.
 \item \label{prop:vert-to-edge:itm:2} $S^{[1\times2]}=\V_1(S^{[1\times2]}){\otimes}\H_1(S^{[1\times2]})$.
\end{enumerate}
\end{proposition}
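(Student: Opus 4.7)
For part~(\ref{prop:vert-to-edge:itm:1}), my plan is to identify $\H_m(S^{[1\times 2]})$ explicitly in terms of the graph $G_m=(V_m,E_m)$ (with no parallel edges) that presents the vertex constraint $\H_m(S)$. The alphabet letters of $\H_m(S^{[1\times 2]})$---namely the $m\times 1$ columns of arrays in $S^{[1\times 2]}$---are $[1\times 2]$-recodings of $m\times 2$ blocks of $S$, so they are in bijection with pairs $(u,v)\in V_m\times V_m$ such that $uv$ is an edge in $G_m$, i.e.\ with $E_m$. Two such letters $(u_1,v_1)$ and $(u_2,v_2)$ can appear consecutively in an array of $S^{[1\times 2]}$ iff $v_1=u_2$, which is exactly the edge-constraint adjacency condition. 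Hence $\H_m(S^{[1\times 2]})$ is isomorphic to $\opX(G_m)$, so $S^{[1\times 2]}$ has horizontal edge-constrained strips. Since symmetry of the vertex constraint $\H_m(S)$ is by definition symmetry of $G_m$, and symmetry of the edge constraint $\opX(G_m)$ is the same, the \emph{iff} follows.

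For part~(\ref{prop:vert-to-edge:itm:2}), the inclusion $S^{[1\times 2]}\subseteq \V_1(S^{[1\times 2]})\otimes \H_1(S^{[1\times 2]})$ is immediate from closure of $S^{[1\times 2]}$ under subarrays: each column (resp.\ row) of an array in $S^{[1\times 2]}$ is itself an $m\times 1$ (resp.\ $1\times n$) subarray, hence lies in $\V_1(S^{[1\times 2]})$ (resp.\ $\H_1(S^{[1\times 2]})$). For the reverse inclusion, I take $\Gamma'\in (\Sigma^{1\times 2})^{m\times n}$ with every column in $\V_1(S^{[1\times 2]})$ and every row in $\H_1(S^{[1\times 2]})$, and write $\Gamma'_{i,j}=(a_{i,j},b_{i,j})$. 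Each row being the $[1\times 2]$-recoding of some $1\times(n{+}1)$ word in $S$ forces the horizontal consistency $b_{i,j}=a_{i,j+1}$ for $j=0,\ldots,n{-}2$. This lets me define $\Gamma\in\Sigma^{m\times(n+1)}$ by $\Gamma_{i,j}=a_{i,j}$ for $j<n$ and $\Gamma_{i,n}=b_{i,n-1}$; by construction its $[1\times 2]$-recoding is $\Gamma'$. The column condition then ensures that every $m\times 2$ subarray $\Gamma_{*,j:j+1}$ lies in $S$. Finally, because $\H_m(S)$ is a vertex constraint and hence has memory $1$, membership of all $2$-column subarrays of $\Gamma$ in $S$ lifts to $\Gamma\in S_{m\times (n+1)}$, so $\Gamma'\in S^{[1\times 2]}$.

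The main obstacle is the reverse inclusion in part~(\ref{prop:vert-to-edge:itm:2}): one must combine horizontal consistency inherited from the row condition with $2$-column validity inherited from the column condition to construct a valid preimage $\Gamma$, and then crucially invoke the memory-$1$ property of the vertex constraint $\H_m(S)$ to lift local $2$-column information to membership of the full array $\Gamma$ in $S$.
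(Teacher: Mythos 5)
Your proof is correct and follows essentially the same route as the paper's: for part~(\ref{prop:vert-to-edge:itm:1}) you identify the columns of $\H_m(S^{[1\times 2]})$ with the edges of the graph defining the vertex constraint $\H_m(S)$ (the paper phrases this as labeling each edge $\blde$ of $G_m^{(\H)}$ by the column formed from the pair $(\sigma(\blde),\tau(\blde))$, and noting the labeling is injective), and for part~(\ref{prop:vert-to-edge:itm:2}) you reconstruct the preimage array row by row using the horizontal overlap consistency, deduce membership of every two-column block from the column condition and injectivity of recoding, and then invoke the memory-$1$ property of $\H_m(S)$ to lift to full membership in $S$ --- exactly the key steps in the paper's argument.
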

\begin{proof}
(\ref{prop:vert-to-edge:itm:1}). Let $m$ be a positive integer, $G^{(\H)}_m=(V^{(\H)}_m,E_m)$
be the graph defining the vertex-constraint $\H_m(S)$ and set $\Delta=\Sigma^{1{\times}2}$,
where $\Sigma$ is the alphabet of $S$. We define a labeling $\ELL:E_m\rightarrow\Delta^{m{\times}1}$
of the edges of $G^{(\H)}_m$. For $\blde{\in}E_m$, we
regard $\sigma(\blde)$ and $\tau(\blde)$ as $m{\times}1$
arrays over $\Sigma$, and define $\ELL(\blde)$ to be the array in $\Delta^{m{\times}1}$
with entries given by
$$
\ELL(\blde)_{(i,0)}=(\sigma(\blde)_{i,0}\;\tau(\blde)_{i,0})\;\;;\;\;i{\in}[m].
$$
It's easily verified that the word generated by every path in the labeled graph
$(G^{(\H)}_m,\ELL)$ is the $[1{\times}2]$-higher block recoding of the array formed
by concatenating the vertices along the path horizontally in sequence. It follows
that $(G^{(\H)}_m,\ELL)$ is a presentation of $\H_m(S^{[1{\times}2]})$. Since the labels of the
edges in $(G^{(\H})_m,\ELL)$ are distinct, we may identify each edge with its label, and it follows
that $\H_m(S^{[1{\times}2]})$ is an edge-constraint. Since the same graph defines both
$\H_m(S)$ and $\H_m(S^{[1{\times}2]})$ (the former as a vertex-constraint and the
latter as an edge-constraint), it follows that $\H_m(S)$ is symmetric iff $\H_m(S^{[1{\times}2]})$ is.
This completes the proof.

(\ref{prop:vert-to-edge:itm:2}). Clearly, $S^{[1\times2]}\subseteq \V_1(S^{[1\times2]}){\otimes}\H_1(S^{[1\times2]})$.
As for the reverse inclusion, let $\hat{\Gamma}\in\V_1(S^{[1\times2]}){\otimes}\H_1(S^{[1\times2]})$ be an $m\times n$ array
over $\Sigma^{1\times2}$. Since every row of $\hat{\Gamma}$ is in $\H_1(S^{[1\times2]})$, every row has a unique $1\times(n+1)$ pre-image under the recoding map. Let $\Gamma$ be the $m\times(n+1)$ array over $\Sigma$, whose $i$th row is the pre-image under the recoding map of the $i$th row of $\hat{\Gamma}$, for $i=0,\ldots,m-1$. Clearly, $\hat{\Gamma}$ is the $[1\times2]$-higher block recoding of $\Gamma$. Thus, it suffices to show that $\Gamma\in S$. For $i=0,1,\ldots,n-1$ clearly, the $m\times 2$ array $\Gamma_{*,i:i+1}$ over $\Sigma$ recodes to the column $\hat{\Gamma}_{*,i}$. By our assumption this column is in $\V_1(S^{[1\times2]})$. Since recoding is injective, $\Gamma_{*,i:i+1}$ must be in $S$. Since this holds for all $i=0,1,\ldots,n-1$ and since $\H_m(S)$ has memory $1$, it follows that $\Gamma\in S$ and therefore $\hat{\Gamma}\in S^{[1\times2]}$.
\end{proof}

We can now use the method described in Section~\ref{sec:edge-sym} to get
lower bounds on $2$-dimensional constraints with symmetric horizontal
vertex-constrained strips. This is stated in the following theorem.
\begin{theorem}
Let $S$ be a $2$-dimensional constraint over an alphabet $\Sigma$ with symmetric horizontal vertex-constrained
strips. Let $\mu{\geq}0$, and $\alpha,p,q{>}0$ be integers, $G_\E=(V_\E,E_\E)$ be the graph defining the vertex-constraint $\H_1(S)$ (hence $V_\E{\subseteq}\Sigma$), and $\phi:\left(V_\E\right)^{\mu{+}\alpha}\rightarrow[0,\infty)$ be a nonnegative function.
For an integer $n{\geq}2$, let $\G_n$ be a labeled graph obtained from a deterministic presentation for $\V_n(S)$ by replacing each edge-label with its $[1{\times}2]$-higher block recoding. Set $\hat{A}_{n,\phi}=A(\I(\mu,\alpha,n{-}1,\G_n,G_\E),\W_\phi)$, where $\I$, $\W_\phi$, and $A(\I, \W_\phi)$ are as defined in Section~\ref{sec:edge-sym}. Then
$$
\capac(S)\geq\frac{\log\lambda(\hat{A}_{p{+}2q{+}1,\phi})-\log\lambda(\hat{A}_{2q{+}1,\phi})}{p\alpha}.
$$
\end{theorem}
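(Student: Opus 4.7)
The plan is to reduce the statement to Theorem~\ref{thm:cap-lower-bound-edge} applied to the $[1{\times}2]$-higher block recoding $S' := S^{[1{\times}2]}$. Three ingredients enable the reduction. First, since the recoding map is injective, $|S_{k\times n}| = |S'_{k\times(n-1)}|$ whenever $n \geq 2$, and therefore $\capac(S) = \capac(S')$, so it suffices to lower-bound $\capac(S')$. Second, by Proposition~\ref{prop:vert-to-edge}(\ref{prop:vert-to-edge:itm:2}) we have $S' = \V_1(S') \otimes \H_1(S')$, and by part~(\ref{prop:vert-to-edge:itm:1}) of the same proposition $\H_1(S')$ is an edge constraint. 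Inspection of the proof of Proposition~\ref{prop:vert-to-edge}(\ref{prop:vert-to-edge:itm:1}) for $m=1$ shows that, after identifying edges with their labels, $\H_1(S') = \opX(G_\E)$, where $G_\E$ is the graph defining the vertex constraint $\H_1(S)$. That same proposition also guarantees that $S'$ has symmetric horizontal edge-constrained strips, since $S$ has symmetric horizontal vertex-constrained strips. Hence $S'$ satisfies the hypotheses of Theorem~\ref{thm:cap-lower-bound-edge} with $T = \V_1(S')$ and $\E = \opX(G_\E)$.

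Next I would identify the deterministic presentations of the vertical strips of $S'$ that are needed in the bound. A deterministic presentation of $\V_n(S)$ is a labeled graph whose edges carry labels in $\Sigma^n$, i.e., rows of width $n$ over $\Sigma$. Applying the $[1{\times}2]$-recoding to each such edge label produces a row of length $n-1$ over $\Sigma^{1\times 2}$, and since the $[1{\times}2]$-recoding is a bijection on rows, the resulting graph --- which is exactly the graph $\G_n$ of the statement --- is again deterministic and it is straightforward to check that it presents $\V_{n-1}(S')$. Consequently, the matrix $A_{n',\phi}$ of Section~\ref{sec:edge-sym}, constructed from a deterministic presentation of $\V_{n'}(S')$ and the graph $G_\E$, coincides with $\hat{A}_{n'+1,\phi}$ when one takes the particular deterministic presentation $\G_{n'+1}$.

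Putting these pieces together, Theorem~\ref{thm:cap-lower-bound-edge} applied to $S'$ with the same $\mu,\alpha,p,q,\phi$ yields
$$
\capac(S') \;\geq\; \frac{\log\lambda(A_{2q+p,\phi})-\log\lambda(A_{2q,\phi})}{p\alpha} \;=\; \frac{\log\lambda(\hat{A}_{p+2q+1,\phi})-\log\lambda(\hat{A}_{2q+1,\phi})}{p\alpha},
$$
and combining with $\capac(S) = \capac(S')$ gives the claim. There is no serious obstacle beyond careful bookkeeping: the only subtle point is the unit shift $n \leftrightarrow n-1$ between presentations of $\V_n(S)$ and $\V_{n-1}(S')$, which is precisely what accounts for the ``$+1$'' appearing in the indices of $\hat{A}$ in the statement. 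The hypothesis $n \geq 2$ on the $\G_n$ is consistent with the bound, since the smallest index that appears is $2q+1 \geq 3$.
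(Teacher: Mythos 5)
Your proof is correct and follows essentially the same route as the paper: pass to the recoding $S' = S^{[1\times 2]}$, invoke Proposition~\ref{prop:vert-to-edge} for the hypotheses of Theorem~\ref{thm:cap-lower-bound-edge}, and observe that recoding the edge labels of a deterministic presentation of $\V_n(S)$ yields a deterministic presentation of $\V_{n-1}(S')$, which accounts for the index shift in $\hat{A}$. The only difference is that you spell out the determinism/presentation check for $\G_n$, which the paper states without elaboration.
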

\begin{proof}
Let $S'=S^{[1{\times}2]}$. By Proposition~\ref{prop:vert-to-edge}, $S'=\V_1(S'){\otimes}\H_1(S')$, and $S'$ has horizontal
symmetric edge-constrained strips. Since $G_\E$ has no parallel edges, we may identifiy each edge $e{\in}E_\E$ with the pair $(\sigma(e),\tau(e))$; then, with this identification, $\H_1(S')=\opX(G_\E)$. Also, note that $\G_{2q+p+1}$ and $\G_{2q+1}$ are deterministic presentations for $\V_{2q+p}(S')$ and $\V_{2q}(S')$, respectively. The result follows from Theorem~\ref{thm:cap-lower-bound-edge} applied to $S'$.
\end{proof}

\section{Capacity bounds for axial products of constraints.}
\label{sec:sofic-prod}
In this section we show how the method described in Section~\ref{sec:edge-sym}
can be applied to axial products of certain
$1$-dimensional constraints. Let $S$ and $T$ be two $1$-dimensional constraints over an alphabet $\Sigma$. We wish to lower bound the capacity of the $2$-dimensional constraint $T{\otimes}S$. To this end, we pick a lossless presentation $\G_S=(G_S,\ELL_S)$, with $G_S=(V_S,E_S)$, for $S$. We extend the function $\ELL_S$ to multidimensional arrays over $E_S$ in the manner described in Section~\ref{sec:framework}, and for a set $A\subseteq\Sigma^*$, we denote by $\ELL_S^{-1}(A){\subseteq}E_S^*$ the inverse image of $A$ under this map, namely
$$
\ELL_S^{-1}(A)=\left\{w\in E_S^* : \ELL_S(w){\in}A\right\}.
$$
The following proposition shows that we can reduce the problem of calculating the capacity of $T{\otimes}S$ to that of calculating the capacity of $\ELL_S^{-1}(T){\otimes}\opX(G_S)$.
\begin{proposition}
\label{prop:lift}
Let $S, T$ be two $1$-dimensional constraints and let $\opX(G_S)$ and $\ELL_S^{-1}(T)$ be
 as defined above. Then
\begin{enumerate}
 \item \label{itm:lift-sofic} $\ELL_S^{-1}(T)$ is a $1$-dimensional constraint.
 \item \label{itm:lift-same-cap} $\capac(T{\otimes}S)=\capac(\ELL_S^{-1}(T){\otimes}\opX(G_S))$.
\end{enumerate}
\end{proposition}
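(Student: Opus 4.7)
The plan is to prove part~\ref{itm:lift-sofic} by constructing an explicit presentation of $\ELL_S^{-1}(T)$ as a fiber product of a presentation of $T$ with the graph $G_S$, and to prove part~\ref{itm:lift-same-cap} by extending $\ELL_S$ entrywise to arrays and showing that the restriction of the extended map to $(\ELL_S^{-1}(T){\otimes}\opX(G_S))_{m\times n}$ is a surjection onto $(T{\otimes}S)_{m\times n}$ whose fibers are small enough not to affect the capacity limit.

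For part~\ref{itm:lift-sofic}, I would fix any presentation $\G_T=(G_T,\ELL_T)$ of $T$ with $G_T=(V_T,E_T)$, and define a labeled graph $\G'$ whose vertex set is $V_T$, whose edges are the pairs $(f,e){\in}E_T{\times}E_S$ satisfying $\ELL_T(f){=}\ELL_S(e)$ (with initial and terminal vertices $\sigma_{G_T}(f)$ and $\tau_{G_T}(f)$), and whose labeling sends $(f,e)$ to $e$. A word $e_1{\ldots}e_n{\in}E_S^*$ is generated by a path in $\G'$ iff there exist $f_1,{\ldots},f_n{\in}E_T$ with $\ELL_T(f_i){=}\ELL_S(e_i)$ for every $i$ such that $f_1{\ldots}f_n$ is a path in $\G_T$; since $\G_T$ presents $T$, this is equivalent to $\ELL_S(e_1){\ldots}\ELL_S(e_n){\in}T$, i.e., $e_1{\ldots}e_n{\in}\ELL_S^{-1}(T)$. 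Hence $\G'$ presents $\ELL_S^{-1}(T)$.

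For part~\ref{itm:lift-same-cap}, I first verify that the entrywise extension of $\ELL_S$ sends $(\ELL_S^{-1}(T){\otimes}\opX(G_S))_{m\times n}$ into $(T{\otimes}S)_{m\times n}$: each row (direction~$2$) of a lift $\Gamma'$ is a path in $G_S$, so its image is in $S$; each column (direction~$1$) of $\Gamma'$ is in $\ELL_S^{-1}(T)$, so its image is in $T$. Next, I show surjectivity: given $\Gamma{\in}(T{\otimes}S)_{m\times n}$, for each row $i$ I pick any path in $\G_S$ generating that row (which exists because $\G_S$ presents $S$) and set $\Gamma'_{i,j}$ to its $j$-th edge; every row of $\Gamma'$ is then a path in $G_S$, and applying $\ELL_S$ to each column of $\Gamma'$ recovers the corresponding column of $\Gamma$, which lies in $T$, so $\Gamma'$ is a valid preimage.

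The main substance of the argument, and where care is needed, is the fiber bound from losslessness. For a fixed word $w{\in}S$, losslessness of $\G_S$ forces any two distinct paths generating $w$ to have different (start,end) vertex pairs, so at most $|V_S|^2$ paths in $\G_S$ generate $w$. Since preimages of an $m{\times}n$ array $\Gamma$ correspond to independent choices, one per row, of a generating path, $\Gamma$ has at most $|V_S|^{2m}$ preimages under the entrywise map. Combined with surjectivity this gives
\[
|(T{\otimes}S)_{m\times n}|\leq|(\ELL_S^{-1}(T){\otimes}\opX(G_S))_{m\times n}|\leq|V_S|^{2m}\cdot|(T{\otimes}S)_{m\times n}|.
\]
Taking $\log$, dividing by $mn$, and letting $m,n{\to}\infty$, the correction term is $(2\log|V_S|)/n{\to}0$, so both capacity limits coincide, proving part~\ref{itm:lift-same-cap}.
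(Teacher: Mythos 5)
Your proof is correct and takes essentially the same approach as the paper: for part~\ref{itm:lift-sofic} you build the same fiber-product presentation $\G'$ (vertices $V_T$, edges $(f,e)$ with $\ELL_T(f)=\ELL_S(e)$, labeled by $e$), and for part~\ref{itm:lift-same-cap} you use the same row-by-row lifting argument with the losslessness bound giving $1\leq|P_\Delta|\leq|V_S|^{2m}$ for the fibers. The only cosmetic difference is that you spell out the well-definedness, surjectivity, and fiber size as three separate checks, while the paper compresses them into a single biconditional characterization of $P_\Delta$ followed by the counting bound.
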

\begin{proof}
\ref{itm:lift-sofic}. Let $\G_T=(V_T,E_T,\ELL_T)$ be a presentation of $T$. We shall construct a presentation $\F=(V_T,E_\F,\ELL_\F)$ of $\ELL_S^{-1}(T)$. The set of edges is given by
$E_\F=\{(e_T, e_S){\in}E_T{\times}E_S:\ELL_T(e_T){=}\ELL_S(e_S)\}$, and for an edge $(e_T,e_S){\in}E_\F$, $\sigma_{\F}(e_T,e_S)=\sigma_{\G_T}(e_T)$, $\tau_{\F}(e_T, e_S)=\tau_{\G_T}(e_T)$ and $\ELL_\F(e_T,e_S)=e_S$.
It is easily verified that $\ELL_S^{-1}(T)$ is presented by $\F$, and therefore it is a $1$-dimensional constraint.

\ref{itm:lift-same-cap}. We set $R=T{\otimes}S$, $U=\ELL_S^{-1}(T){\otimes}\opX(G_S)$. For an array
$\Delta{\in}R_{m{\times}n}$, define $P_\Delta=\{\Gamma{\in}U_{m{\times}n}:\ELL_S(\Gamma)=\Delta\}$, we claim that
\begin{equation}
\label{eq:Pdelta-bounds}
1{\leq}|P_\Delta|\leq|V_S|^{2m}.
\end{equation}
Indeed, it's easily verified that an array $\Gamma{\in}E_S^{m{\times}n}$ is in $P_\Delta$ iff for all $i{\in}[m]$ the row $(\Gamma_{i,j})_{j=0}^{n{-}1}$ is a path in $\G_S$ that generates $(\Delta_{i,j})_{j=0}^{n{-}1}$. Since $\G_S$ is a lossless presentation of $S$, for every $i{\in}[m]$, there is at least one path in $\G_S$ generating $(\Delta_{i,j})_{j=0}^{n{-}1}$ and at most $|V_S|^2$ such paths; the claim follows. Now, clearly for any $\Gamma{\in}U_{m{\times}n}$ the array $\ELL_S(\Gamma)$ is in $R_{m{\times}n}$. It follows that the sets $P_\Delta$, for $\Delta{\in}R_{m{\times}n}$ form a partition of $U_{m{\times}n}$, and we have
$$
\left|U_{m{\times}n}\right|=
\sum_{\Delta{\in}R_{m{\times}n}}|P_\Delta|.
$$
Therefore, by~(\ref{eq:Pdelta-bounds}), we get
$$
\left|R_{m{\times}n}\right|\leq
\left|U_{m{\times}n}\right|\leq
\left|R_{m{\times}n}\right||V_S|^{2m},
$$
and it follows from~(\ref{eq:capacity}) that $\capac(R)=\capac(U)$.
\end{proof}

Therefore if $\ELL_S^{-1}(T){\otimes}\opX(G_S)$ has symmetric horizontal edge-constrained strips, we can apply the method of Section~\ref{sec:edge-sym} to obtain lower bounds on $\capac(T{\otimes}S)$. In this case, it also follows from Remark~\ref{rem:cw-upper-bounds} of Theorem~\ref{thm:cap-lower-bound-edge}, that the method of \cite{Calkin-Wilf} for obtaining upper bounds on the capacity of the hard-square constraint, can be used to obtain upper bounds on $\capac(T{\otimes}S)$. Proposition~\ref{prop:edge-sym-cond-1} present a sufficient condition for $\ELL_S^{-1}(T){\otimes}\opX(G_S)$ to have symmetric horizontal edge-constrained strips.
Here we give another stronger sufficient condition involving only the presentation $\G_S$. We say that a labeled graph $(G,\ELL)$, with $G=(V,E)$, is {\em symmetric as a labeled graph}, if there exists an edge-reversing matching $R\in\R(G)$ which preserves $\ELL$, that is $\ELL(R(e))=\ELL(e)$ for all $e\in E$.
We assume now that $\G_S$ is symmetric as a labeled graph, and that $R\in\R(V_S,E_S)$ is an
edge-reversing matching which preserves $\ELL_S$. Since for any positive integer $m$ and $e_1{\ldots}e_m{\in}E_S^m$, the label $\ELL(e_1){\ldots}\ELL(e_m)=\ELL(R(e_1))\ldots\ELL(R(e_m))$, it follows that $e_1{\ldots}e_m{\in}\ELL_S^{-1}(T)$ iff $R(e_1){\ldots}R(e_m){\in}\ELL_S^{-1}(T)$. Consequently, the hypothesis of Proposition~\ref{prop:edge-sym-cond-1} holds and we have the following corollary.
\begin{corollary}
If $\G_S$ is symmetric as a labeled graph then $\ELL_S^{-1}(T){\otimes}\opX(G_S)$ has
symmetric horizontal edge-constrained strips.
\end{corollary}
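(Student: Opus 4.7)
The plan is to derive this directly as a specialization of Proposition~\ref{prop:edge-sym-cond-1}. First I would identify the objects in that proposition with the present setting: the edge-constraint $\E$ there becomes $\opX(G_S)$ (so $G_\E = G_S$, $V_\E = V_S$, $E_\E = E_S$), and the auxiliary $1$-dimensional constraint there (called $T$) becomes $\ELL_S^{-1}(T)$, which is indeed $1$-dimensional by part~\ref{itm:lift-sofic} of Proposition~\ref{prop:lift}. Under this identification the target constraint $T \otimes \E$ of Proposition~\ref{prop:edge-sym-cond-1} is precisely $\ELL_S^{-1}(T){\otimes}\opX(G_S)$.

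Next I would fix an edge-reversing matching $R \in \R(G_S)$ that preserves $\ELL_S$, which exists by the hypothesis that $\G_S$ is symmetric as a labeled graph. The one thing left to check is the hypothesis of Proposition~\ref{prop:edge-sym-cond-1}: for every positive integer $m$ and word $e_1 \ldots e_m \in E_S^m$, if $e_1 \ldots e_m \in \ELL_S^{-1}(T)$ then $R(e_1) \ldots R(e_m) \in \ELL_S^{-1}(T)$ as well. But label-preservation gives
$$
\ELL_S(R(e_1)) \ldots \ELL_S(R(e_m)) \;=\; \ELL_S(e_1) \ldots \ELL_S(e_m),
$$
so the two words have the exact same image under $\ELL_S$, and membership of one in $\ELL_S^{-1}(T)$ forces membership of the other. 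This is exactly the hypothesis of Proposition~\ref{prop:edge-sym-cond-1}, which then delivers the conclusion.

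There is no real obstacle here; the work is essentially bookkeeping to line up the notation, and the label-preservation property of $R$ reduces the check to a one-line verification. The argument is already sketched in the paragraph preceding the corollary statement, so the formal proof is just a matter of writing down the application of Proposition~\ref{prop:edge-sym-cond-1} cleanly.
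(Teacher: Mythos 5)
Your proposal is correct and matches the paper's own argument exactly: the paragraph preceding the corollary verifies the hypothesis of Proposition~\ref{prop:edge-sym-cond-1} via the label-preserving edge-reversing matching in precisely the way you describe, and then invokes that proposition with $\E = \opX(G_S)$ and the $1$-dimensional constraint $\ELL_S^{-1}(T)$. No discrepancy to report.
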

Since the presentation in Figure~\ref{fig:even-presentation} is symmetric as a labeled graph,
we can apply the method of Section~\ref{sec:edge-sym} to get lower bounds on the capacity of all
constraints $T{\otimes}\EVEN$ for any $1$-dimensional
constraint $T$.

Let $S=\CHARGE(b_1)$ and let $T=\CHARGE(b_2)$ for integers $b_1,b_2\geq 2$.
Let $\G_S=(G_S,\ELL_S)$, with $G_S=(V_S,E_S)$ be the presentation given in
Figure~\ref{fig:chg-presentation} for $b=b_1$. Evidently, $\G_S$ is symmetric
with exactly one edge-reversing matching, $R:E_S{\rightarrow}E_S$. Fix a positive
integer $m$ and let $\blde=e_1e_2{\ldots}e_m{\in}E_S^m$.
Obviously,
$T$ is closed under negation of words (i.e., negating each symbol), and we have
$$
\begin{array}{cl}
               &e_1e_2{\ldots}e_m{\in}\ELL_S^{-1}(T)\\
\Longleftrightarrow&\ELL_S(e_1)\ELL_S(e_2){\ldots}\ELL_S(e_m){\in}T\\
\Longleftrightarrow&(-\ELL_S(e_1))(-\ELL_S(e_2)){\ldots}(-\ELL_S(e_m)){\in}T\\
\Longleftrightarrow&\ELL_S(R(e_1))\ELL_S(R(e_2)){\ldots}\ELL_S(R(e_m)){\in}T\\
\Longleftrightarrow&R(e_1)R(e_2){\ldots}R(e_m){\in}\ELL_S^{-1}(T).
\end{array}
$$
Consequently, it follows by Proposition~\ref{prop:edge-sym-cond-1}
that $\ELL_S^{-1}(T){\otimes}\opX(G_S)$ has symmetric horizontal
edge-constrained strips and we can apply the method of Section~\ref{sec:edge-sym}
to obtain lower bounds on the capacity of $\CHARGE(b_2)\otimes\CHARGE(b_1)$.

The reader will note a similarity in the constructions in proofs of
Propositions~\ref{axial-is-a-constraint}
and~\ref{prop:lift}.  Indeed, as an alternative approach, one may be able to use the
construction in
Proposition~\ref{axial-is-a-constraint} to obtain bounds on $\capac(S\otimes T)$:
namely, if $G_1$ and $G_2$ are the
underlying graphs of a capacity-preserving presentation $(\G_1,\G_2)$ of $S\otimes T$
and
$\opX(G_1) \otimes \opX(G_2)$ has  symmetric horizontal edge-constrained strips.
However, the approach given by Proposition~\ref{prop:lift} seems to be more direct and simpler than
the alternative approach.

\section{Heuristics for choosing $\phi$.}
\label{sec:heuristics}
In this section, we use the notation defined in Section~\ref{sec:edge-sym}, and assume that $S=T{\otimes}\E$ is a $2$-dimensional constraint with symmetric horizontal edge-constrained strips, where $\E$ is an edge constraint.
We describe heuristics for choosing the function $\phi$ to obtain ``good'' lower bounds on the capacity of $S$.

\subsection{Using max-entropic probabilites.}
\label{sec:heuristics-max-ent}
Recall that a vertex of a directed graph is isolated if no edges in the graph are connected to it. Note, that since $G^{(\H)}_m$ is symmetric, every vertex is either isolated or has both incoming and outgoing edges. We assume here that for every positive integer $m$, ignoring isolated vertices, $G^{(\H)}_m$ is a primitive graph.
In this case, the Perron eigenvector of $H_m$ is unique up to multiplication by a scalar. Let $\bldr_m$ be the right Perron eigenvector of $H_m$ normalized to be a unit vector in the $L_2$-norm. Observe, that substituting $\bldr_m$ for $\bldx_m$ satisfies~(\ref{eq:lambda_Hm_lower_bound}) with an equality. This motivates us to choose $\phi$ so that the resulting vector $\bldz_m^\phi$ approximates $\bldr_m$. Since $G^{(\H)}_m$ (without its isolated vertices) is irreducible, there is a unique stationary probability measure having maximum entropy on arrays of $\H_m$, namely the max-entropic probability measure on $\H_m$. We denote it here by ${\Pr}^{*,m}$. It is given by
$$
{\Pr}^{*,m}(\Gamma)=\frac{\left(\bldr_m\right)_{\sigma(\Gamma)}\left(\bldr_m\right)_{\tau(\Gamma)}}
{\lambda(H_m)^\ell}.
$$
for $\Gamma\in S_{m\times\ell}$, for some positive integer $\ell$, and where $\sigma(\Gamma),\tau(\Gamma)\in V_\E^m$
are given by
$$
\begin{array}{lcl}
\left(\sigma(\Gamma)\right)_i&=&\sigma(\Gamma_{i,0})\\
\left(\tau(\Gamma)\right)_i&=&\tau(\Gamma_{i,\ell-1})
\end{array}
\;\;;\;\;\;i=0,1,2,\ldots,m{-}1.
$$
Let $\bldV^{(m)}$ be a random variable taking values in $V_\E^m$ with distribution given by
$$
\Pr(\bldV^{(m)}{=}\bldv)={\Pr}^{*,m}\left\{\Gamma\in S_{m\times 1}:\sigma(\Gamma)=\bldv\right\} \;\;;\;\;\bldv\in V_\E^m.
$$
It's easily verified that
\begin{equation}
\label{eq:prob-pvec-rel}
\Pr(\bldV^{(m)}{=}\bldv)=\left((\bldr_m)_\bldv\right)^2.
\end{equation}

Thus approximating $\Pr(\bldV^{(m)}=\bldv)$ and taking a square root will give us an approximation for $(\bldr_m)_\bldv$.
Roughly speaking, $\Pr(\bldV^{(m)}=\bldv)$ is the probability of seeing the column of vertices $\bldv$
in the ``middle'' of an $m\times\ell$ array chosen uniformaly at random from $S_{m\times\ell}$, for large $\ell$.
Fix integers $\mu\geq0, \alpha\geq1$ as in Section~\ref{sec:edge-sym}, and assume now that $m=m_k=\mu+k\alpha$, for a positive integer $k$. For an integer $0{\leq}s{<}m$ and vectors $\bldu{\in}V_\E^\ell,\bldw{\in}V_\E^r$, with lengths satisfying $\ell{\leq}m{-}s, r{\leq}s$, denote by $p_s^{(m)}(\bldu)$ and $p_s^{(m)}(\bldu|\bldw)$ the probabilities given by
\begin{eqnarray*}
p_s^{(m)}(\bldu)&=&\Pr(\bldV^{(m)}_{s:s+\ell-1}=\bldu) \\
p_s^{(m)}(\bldu|\bldw)&=&\Pr(\bldV^{(m)}_{s:s+\ell-1}=\bldu|\bldV^{(m)}_{s-r:s-1}=\bldw).
\end{eqnarray*}
Then by the chain rule for conditional probability we have, for any vector $\bldv{\in}V_\E^m$,
$$
\Pr(\bldV^{(m)}{=}\bldv)=p_0^{(m)}(\bldv_{0:\mu{-}1})\prod_{i=0}^{k-1}p_{\mu{+}i\alpha}^{(m)}(\bldv_{\mu{+}i\alpha:\mu{+}(i{+}1)\alpha{-}1}|\bldv_{0:\mu{+}i\alpha{-}1}).
$$
A plausible way to approximate $\Pr(\bldV=\bldv)$, is by treating $\bldV$ as the outcome of a Markov process. Here we use a Markov process with memory $\mu$, and assume that $p_s(\bldu|\bldw)$ can be ``well'' approximated by $p_s(\bldu|\bldw_{r{-}\mu:r{-}1})$, for vectors $\bldu{\in}V_\E^\ell,\bldw{\in}V_\E^r$, with $r,\ell$ as above, and $r{\geq}\mu$. Using this approximation we get
$$
\Pr(\bldV^{(m)}{=}\bldv)\approx p_0^{(m)}(\bldv_{0:\mu{-}1})
\prod_{i=0}^{k-1}p_{\mu{+}i\alpha}^{(m)}(\bldv_{\mu{+}i\alpha:\mu{+}(i{+}1)\alpha{-}1}|\bldv_{i\alpha:\mu{+}i\alpha{-}1}).
$$

We hypothesize that for fixed vectors $\bldu{\in}V_\E^\alpha,\bldw{\in}V_\E^\mu$, as $m$ gets large, the conditional probabilities $p^{(m)}_{s}(\bldu|\bldw)$, for $0{\ll}s{\ll}m{-}1$, are ``approximately equal'' to the value when $s$ is in the ``middle'' of the interval $[0,m-1]$. We hypothesize that this holds for ``most'' of the integers $s$ in that inteval and moreover that this middle value converges as $m$ gets large.
Accordingly, we try to approximate the conditional probability $p^{(m)}_s(\bldu|\bldw)$ by the conditional probability found in the ``middle'' of a ``tall'' horizontal strip. More precisely, we fix an integer $\delta\geq 0$, set $\omega{=}2\delta{+}\mu{+}\alpha$, and approximate $p^{(m)}_s(\bldu|\bldw)$ by $p^{(\omega)}_{\delta+\mu}(\bldu|\bldw)$. We also approximate $p_0^{(m)}(\bldw)$ by $p_0^{(\omega)}(\bldw)$. This gives us
$$
\Pr(\bldV^{(m)}{=}\bldv)\approx p_0^{(\omega)}(\bldv_{0:\mu{-}1})
\prod_{i=0}^{k-1}p_{\delta+\mu}^{(\omega)}(\bldv_{\mu{+}i\alpha:\mu{+}(i{+}1)\alpha{-}1}|
\bldv_{i\alpha:\mu{+}i\alpha{-}1}),
$$
which, by~(\ref{eq:prob-pvec-rel}), implies that
\begin{equation}
\label{eq:approx-rm}
\left(\bldr_m\right)_\bldv\approx\sqrt{p_0^{(\omega)}(\bldv_{0:\mu{-}1})}
\prod_{i=0}^{k-1}\sqrt{p_{\delta+\mu}^{(\omega)}(\bldv_{\mu{+}i\alpha:\mu{+}(i{+}1)\alpha{-}1}|
\bldv_{i\alpha:\mu{+}i\alpha{-}1})}\;\;;\;\;\bldv{\in}V_\E^{m_k}.
\end{equation}
Set $F_m{=}|V_\E|^m$, and denote by $\widetilde{\bldr_{m_k}}{\in}\RR^{F_{m_k}}$ the nonnegative real vector with entries indexed by $V_\E^{m_k}$ and given by the RHS of equation~(\ref{eq:approx-rm}). Let $\phi:(V_\E)^{\mu+\alpha}\rightarrow[0,\infty)$ be given by
\begin{equation}
\label{eq:markov-prob-phi}
\phi(\bldu)=\sqrt{p_{\delta+\mu}^{(\omega)}(\bldu_{\mu:\mu+\alpha-1}|\bldu_{0:\mu-1})}\;\;;\;\;
\bldu{\in}(V_\E)^{\mu+\alpha},
\end{equation}
and let $\bldz_{m_k}^\phi\in\RR^{F_{m_k}}$ be the vector with entries indexed by $V_\E^{m_k}$ and given by~(\ref{eq:bldzmk-def}). Setting $\bldx_{m_k}=\bldz^\phi_{m_k}$, we obtain
$$
\left(\bldr_{m_k}\right)_\bldv\approx\left(\widetilde{\bldr_{m_k}}\right)_\bldv=
\left(\bldx_{m_k}\right)_\bldv\sqrt{p_0^{(\omega)}(\bldv_{0:\mu{-}1})}\;\;;\;\;\;\bldv{\in}(V_\E)^{m_k}.
$$
Now for $m_k{\geq}\omega$, if $\bldv{\in}V_\E^{m_k}$ is not an isolated vertex in $G_{m_k}$, then clearly, $\bldv_{0:\omega-1}$ is not an isolated vertex in $G_\omega$ as well. Therefore $(\bldr_\omega)_{\bldv_{0:\omega-1}}{>}0$, which implies that $p_0^{(w)}(\bldv_{0:\omega{-}1}){>}0$ and thus $p_0^{(\omega)}(\bldv_{0:\mu{-}1}){>}0$.
Let $p_{\min}=p_{\min}^{(w)}=\min\{p_0^{(\omega)}(\bldw):\mbox{$\bldw{\in}V_\E^\mu$ and $p_0^{(\omega)}(\bldw){>}0$}\}$. It follows that for all vertices $\bldv{\in}\left(V_\E\right)^{m_k}$ of $G_{m_k}$ that are not isolated, we have
$$
\left(\widetilde{\bldr_{m_k}}\right)_\bldv\geq \sqrt{p_{\min}} \left(\bldx_{m_k}\right)_\bldv.
$$
Now, for any positive integer $\ell$ and $F_{m_k}{\times}1$-real vector,
the product $\tr{\bldy}H_m^\ell\bldy$ depends only on the values of the entries of $\bldy$ indexed by non-isolated vertices of $G_{m_k}$. Consequently, we may write
$$
{p_{\min}}\tr{\bldx_{m_k}}H_{m_k}^\ell\bldx_{m_k}\leq
\tr{\widetilde{\bldr_{m_k}}}H_{m_k}^\ell\widetilde{\bldr_{m_k}}\leq
\tr{\bldx_{m_k}}H_{m_k}^\ell\bldx_{m_k},
$$
for all positive integers $\ell$. Taking the $\log$, dividing by $m_k$, and taking the limit as $k$ approaches infinity, we obtain
$$
\lim_{k\rightarrow\infty}{\frac{\log\tr{\widetilde{\bldr_{m_k}}}H_{m_k}^\ell\widetilde{\bldr_{m_k}}}{m_k}}=
\lim_{k\rightarrow\infty}{\frac{\log\tr{\bldx_{m_k}}H_{m_k}^\ell\bldx_{m_k}}{m_k}},
$$
where by Lemma~\ref{lemma:xHx-growth-rate}, the limit in the RHS exists.
Thus, choosing $\phi$ as given by~(\ref{eq:markov-prob-phi}) and computing the lower bound by the method
described in Section~\ref{sec:edge-sym} is equivalent to computing the limit of the lower bound in~(\ref{eq:lambda_Hm_lower_bound}), with $\widetilde{\bldr_m}$ substituted for $\bldx_m$, as $m$ approaches infinity. If $\widetilde{\bldr_m}$ approximates $\bldr_m$ well enough, we expect to get good bounds. Note that we may use the heuristic described here even for constraints for which the graphs $G^{(\H)}_m$ are not always irreducible. In this case, the geometric multiplicity of the Perron eigenvalue may be larger than $1$, and there may be more than one choice of the vector $\bldr_\omega$ in the computation of $p_{\delta{+}\mu}^{(\omega)}(\cdot|\cdot)$. Regardless of our choice, we will get a nonnegative function $\phi$ and a lower bound on the capacity. In Section~\ref{sec:numerical-results} we show numerical results obtained using the heuristic described here for several constraints.

\subsection{General optimization.}
\label{sec:heuristics-gen-opt}
We may also use general optimization techniques to find functions $\phi$ which maximize the lower bound on the capacity.
Fix integers $\mu{\geq}0$ and $p,q,\alpha{>}0$, and for a positive integer $\ell$, set $\D_{\ell}=(V_\E)^\ell$. In this subsection, we identify a function $\phi{:}\D_{\mu{+}\alpha}{\rightarrow}\RR$ with a real vector $\phi{\in}\RR^{|\D_{\mu+\alpha}|}$ indexed by $\D_{\mu{+}\alpha}$; for each $\bldj{\in}\D_{\mu{+}\alpha}$ we identify $\phi(\bldj)$ with the entry $\phi_\bldj$. 
For a positive integer $n$, let $\G_n$ be a deterministic presentation for $\V_n(S)$, $\I_n=\I(\mu,\alpha,n,\G_n,G_\E)$, and for a function $\phi:\D_{\mu+\alpha}{\rightarrow}[0,\infty)$, set $A_{n,\phi}=A(\I_n,\W_\phi)$.
Observe that for a scalar $c\in[0,\infty)$, $A_{n,c\phi}=c^2A_{n,\phi}$. It follows that using $c\phi$ in place of $\phi$ in equation~(\ref{eqn:cap-lower-bound}) of Thereom~\ref{thm:cap-lower-bound-edge} does not change the lower bound. Consequently, (as $\phi$ cannot be the constant $0$ function), it's enough to consider functions $\phi$ whose images (of all vectors in $(V_\E)^{\mu+\alpha}$) sum to $1$. We thus have the following optimization problem.
\begin{equation}
\label{eq:phi-optimization}
\begin{array}{ll}
\mbox{maximize}&
\left(\log\lambda(A_{2q{+}p,\phi})-\log\lambda(A_{2q,\phi})\right)/
\left(p\alpha\right)\\
\mbox{subject to}&
\phi\geq\bldzero,\\
&\phi\cdot\bldone=1,
\end{array}
\end{equation}
where $\bldzero$ and $\bldone$ denote the real vectors of size $|\D_{\mu+\alpha}|$ with every entry equal to
$0$ and $1$ respectively, and for two real vectors of the same size, $\bldt,\bldr$ we write $\bldt\geq\bldr$ or $\bldt{>}\bldr$ if the corresponding inequality holds entrywise.

Finding a global solution for a general optimization problem can be hard. We proceed to show that if we replace the constraint $\phi{\geq}\bldzero$ with $\phi{>}\bldzero$ in~(\ref{eq:phi-optimization}), thereby changing the feasable set and possibly decreasing the optimal solution, it can be formulated as an instance of a particular class of optimization problems known as ``DC optimization'' which may be easier to solve. Let $d$ be a positive integer.
A real-valued function $f:\RR^d\rightarrow\RR$ is called a {\em DC} (difference of convex) function, if it can be written
as the difference of two real-valued convex functions on $\RR^d$. An optimization problem of the form
$$
\label{eq:DC-optimization}
\begin{array}{ll}
\mbox{maximize}& f(x)\\
\mbox{subject to}& x{\in}X,\\
& h_i(x)\leq 0\;;\;\;i=0,1,\ldots,\ell,
\end{array}
$$
where $X\subseteq\RR^d$ is a convex closed subset of $\RR^d$ and the functions $f,h_0,\ldots,h_\ell$ are DC functions, is called a {\em DC optimization} or {\em DC programming} problem. See \cite{Horst-Thoai} and the references within for an overview of the theory of DC optimization.

A nonnegative function $f:\RR^d\rightarrow[0,\infty)$ is called {\em log-convex} or {\em superconvex}, if either $f(\bldt){>}0$ for all $\bldt{\in}\RR^d$ and $\log{f}$ is convex in $\RR^d$, or $f{\equiv}0$. A log-convex function is convex, and in~\cite{Kingman}, it is shown that the class of log-convex functions is closed under addition, multiplication, raising to positive real powers taking limits, and additionally that for a square matrix $A(\bldt)=(a_{i,j}(\bldt))$ whose entries are log-convex functions $a_{i,j}:\RR^d\rightarrow[0,\infty)$, the function $\bldt\rightarrow\lambda(A(\bldt))$ is log-convex as well.

Now, observe, that for a positive integer $n$, every entery of $A_{n,\phi}$ is a quadratic form in the entries $\phi(\bldj)$, $\bldj{\in}\D_{\mu+\alpha}$, with nonnegative integer coefficients. Such a function is generally not log-convex. To fix this, we perform the change of variables $\phi=e^\psi$, where $\psi$ is a real-valued function $\psi:\D_{\mu{+}\alpha}\rightarrow\RR$. Note that by doing so, we added the constraint $\phi{>}\bldzero$. Since every entry of $\phi$ is now positive, we may replace the constraint $\phi\cdot\bldone=1$ by the constraint $\phi(\bldv_0)=1$ or equivalently $\psi(\bldv_0)=0$, for some fixed $v_0{\in}\D_{\mu{+}\alpha}$. Problem~(\ref{eq:phi-optimization}) with the additional constraint $\phi{>}\bldzero$, can now be rewritten as
\begin{equation}
\label{eq:phi-optimization-DC}
\begin{array}{ll}
\mbox{maximize}&
\left(\log\lambda(A_{2q{+}p,{e^\psi}})-\log\lambda(A_{2q,{e^\psi}})\right)/
\left(p\alpha\right)\\
\mbox{subject to}&\psi(\bldv_0)=0.
\end{array}
\end{equation}
Obviously, we may substitue the maximization problem constraint, $\psi(\bldv_0)=0$, above into the objective function, thereby reducing the number of variables by $1$; however, this is not relevant for the
discussion, so, for simplicity, we do not do so here. Now, for a positive integer $n$, the entries of the matrix $A_{n,e^{\psi}}$ are of the form
$$
\sum_{k=1}^{q_{i,j}}{e^{\psi(\bldw_{k,i,j})+\psi(\bldu_{k,i,j})}},
$$
where $q_{i,j}$ are nonnegative integers, and $\bldw_{k,i,j}$ and $\bldu_{k,i,j}$ are vectors in $\D_{\mu{+}\alpha}$, for all $i,j{\in}V_{\I_n}$ and integers $1{\leq}k{\leq}q_{i,j}$. It can be verified that a function of such a form is log-convex in $\psi$. It follows that the function  $\psi{\rightarrow}\lambda(A_{n,e^{\psi}})$ for $\psi\in\RR^{|\D_{\mu{+}\alpha}|}$ is log-convex as well. Therefore either $\lambda(A_{n,e^{\psi}}){\equiv}0$, or $\lambda(A_{n,e^{\psi}}){>}0$ for all $\psi\in\RR^{|\D_{\mu{+}\alpha}|}$. In particular, for $\psi{\equiv}0$ the matrix $A_{n,\bldone}$ is the adjacency matrix of the graph $\I_n$. Since $\I_n$ is deterministic, and for every nonnegative integer $\ell$, the set of labels of its paths of length $\ell$ is $S_{\ell\alpha{\times}n}$, it follows that $(1{/}\alpha)\log\lambda(A_{n,\bldone}){=}\capac(\V_n(S))\geq\capac(S)$ (the latter inequality follows from~(\ref{eq:capac-inf})). Hence, if $\capac(S){>}{-}\infty$ (or equivalently $\capac(S){\geq}0$), then $\lambda(A_{n,e^\psi}){>}0$ for all $\psi\in\RR^{|\D_{\mu+\alpha}|}$ and $\log\lambda(A_{n,e^\psi})$ is a convex function of $\psi$. 
Clearly $\capac(S){>}{-}\infty$ iff. $S_{m{\times}n}{\neq}\emptyset$, for all positive integers $m,n$. 
We thus obtain the following theorem.
\begin{theorem}
Let $S$ be a constraint such that for all positive integers $m,n$, $S_{m{\times}n}{\neq}\emptyset$ then
Problem~\textup{(\ref{eq:phi-optimization-DC})} is a DC optimization problem.
\end{theorem}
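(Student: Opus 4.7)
The plan is to verify that the objective function in~(\ref{eq:phi-optimization-DC}) is DC and that the feasible set fits the framework, using exactly the ingredients already assembled in the paragraph preceding the theorem. The essential step is to show that for each positive integer $n$, the function $\psi \mapsto \log\lambda(A_{n,e^\psi})$ is a finite-valued convex function on $\RR^{|\D_{\mu+\alpha}|}$; once this is in hand the rest is formal.

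First I would observe that each entry of $A_{n,e^\psi}$ has the explicit form
$\sum_{k}e^{\psi(\bldw_{k,i,j})+\psi(\bldu_{k,i,j})}$
recorded in the excerpt, i.e., a finite nonnegative sum of exponentials of affine functions of $\psi$. Each summand is log-convex in $\psi$, and by~\cite{Kingman} log-convex functions are closed under addition, so every entry of $A_{n,e^\psi}$ is log-convex in $\psi$. Applying the theorem of~\cite{Kingman} on Perron eigenvalues of matrices with log-convex entries, $\psi \mapsto \lambda(A_{n,e^\psi})$ is itself log-convex.

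Next, to rule out the degenerate $\lambda\equiv 0$ alternative in the log-convexity dichotomy stated in the excerpt, I would invoke the paper's observation that at $\psi\equiv\bldzero$ the matrix $A_{n,\bldone}$ is the adjacency matrix of the deterministic presentation $\I_n$, and that $(1/\alpha)\log\lambda(A_{n,\bldone})=\capac(\V_n(S))\geq\capac(S)$. The hypothesis $S_{m{\times}n}\neq\emptyset$ for all $m,n$ implies $\capac(S)\geq 0>-\infty$, so $\lambda(A_{n,\bldone})>0$; by the dichotomy, $\lambda(A_{n,e^\psi})>0$ for every $\psi\in\RR^{|\D_{\mu+\alpha}|}$. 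Taking logs, $\log\lambda(A_{n,e^\psi})$ is finite everywhere and, as the log of a strictly positive log-convex function, is convex in $\psi$.

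From this the DC structure of~(\ref{eq:phi-optimization-DC}) follows at once: the objective is $(1/(p\alpha))$ times the difference of the two convex functions $\log\lambda(A_{2q+p,e^\psi})$ and $\log\lambda(A_{2q,e^\psi})$, and scaling by the positive constant $1/(p\alpha)$ preserves the DC property. The only constraint is the affine equation $\psi(\bldv_0)=0$, whose solution set is a closed convex affine subspace $X\subseteq\RR^{|\D_{\mu+\alpha}|}$; equivalently, it can be encoded as the two linear (hence DC) inequalities $\psi(\bldv_0)\leq 0$ and $-\psi(\bldv_0)\leq 0$. Either formulation places the problem in the DC optimization format stated earlier. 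I expect the only delicate step to be the strict positivity argument—careful invocation of the log-convexity dichotomy together with the nonemptiness hypothesis—while everything else is direct bookkeeping.
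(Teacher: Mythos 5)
Your proof is correct and follows essentially the same route as the paper: express each entry of $A_{n,e^\psi}$ as a nonnegative sum of exponentials of affine forms, invoke Kingman's results to conclude that $\psi\mapsto\lambda(A_{n,e^\psi})$ is log-convex, rule out the $\lambda\equiv 0$ branch via $(1/\alpha)\log\lambda(A_{n,\bldone})=\capac(\V_n(S))\geq\capac(S)>-\infty$ under the nonemptiness hypothesis, and conclude that the objective is a scaled difference of convex functions over an affine feasible set. The only (minor) addition beyond the paper's argument is your explicit remark that the equality constraint can be split into two linear inequalities to match the DC-programming template verbatim, which is harmless bookkeeping.
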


\section{Numerical results for selected constraints.}
\label{sec:numerical-results}
In this section we give numerical lower bounds on the capacity of some
$2$-dimensional constraints obtained using
the method presented in the sections above. The constraints considered
are $\NAK$, $\RWIM$, $\EVEN\osqr$, and $\CHARGE(3)\osqr$. Table~\ref{tbl:best-bounds}
summarizes the best lower bounds obtained using
 our method.
For comparison, we provide the best lower bounds that we could obtain using other methods.
We also give upper bounds on the capacity of these constraints obtained using the
method of~\cite{Calkin-Wilf}. Table \ref{tbl:results-max-ent} shows the lower bounds obtained
using our max-entropic probability heuristic for choosing $\phi$,
described in Section~\ref{sec:heuristics-max-ent}. Table \ref{tbl:results-gen-opt}
shows the lower bounds obtained with our method by trying to solve the optimization problem
described in Section~\ref{sec:heuristics-gen-opt}. In this, we did not make use of the DC
property of the optimization problem; instead, we used a generic sub-optimal optimization algorithm
whose results are not guarenteed to be global solutions. Utilizing special algorithms for solving
DC optimization problems may give better lower bounds. The rightmost column of each of these tables
shows the lower bound calculated for the same values of $p$ and $q$ using the method of
\cite{Calkin-Wilf}. The largest lower-bound obtained for each constraint is marked with a `$^\star$'. 
In the next subsections we give remarks specific to some of these constraints.

The numerical results were computed using the eigenvalue routines in Matlab and rounded (down for lower bounds and up for upper-bounds) to $10$ decimal places. Given accuracy problems with possibly defective matrices, we verified the results using the technique described in~\cite[Section IV]{Nagy-Zeger}. 

\subsection{The constraint $\RWIM$}
Observe that this constraint has both symmetric horizontal and symmetric vertical vertex-constrained strips. Thus, we can apply our method in the vertical as well as the horizontal direction to get lower bounds. Clearly, $\capac(\tr{\RWIM})=\capac(\RWIM)$, so we can obtain additional lower bounds on $\capac(\RWIM)$ by using our method to get lower bounds on $\capac(\tr{\RWIM})$. Some of these bounds are given in Tables~\ref{tbl:results-max-ent} and \ref{tbl:results-gen-opt}.

\subsection{The constraint $\EVEN\osqr$}
We used the reduction described in Section~\ref{sec:sofic-prod} with $\G_{\EVEN}$ being the presentation of $\EVEN$ given in Figure~\ref{fig:even-presentation}, to get lower bounds on the capacity of $\EVEN\osqr$. Table~\ref{tbl:results-gen-opt} gives the results obtained with our method using the optimization described in Section~\ref{sec:heuristics-gen-opt}. We also used the method with the max-entropic probability heuristic of Section~\ref{sec:heuristics-max-ent} and the results are given in Table~\ref{tbl:results-max-ent}.

\subsection{The constraint $\CHARGE(b)\osqr$}
For this constraint, the case $b{=}1$ is degenerate. Indeed, there are exactly two $m{\times}n$ arrays in $\CHARGE(1)\osqr$ for all positive integers $m$ and $n$, and consequently, $\capac(\CHARGE(1)\osqr){=}0$. For $b{=}2$, we show in Theorem~\ref{thm:capac-chg2} in Section~\ref{sec:exact-comp} that the capacity is exactly $1/4$.
For $b{=}3$, we used the reduction of Section~\ref{sec:sofic-prod} with $\G_{\CHARGE(3)}$ being the presentation of $\CHARGE(3)$ given in Figure~\ref{fig:chg-presentation}, to get lower bounds on the capacity of $\CHARGE(3)\osqr$. Table~\ref{tbl:results-gen-opt} gives the results obtained with our method using the optimization described in Section~\ref{sec:heuristics-gen-opt}. We also used the method with the max-entropic probability heuristic of Section~\ref{sec:heuristics-max-ent} and the results are given in Table~\ref{tbl:results-max-ent}.

\begin{table}[ht]
\caption{Best known lower bounds on capacities of certain constraints.}
\vspace{-10pt}
\begin{center}
\begin{tabular}{|l|l|c|c|}
\hline
Constraint & Prev. best lower bound & New lower bound & Upper bound \\
\hline
$\NAK$            & $0.4250636891^\star$  & $0.4250767745$   & $0.4250767997^\star$\\
$\RWIM$           & $0.5350150^\dagger$   & $0.5350151497$   & $0.5350428519^\star$\\
$\EVEN\osqr$      & $0.4385027973^\star$  & $0.4402086447$   & $0.4452873312^\star$\\
$\CHARGE(3)\osqr$ & $0.4210209862^\star$  & $0.4222689819$   & $0.5328488954^\star$\\
\hline
\end{tabular}
\end{center}
\label{tbl:best-bounds}
$^\star$Calculated using the method of~\cite{Calkin-Wilf}.\\
$^\dagger$Appears in~\cite{Yong-Golin}.
\end{table}

\begin{table}[ht]
\caption{Lower bounds using max-entropic probabilty heuristic (Section~\ref{sec:heuristics-max-ent}).}
\vspace{-10pt}
\begin{center}
$$
\begin{array}{|c|l|l|l|l|l|l|l|}
\hline
\mbox{Constraint} & \delta& \mu & \alpha & p & q & \mbox{Lower bound} & \mbox{Using \cite{Calkin-Wilf}}\\
\hline
\NAK    & 3     & 1     & 1      & 1 & 5 & 0.4250766244   & 0.4248771038 \\
        & 3     &   1   & 1      & 2 & 4 & 0.4250766446   & 0.4249055702 \\
        & 6     &   1   & 1      & 1 & 5 & 0.4250767227   & 0.4248771038 \\
        & 3     &   3   & 4      & 1 & 5 & 0.4250767590   & 0.4248771038 \\
        & 7     &   1   & 1      & 1 & 5 & 0.4250767617   & 0.4248771038 \\
        & 3     &   1   & 1      & 2 & 6 & 0.4250767647   & 0.4250636891 \\
        & 3     &   1   & 4      & 1 & 5 & 0.4250767733   & 0.4248771038 \\
        & 5     &   1   & 1      & 1 & 5 & 0.4250767744   & 0.4248771038 \\
        & 3     &   1   & 4      & 2 & 6 & 0.4250767745   & 0.4250636891 \\
        & 3     &   1   & 2      & 2 & 6 & 0.4250767745   & 0.4250636891 \\
\hline
\RWIM & 3 & 1 & 3 & 1 & 6 & 0.5350147968 & 0.5235145644\\
     & 1 & 1 & 1 & 3 & 6 & 0.5350148753 & 0.5318753627\\
     & 3 & 2 & 2 & 1 & 5 & 0.5350148814 & 0.5160533001\\
     & 3 & 1 & 2 & 2 & 6 & 0.5350149069 & 0.5337927416\\
     & 2 & 1 & 2 & 2 & 6 & 0.5350149071 & 0.5337927416\\
     & 0 & 1 & 2 & 2 & 6 & 0.5350149136 & 0.5337927416\\
     & 2 & 2 & 2 & 1 & 5 & 0.5350149271 & 0.5160533001\\
     & 1 & 1 & 2 & 2 & 6 & 0.5350149462 & 0.5337927416\\
     & 1 & 1 & 3 & 1 & 6 & 0.5350149525 & 0.5235145644\\
     & 1 & 1 & 1 & 1 & 7 & 0.5350149707 & 0.5280406048\\
\hline
\tr{\RWIM} & 4 & 1 & 3 & 2 & 4 & 0.5350145937 & 0.5350144722\\
       & 1 & 1 & 1 & 1 & 5 & 0.5350146612 & 0.5350149478\\
       & 4 & 2 & 1 & 1 & 4 & 0.5350147212 & 0.5350142142\\
       & 3 & 1 & 1 & 1 & 5 & 0.5350147328 & 0.5350149478\\
       & 5 & 1 & 1 & 1 & 5 & 0.5350147619 & 0.5350149478\\
       & 2 & 2 & 1 & 1 & 4 & 0.5350147969 & 0.5350142142\\
       & 4 & 1 & 1 & 1 & 5 & 0.5350148255 & 0.5350149478\\
       & 2 & 1 & 1 & 1 & 5 & 0.5350148449 & 0.5350149478\\
       & 0 & 1 & 1 & 1 & 5 & 0.5350148814 & 0.5350149478\\
       & 0 & 2 & 1 & 1 & 4 & 0.5350148980 & 0.5350142142\\
\hline
\EVEN\osqr & 3 & 2 & 1 & 1 & 3 & 0.4383238232 & 0.4347423815\\
       & 3 & 1 & 1 & 1 & 4 & 0.4383243738 & 0.4367818624\\
       & 3 & 1 & 3 & 2 & 3 & 0.4383632350 & 0.4356897662\\
       & 3 & 1 & 2 & 4 & 3 & 0.4383838005 & 0.4364303826\\
       & 3 & 1 & 1 & 2 & 4 & 0.4384647082 & 0.4371709990\\
       & 3 & 1 & 3 & 3 & 3 & 0.4384906740 & 0.4360537982\\
       & 3 & 1 & 2 & 1 & 4 & 0.4385448358 & 0.4367818624\\
       & 3 & 1 & 2 & 2 & 4 & 0.4386655840 & 0.4371709990\\
       & 3 & 1 & 3 & 1 & 4 & 0.4387455520 & 0.4367818624\\
\hline
\CHARGE(3)\osqr & 0 & 0 & 1 & 1 & 2 & 0.4188210386 & 0.4101473707\\
                & 0 & 0 & 1 & 1 & 4 & 0.4222689819^\star & 0.4197053158\\
\hline
\end{array}
$$
\end{center}
$^\star$Best lower bound.
\label{tbl:results-max-ent}
\end{table}

\begin{table}[ht]
\caption{Lower bounds using optimization (Section~\ref{sec:heuristics-gen-opt}).}
\begin{center}
$$
\begin{array}{|l|l|l|l|l|l|l|}
\hline
\mbox{Constraint}& \mu & \alpha & p & q & \mbox{Lower bound} & \mbox{Using \cite{Calkin-Wilf}}\\
\hline

\NAK  & 2 &	1 &	2 &	4 &	0.4250767692 &	0.4249055702 \\
      & 1 & 	2 & 	1 & 	5 & 	0.4250767736 &	0.4248771038 \\
      & 1 & 	1 &	3 &	4 &	0.4250767737 &	0.4248960814 \\
      & 1 &	2 &	1 &	3 &	0.4250767739 &	0.4224650194 \\
      & 1 &	1 &	4 &	4 &	0.4250767739 &	0.4249674993 \\
      & 1 &	1 &	5 &	4 &	0.4250767740 &	0.4249783192 \\
      & 1 &	1 &	6 &	4 &	0.4250767741 &	0.4249995626 \\
      & 1 &	2 &	3 &	3 &	0.4250767743 &	0.4244240822 \\
      & 1 &	2 &	6 &	3 &	0.4250767744 &	0.4247979797 \\
      & 1 &     2 &     2 &     5 &     0.4250767745^{\star} & 0.4250294285\\
\hline
\RWIM  & 1   & 1         & 1 & 3 & 0.5350147515 & 0.4832292495\\
       & 1   & 1         & 2 & 3 & 0.5350148675 & 0.5300373650\\
       & 1   & 1         & 3 & 3 & 0.5350149371 & 0.5212673183\\
       & 1   & 1         & 1 & 4 & 0.5350150805 & 0.5037272248\\
       & 1   & 1         & 2 & 4 & 0.5350151001 & 0.5318663054\\
       & 1   & 1         & 3 & 4 & 0.5350151123 & 0.5265953036\\
       & 1   & 1         & 1 & 5 & 0.5350151372 & 0.5160533001\\
       & 1   & 1         & 2 & 5 & 0.5350151410 & 0.5330440001\\
       & 1   & 1         & 2 & 6 & 0.5350151491 & 0.5337927416\\
       & 1   & 2         & 1 & 5 & 0.5350151497^{\star} & 0.5160533001\\
\hline
\tr{\RWIM} 
       & 1   &  2    & 4 & 3 &  0.5350151364 & 0.5350130576 \\
       & 1   &  2    & 3 & 4 &  0.5350151377 & 0.5350146307 \\
       & 1   &  2    & 5 & 3 &  0.5350151392 & 0.5350134356 \\
       & 2   &  1    & 1 & 4 &  0.5350151405 & 0.5350142142 \\
       & 1   &  1    & 1 & 5 &  0.5350151442 & 0.5350149478 \\
       & 1   &  2    & 1 & 4 &  0.5350151465 & 0.5350142142 \\
       & 1   &  2    & 1 & 5 &  0.5350151481 & 0.5350149478 \\
       & 1   &  2    & 2 & 4 &  0.5350151482 & 0.5350144722 \\
       & 1   &  3    & 1 & 4 &  0.5350151483 & 0.5350142142 \\
\hline
\EVEN\osqr & 1   &  1    &  1 & 3 & 0.4395381520 & 0.4347423815 \\
       & 1   &  1    &  2 & 3 & 0.4397347451 & 0.4356897662 \\
       & 1   &  1    &  1 & 4 & 0.4402086447^{\star} & 0.4367818624 \\
\hline
\CHARGE(3)\osqr & 0   & 1   &  1 & 2 &  0.4189237100  & 0.4101473707 \\
            & 0   & 1   &  2 & 2 &  0.4197037681  & 0.4182017399 \\
            & 0   & 1   &  3 & 2 &  0.4201450063  & 0.4176642274 \\
            & 0   & 1   &  1 & 3 &  0.4210954837  & 0.4165892023 \\
        & 0   & 1   &  2 & 3 &  0.4214748454  & 0.4210209862 \\
        &    &      &  1 & 4 &              & 0.4197053158 \\
\hline
\end{array}
$$
\end{center}
$^\star$Best lower bound.
\label{tbl:results-gen-opt}
\end{table}

\section{Exact Computation}
\label{sec:exact-comp}
While it seems difficult to compute the capacity exactly for constraints
such as $\EVEN^{{\otimes}\DD}$ and $\CHARGE(3)^{{\otimes}\DD}$, we can compute
the capacities of constraints in related families:
\begin{theorem}
\label{thm:capac-chg2}
For all positive integers $\DD$,
$$
\capac\left(\CHARGE(2)^{{\otimes}\DD}\right)=\frac{1}{2^{\DD}}.
$$
\end{theorem}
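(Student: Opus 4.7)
The plan is to prove matching lower and upper bounds of $2^{-\DD}$ on $\capac(\CHARGE(2)^{\otimes\DD})$.

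For the lower bound I would exhibit an explicit family of $\CHARGE(2)^{\otimes\DD}$ arrays. Assume each $m_d$ is even (the general case follows by passing to the cofinal sub-sequence of such $\bldm$). For every function $x : \prod_d [m_d/2] \to \{\pm 1\}$, set
$$
w_{i_1,\ldots,i_\DD} \;=\; (-1)^{i_1 + \cdots + i_\DD}\, x_{\lceil i_1/2\rceil,\,\ldots,\,\lceil i_\DD/2\rceil}.
$$
Any one-dimensional row of this array (in any direction) takes the form $(y_1, -y_1, y_2, -y_2, \ldots)$ up to a global sign, so its partial sums lie in $\{-1,0,+1\}$ and the row is in $\CHARGE(2)$. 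This gives $2^{\langle\bldm\rangle/2^\DD}$ arrays in $\CHARGE(2)^{\otimes\DD}_\bldm$, and~(\ref{eq:capacity}) yields $\capac(\CHARGE(2)^{\otimes\DD}) \ge 2^{-\DD}$.

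For the upper bound I would exploit the rigid structure of $\CHARGE(2)$. The shifted partial-sum sequence $s_0, s_1, \ldots, s_n$ of any $\CHARGE(2)$ word takes values in $\{0,1,2\}$ and alternates between the middle state $\{1\}$ and the outer states $\{0,2\}$; hence each $\CHARGE(2)$ word admits a \emph{phase} $p \in \{0,1\}$, under which consecutive symbols satisfy either $w_{2k} = -w_{2k-1}$ for all $k$ (if $p=0$) or $w_{2k+1} = -w_{2k}$ for all $k\ge 1$ (if $p=1$). For a $\CHARGE(2)^{\otimes\DD}$ array each of its $N := \sum_d \prod_{j \ne d} m_j$ one-dimensional rows carries such a phase, giving $2^N$ phase configurations overall (and $N = o(\langle\bldm\rangle)$ as $\bldm \to \infty$). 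Under the identification $\pm 1 \leftrightarrow \mathbb{F}_2$ the pair relations for a fixed configuration form an affine $\mathbb{F}_2$-system on the $\langle\bldm\rangle$ array positions. Viewing the pair relations as edges of a graph on those positions, every cycle contains an even number of edges in each direction $d$ (since the net $d$-displacement around a cycle must vanish), hence even total length; so no consistency obstruction arises and the number of solutions equals $2^C$, where $C$ is the number of connected components.

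The crux is the bound $C \le \langle\bldm\rangle/2^\DD + O(N)$. Every interior vertex $v$ (meaning $v_d \notin \{1, m_d\}$ for every $d$) has exactly one pair-edge in every direction $d$, and following a $d$-edge flips the parity of $v_d$ only. Hence within any connected component consisting entirely of interior vertices, one can reach all $2^\DD$ parity classes from any starting vertex by toggling coordinates one at a time; such a component contains at least $2^\DD$ vertices, yielding at most $\langle\bldm\rangle/2^\DD$ interior components. Any component containing a boundary vertex contributes to a count of at most $O(N)$, since the number of boundary vertices is $O(N)$. Combining,
$$
|\CHARGE(2)^{\otimes\DD}_\bldm| \;\le\; 2^N \cdot 2^{\langle\bldm\rangle/2^\DD + O(N)} \;=\; 2^{\langle\bldm\rangle/2^\DD + o(\langle\bldm\rangle)},
$$
which, on dividing by $\langle\bldm\rangle$ and letting $\bldm \to \infty$, gives $\capac(\CHARGE(2)^{\otimes\DD}) \le 2^{-\DD}$.

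The main obstacle is the combinatorial claim about the components of the pair graph. For $\DD = 2$ this is almost immediate because the pair graph has maximum degree 2 and so decomposes into paths and even cycles of length at least 4. For $\DD \ge 3$ the graph has maximum degree $\DD$ and connectivity can be intricate when phases vary spatially across rows in the same direction; the parity-reachability argument needs to be carried out with care, in particular verifying that at each interior vertex the existing $d$-edge can be followed to another interior vertex (and iterating) so that all $2^\DD$ parity classes are genuinely reached within the component.
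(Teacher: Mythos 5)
Your proof proposal follows exactly the same route as the paper: the lower bound tiles the array with $2\times\cdots\times2$ alternating-sign blocks carrying an arbitrary global sign, and the upper bound decomposes a $\CHARGE(2)^{\otimes\DD}$ array according to the ``phase'' of each $1$-dimensional row, then counts connected components of the resulting pair-relation graph. You correctly isolate the crux as the claim that every connected component containing an interior vertex has at least $2^\DD$ vertices, but you leave that claim open and, in fact, the worry you raise about it points in an unhelpful direction. You ask whether each $d$-edge from an interior vertex leads to \emph{another} interior vertex so that one can iterate; this is both unnecessary and not generally true. The paper closes the gap without any iteration among interior vertices: from an interior vertex $\bldu$, one performs at most one toggle in each direction, processed in the fixed order $d=1,2,\ldots,\DD$. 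Concretely, given a subset of directions (encoded by a word $\bldw\in\{0,1\}^\DD$), define $\pi^{(\bldw,0)}=\bldu$ and $\pi^{(\bldw,j)} = \pi^{(\bldw,j-1)}$ if $w_j=0$, $\pi^{(\bldw,j)} = T_{\bldr,j}\bigl(\pi^{(\bldw,j-1)}\bigr)$ if $w_j=1$, where $T_{\bldr,j}$ is the ``follow the $j$-edge'' step. The key observation is that when step $j$ is taken, coordinate $j$ of $\pi^{(\bldw,j-1)}$ is still equal to the original interior value $u_j\in\{1,\ldots,n-2\}$ (earlier steps alter only coordinates $1,\ldots,j-1$), so the move $u_j\mapsto u_j\pm1$ stays in $[n]$; all intermediate vertices lie in $[n]^\DD$, though they need not be interior. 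The $2^\DD$ endpoints $\pi^{(\bldw,\DD)}$ are distinguished by their coordinatewise parity relative to $\bldu$, hence pairwise distinct, and all lie in the same component as $\bldu$. This is precisely the argument you hint at, and it does go through regardless of how the phases vary across parallel rows, since the direction-$j$ toggle always changes coordinate $j$ by exactly $\pm1$ whatever the local phase. With this lemma in hand, your bound $C \le \langle\bldm\rangle/2^\DD + O(N)$ and the rest of your computation are correct and match the paper's.
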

\begin{theorem}
\label{thm:capac-odd}
For all positive integers $\DD$,
$$
\capac\left(\ODD^{{\otimes}\DD}\right)=\frac{1}{2}.
$$
\end{theorem}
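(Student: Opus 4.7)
The plan is to sandwich $\capac(\ODD^{{\otimes}\DD})$ between $1/2$ and $1/2$. For the upper bound, every array in $\ODD^{{\otimes}\DD}$ has each of its rows in direction~$1$ lying in the one-dimensional constraint $\ODD$, so $|\ODD^{{\otimes}\DD}_{\bldm}|\leq|\ODD_{m_1}|^{m_2\cdots m_\DD}$. Taking $\log$, dividing by $\langle\bldm\rangle$, and letting $\bldm{\to}\infty$ reduces this to $\capac(\ODD)$, which equals $\log\sqrt{2}=1/2$, since the presentation of $\ODD$ has adjacency matrix $\bigl(\begin{smallmatrix}0&1\\2&0\end{smallmatrix}\bigr)$ whose Perron eigenvalue is $\sqrt{2}$.

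For the matching lower bound, the plan is a ``checkerboard'' construction. Let $F_\bldm$ denote the set of all $\DD$-dimensional binary arrays $\Gamma$ of size $\bldm$ satisfying $\Gamma_{(i_1,\ldots,i_\DD)}=0$ whenever $i_1{+}\cdots{+}i_\DD$ is odd, with arbitrary values in $\{0,1\}$ at the remaining positions; clearly $|F_\bldm|=2^{\lceil\langle\bldm\rangle/2\rceil}$. The key claim will be $F_\bldm\subseteq\ODD^{{\otimes}\DD}_\bldm$: to verify it, fix a row in some direction $k$ by fixing the other coordinates $(i_j)_{j\neq k}$, and set $s=\sum_{j\neq k}i_j$. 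The entries of this row that are forced to $0$ by membership in $F_\bldm$ are exactly those at $i_k$ of parity opposite to~$s$; hence all $1$'s in the row sit at positions $i_k$ of the single fixed parity matching~$s$, so any two consecutive $1$'s are at positions differing by an even integer and therefore separated by an odd number of $0$'s. This is precisely the defining condition of $\ODD$, so every row of $\Gamma$ in direction $k$ satisfies $\ODD$. Taking $\log|F_\bldm|/\langle\bldm\rangle$ and letting $\bldm{\to}\infty$ yields $\capac(\ODD^{{\otimes}\DD})\geq 1/2$.

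I do not anticipate any significant obstacle: the upper bound is essentially formal once one computes the one-dimensional capacity of $\ODD$, and the lower bound reduces to a one-line parity check once the appropriate checkerboard family $F_\bldm$ is identified. The only minor bookkeeping is that the exact count is $2^{\lceil\langle\bldm\rangle/2\rceil}$ rather than $2^{\langle\bldm\rangle/2}$, but the discrepancy is a bounded additive term in the exponent and vanishes in the limit.
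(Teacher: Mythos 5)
Your proof is correct and matches the paper's argument in all essentials: the lower bound is the same checkerboard construction (forcing zeros at positions of one coordinate-sum parity, leaving the other half free), and the upper bound reduces to the one-dimensional capacity of $\ODD$ by slicing along one axis, as the paper does via the monotonicity of $\capac(\ODD^{\otimes d})$ in $d$. The only cosmetic difference is that you evaluate $\capac(\ODD)=1/2$ from the Perron eigenvalue of the presentation's adjacency matrix, whereas the paper counts $|\ODD_n|\le 2^{\lceil n/2\rceil}+2^{\lfloor n/2\rfloor}$ directly.
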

\begin{proof}[Proof of Theorem~\ref{thm:capac-chg2}]
Let $S=\CHARGE(2)^{{\otimes}\DD}$. We first show that $\capac(S){\geq}1/2^\DD$.
Let $\Gamma^{(0)},\Gamma^{(1)}$ be the $\DD$-dimensional
arrays of size $2{\times}2{\times}{\ldots}{\times}2$ with entries indexed by $\{0,1\}^\DD$ and given by
$$
\left(\Gamma^{(i)}\right)_{\bldj}=(-1)^{i+\bldj\cdot\bldone}\;\;\;;\;\;\;\bldj\in\left\{0,1\right\}^\DD,
$$
where as usual $\bldone$ denotes the $\DD$-dimensional vector with every entry equal to $1$. Observe that the sum of
every row of both of these arrays is zero.
Now, let $n$ be a positive integer. For any $\DD$-dimensional array of size $n{\times}n{\times}{\ldots}{\times}n$ with entries in $\left\{0,1\right\}$, it can be easily verified that replacing every entry equal to $0$ with $\Gamma^{(0)}$ and every entry equal to $1$ with $\Gamma^{(1)}$ results in a $\DD$-dimensional array of size $2n{\times}2n{\times}{\ldots}{\times}2n$ that satisfies $S$. It follows that  $|S_{2n{\times}2n{\times}{\ldots}{\times}2n}|{\geq}2^{n^{\DD}}$ for all positive integers $n$, which implies $\capac(S){\geq}1/2^\DD$.

We now show that $\capac(S){\leq}1/2^\DD$. For a positive integer $n{\geq}2$, denote by $\N_n^{(0)}$ the set of all even integers in $\left\{0,1,{\ldots},n{-}2\right\}$ and by $\N_n^{(1)}$ the set of all odd integers in $\left\{0,1,{\ldots},n{-}2\right\}$. We shall make use of the following lemma.
\begin{lemma}
\label{lem:chg2-dd=1}
Fix a positive integer $n{\geq}2$, and let $\left(a_i\right)_{i=0}^{n-1}{\subseteq}\left\{+1,-1\right\}$ be a sequence of length $n$. Then $a_0{\ldots}a_{n{-}1}{\in}\CHARGE(2)$ if and only if at least one of the following statements hold.
\begin{enumerate}
\item \label{itm:chg2-dd=1-E} For all $i{\in}\N_{n}^{(0)}$, $a_i{=}{-}a_{i{+}1}$.
\item \label{itm:chg2-dd=1-O} For all $i{\in}\N_{n}^{(1)}$, $a_i{=}{-}a_{i{+}1}$.
\end{enumerate}
\end{lemma}
\begin{proof}
We first show the ``if'' direction. Let $(a_i)_{i=0}^{n{-}1}\subseteq\left\{+1,-1\right\}$ be a sequence for which at least one of statements \ref{itm:chg2-dd=1-E},\ref{itm:chg2-dd=1-O} of the lemma holds. Then clearly, for any integers $0{\leq}i{\leq}j{<}n$, all the terms in the sum $\sum_{k=i}^{j}a_k$, with the possible exception of the first and last terms, cancel. Therefore $|\sum_{k=i}^{j}a_k|\leq|a_i|+|a_j|=2$ and $a_0{\ldots}a_{n{-}1}{\in}\CHARGE(2)$.

As for the ``only if'' direction, let $(a_i)_{i=0}^{n{-}1}{\subseteq}\left\{+1,-1\right\}$ such that $a_0{\ldots}a_{n{-}1}\in\CHARGE(2)$, and consider the presentation of the $\CHARGE$ constraint given in Figure~\ref{fig:chg-presentation} for $b=2$ (and vertices $\{0,1,2\}$). Let $(e_i)_{i=0}^{n{-}1}$ be a path
in this presentation generating $a_0{\ldots}a_{n{-}1}$. It's easily verified that if $\sigma(e_i)=1$, for some $i{\in}[n{-}1]$, then $a_j=-a_{j{+}1}$ for all integers $i{\leq}j{\leq}n{-}2$ such that $j{\equiv}i\mbox{ (mod $2$)}$. Evidently, either $\sigma(e_0)=1$ and so statement~\ref{itm:chg2-dd=1-E} holds, or $\sigma(e_1)=1$ impliying statement~\ref{itm:chg2-dd=1-O}.
\end{proof}

We now return to the claim that $\capac(S){\leq}1/2^\DD$.
Fix a positive integer $n{\geq}2$. For an integer $1{\leq}i{\leq}\DD$, let $\blde^{(i)}{\in}\{0,1\}^\DD$, be the vector, indexed by $\{1,2,{\ldots},\DD\}$, containing $1$ in its $i$th entry and $0$ everywhere else and let $\J_i{\subseteq}[n]^\DD$ denote the subset of all the vectors indexed by $\{1,2,{\ldots},\DD\}$ with a $0$ in the $i$th
entry. For a vector $\bldj{\in}\J_i$, the sequence $\left(\bldj+k\blde^{(i)}\right)_{k=0}^{n{-}1}$, is a sequence of indices of entries of a row in direction $i$ of a $\DD$-dimensional $n{\times}n{\times}{\ldots}{\times}n$ array, and we shall say that it is a sequence in direction $i$. Let $\bldR(n,\DD)$ be the set of all such sequences for all integers $1{\leq}i{\leq}\DD$ and vectors $\bldj{\in}\J_i$, and let $\bldr{\in}\{0,1\}^{|\bldR(n,\DD)|}$ be a binary vector indexed by $\bldR(n,\DD)$. For the purpose of this proof, let us refer to a sequence $\left(a_i\right)_{i=0}^{n{-}1}\subseteq\left\{+1,-1\right\}$ as a phase-$0$ sequence if statement~\ref{itm:chg2-dd=1-E} of Lemma~\ref{lem:chg2-dd=1} holds, and as a phase-$1$ sequence if statement~\ref{itm:chg2-dd=1-O} holds (note that a sequence may be both a phase $0$ and a phase $1$ sequence). Also, we denote by $\A(\bldr){\subseteq}\{+1,-1\}^{*^\DD}$, the set of all $\DD$-dimensional arrays $\Gamma$ of size $n{\times}n{\times}{\ldots}{\times}n$ for which the row $\Gamma_{\bar{\varrho}}$ is a phase-$\bldr_{\bar{\varrho}}$ sequence, for all $\bar{\varrho}=\left(\bldj+k\blde^{(i)}\right)_{k=0}^{n{-}1}{\in}\bldR(n,\DD)$. Then by Lemma~\ref{lem:chg2-dd=1}, we have
\begin{equation}
\label{eq:S_n^d-partitioning}
S_{n{\times}{\ldots}{\times}n}=\bigcup_{\bldr}\A(\bldr).
\end{equation}
We shall give an upper bound on the size of $\A(\bldr)$. For a vector $\bldv{\in}[n]^\DD$, denote by $\rho(i,\bldv)$ the unique sequence in direction $i$ in $\bldR(n,\DD)$ that has $\bldv$ as one of its elements. Let $T_{\bldr,i}:[n]^\DD{\rightarrow}\ZZ^\DD$ be given by:
$$
T_{\bldr,i}(\bldv)=\left\{
\begin{array}{ll}
\bldv+\blde^{(i)} & \mbox{if $v_i\equiv\bldr_{\rho(i,\bldv)}$ (mod $2$)}\\
\bldv-\blde^{(i)} & \mbox{otherwise}
\end{array}
\right.,\, \bldv{\in}[n]^\DD, \bldv=(v_1,{\ldots},v_\DD).
$$
Next, we define the undirected graph $G_\bldr=(V,E_\bldr)$ (without parallel edges), with vertices given by
$$
V=[n]^\DD,
$$
and edges given by
\newcommand{\uedge}{\,\rule[1.7pt]{10pt}{1.00pt}\,}
$$
E_\bldr=\left\{\bldu\uedge\bldv : \mbox{$\bldu,\bldv{\in}V$ and $\bldv=T_{\bldr,i}(\bldu)$ for some integer $1{\leq}i{\leq}\DD$}\right\},
$$
where $\bldu\uedge\bldv$ denotes the undirected edge connecting vertices $\bldu,\bldv$.
It's easy to verify that an array $\Gamma{\in}\{+1,-1\}^{*^\DD}$ of size $n{\times}n{\times}{\ldots}{\times}n$ is in $\A(\bldr)$ iff for every edge $\bldu\uedge\bldv{\in}E_\bldr$, it holds that $\Gamma_\bldu=-\Gamma_\bldv$.
Figure~\ref{fig:G_bldr_example} shows an example of $G_\bldr$ for $\DD=2$.

\begin{figure}[ht]
\begin{center}
\begin{tikzpicture}
\draw[style=help lines,step=0.5cm] (0,0) grid (3,3);
\tikzstyle{vertex}=[circle,fill=black!50!white!50,draw=black,text=black,minimum size=1mm,inner sep=0mm]
\foreach \y/\ytext in {0.25/5,0.75/4,1.25/3,1.75/2,2.25/1,2.75/0}
	\foreach \x/\xtext in {0.25/0,0.75/1,1.25/2,1.75/3,2.25/4,2.75/5}
		\node (v\ytext\xtext) at (\x,\y) [vertex] {};
\foreach \y/\r in {0/1,1/1,2/0,3/1,4/1,5/0} {
	\node at (-0.25, 2.75-\y*0.5) {$\mathit{\y}$};
	\node at (3.25, 2.75-\y*0.5) {$\mathbf{\r}$};
}
\foreach \x/\r in {0/1,1/1,2/1,3/0,4/1,5/0} {
	\node at (\x*0.5+0.25,3.25) {$\mathit{\x}$};
	\node at (\x*0.5+0.25,-0.25) {$\mathbf{\r}$};
}
\tikzstyle{edge}=[line width=0.3mm]
\foreach \y/\r in {0/1,1/1,2/0,3/1,4/1,5/0} {
	\ifnum\r=0 
	   \foreach \xs/\xe in {0/1,2/3,4/5}
	     \draw[style=edge] (v\y\xs) -- (v\y\xe);
	\else
	   \foreach \xs/\xe in {1/2,3/4}
	     \draw[style=edge] (v\y\xs) -- (v\y\xe);
	\fi
}
\foreach \x/\r in {0/1,1/1,2/1,3/0,4/1,5/0} {
	\ifnum\r=0 
	   \foreach \ys/\ye in {0/1,2/3,4/5}
	     \draw[style=edge] (v\ys\x) -- (v\ye\x);
	\else
	   \foreach \ys/\ye in {1/2,3/4} 
	     \draw[style=edge] (v\ys\x) -- (v\ye\x);
	\fi
}
\end{tikzpicture}
\end{center}
\caption{Example of the graph $G_\bldr$ for $\DD=2$, $n=6$. Each entry of $\bldr$ corresponding to a row (column) is written to the right of it (below it). The index of each row (column) is written to its left (above it).}
\label{fig:G_bldr_example}
\end{figure}
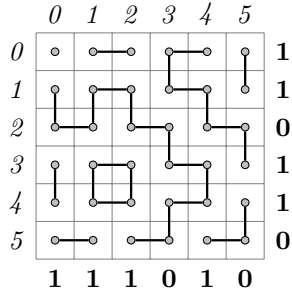

Let $C_1,{\ldots},C_\ell$ be the connected components of $G_\bldr$, and let $\bldv^{(1)},{\ldots},\bldv^{(\ell)}$ be arbitrary vertices such that $\bldv^{(i)}{\in}C_i$, for $i=1,2,{\ldots},\ell$.
It follows that for every vector $\bldb=(b_1,{\ldots},b_\ell){\in}\{+1,-1\}^\ell$, there exists at most one array $\Gamma{\in}\A(\bldr)$ satisfying $\Gamma_{\bldv^{(i)}}=b_i$ for all $i{\in}\{1,2,{\ldots},\ell\}$, and consequently, $|\A(\bldr)|{\leq}2^\ell$ (in fact, while this is not needed for the proof, such an array $\Gamma{\in}\A(\bldr)$ does exist, for any choice of $\bldb$, since each $C_i$ is bipartite; thus $|\A(\bldr)|=2^\ell$).

Now, let $\bldu=(u_1,{\ldots},u_\DD){\in}([n]\setminus\{0,n{-}1\})^\DD$ be a vertex in the ``interior'' of $G_\bldr$. We show that the connected component of $G_\bldr$ containing $\bldu$ has at least $2^\DD$ vertices. To this end, we match each word $\bldw=w_1w_2{\ldots}w_\DD{\in}\{0,1\}^\DD$, with a sequence of vertices  $\left(\pi^{(\bldw,j)}\right)_{j=0}^{\DD}{\subseteq}V$, defined recursively by
$$
\pi^{(\bldw,j)}=\left\{
\begin{array}{ll}
\bldu             & \mbox{if $j$=0}\\
\pi^{(\bldw,j{-}1)}      & \mbox{if $j>0$ and $w_j=0$}\\
T_{\bldr,j}(\pi^{(\bldw,j{-}1)}) & \mbox{if $j>0$ and $w_j=1$}\\
\end{array}
\right..
$$
It's easy to verify that since every $1{\leq}u_i{\leq}n{-}2$, the sequence is well-defined and indeed
$(\pi^{(\bldw,j)})_{j=0}^{\DD}\subseteq [n]^\DD$. Clearly, the sequence is contained entirely in the connected component containing $\bldu$, and so this component contains the vertex $\pi^{(\bldw,\DD)}$.
Write $\pi^{(\bldw,\DD)}{=}(\pi^{(\bldw,\DD)}_1,{\ldots},\pi^{(\bldw,\DD)}_\DD)$. Then for $i{=}1,2,{\ldots},\DD$, it holds that $\pi^{(\bldw,\DD)}_i{=}\bldu_i$ if $w_i{=}0$, and $\pi^{(\bldw,\DD)}_i{=}\bldu_i{\pm}1$ if $w_i{=}1$. Therefore, for two distinct words $\bldw,\bldw'{\in}\{0,1\}^\DD$, the vertices
$\pi^{(\bldw,\DD)}$ and $\pi^{(\bldw',\DD)}$ are distinct as well, and consequently there are $2^\DD$ such vertices.
Thus, the connected component of $G_\bldr$ containing $\bldu$ has at least $2^\DD$ vertices.

It follows that there are at most $n^{\DD}/2^{\DD}$ connected components of $G_\bldr$ containing a vertex in $\{1,2,{\ldots},n{-}2\}^\DD$. There are at most $n^{\DD}-(n{-}2)^\DD$ connected components not containing a vertex in $\{1,2,{\ldots},n{-}2\}^\DD$ and hence the total number of connected components, $\ell$, in $G_\bldr$ satisfies $\ell{\leq}n^{\DD}/2^{\DD}+n^{\DD}-(n{-}2)^\DD$. Hence,
$$
|\A(\bldr)|{\leq}2^{n^{\DD}/2^{\DD}+n^{\DD}-(n{-}2)^\DD}.
$$
Since there are $2^{{\DD}{n^{\DD{-}1}}}$ binary vectors $\bldr{\in}\{0,1\}^{|\bldR(n,\DD)|}$, we obtain from~(\ref{eq:S_n^d-partitioning})
\begin{eqnarray*}
|S_{n{\times}n{\times}{\ldots}{\times}n}|&{\leq}&
\sum_{\bldr}|\A(\bldr)|\\
&{\leq}& 2^{n^{\DD}/2^{\DD}+n^{\DD}-(n{-}2)^\DD+{\DD}{n^{\DD{-}1}}}\\
&=& 2^{n^{\DD}(1/2^{\DD}+1-(1-2/n)^{\DD})+{\DD}n^{\DD{-}1}},\\
\end{eqnarray*}
and the result follows from~(\ref{eq:capacity}).
\end{proof}

\begin{proof}[Proof of Theorem~\ref{thm:capac-odd}]
Let $S$ be the $\DD$-dimensional constraint $\ODD^{{\otimes}\DD}$.
We first show $\capac(S){\geq}{1/2}$. For an integer $n$, let $\B_n\subseteq[2n]^\DD$ be the set of all vectors in $[2n]^\DD$ whose entries sum to an even number, and let $\A_n$ be the set of all binary
$\DD$-dimensional arrays $\Gamma$ of size $2n{\times}2n{\times}{\ldots}{\times}2n$, with entries satisfying
$(\Gamma)_\bldj=0$ for all $\bldj{\in}\B_n$. Then the number of zeros between consecutive `$1$'s, in any row of
an array in $\A_n$ is odd since it must be of the form
$i{-}j{-}1$ for some integers $i,j$--either both odd, or both even. Thus, all such arrays satisfy the constraint $S$,
and since $|\A_n|=2^{(2n)^\DD{-}|\B_n|}=2^{(2n)^\DD/2}$, we have
$|S_{2n{\times}2n{\times}{\ldots}{\times}2n}|\geq2^{(2n)^\DD/2}$ for all positive integers $n$, which implies
$\capac(S){\geq}{1/2}$.

It remains to show that $\capac(S){\leq}{1/2}$. For a positive integer $d$, let $T^{(d)}=\ODD^{{\otimes}d}$. Since for any $d$ positive integers $m_1,{\ldots},m_d$,
$$
\left|(T^{(d)})_{m_1{\times}{\ldots}{\times}m_d}\right|=
 \left|(T^{(d{+}1)})_{m_1{\times}{\ldots}{\times}m_d{\times}1}\right|,
$$
it follows from~(\ref{eq:capac-inf}) that
$\capac(\ODD^{{\otimes}d})$ is non-increasing in $d$. Thus, it's
enough to show $\capac(S){\leq}{1/2}$ for $\DD=1$. Let $n$ be a positive integer. It can be easily verified that any $1$-dimensional array $\Gamma{\in}\ODD_n$ with entries indexed by $[n]$, satisfies either $\Gamma_j=0$ for all even integers $j{\in}[n]$, or
$\Gamma_j=0$ for all odd integers $j{\in}[n]$. It follows that $|\ODD_n|{\leq}2^{\lceil n{/}2\rceil}{+}2^{\lfloor n{/}2 \rfloor}$ which implies the desired inequality
\end{proof}

\end{document}